\documentclass[12pt]{article}
\pdfoutput=1

\usepackage{amssymb,amsmath,mathtools,geometry,float,graphicx,color,setspace,booktabs,algorithm,algpseudocode}

\usepackage[square,longnamesfirst]{natbib}

\usepackage{titlesec}

\definecolor{mycolor}{RGB}{128, 0, 32}
\usepackage[colorlinks=true,linkcolor=mycolor,citecolor=mycolor,urlcolor=mycolor]{hyperref}

\newtheorem{theorem}{Theorem}

\newtheorem{lemma}{Lemma}
\newtheorem{definition}{Definition}
\newenvironment{proof}[1][Proof]{\noindent\textbf{#1.} }{\ \rule{0.5em}{0.5em}}

\begin{document}

\begin{titlepage}
\title{Approximate Minimax Estimation of a Bounded Normal Mean via Stochastic Mirror Ascent\thanks{We would like to thank Tim Armstrong and Paul Delatte for a very helpful discussion that sparked our interest in this problem. We would also like to thank Jeffrey Negrea and Alexander Terenin for useful exchanges regarding different  computational aspects of the Bounded Normal Mean problem and several suggestions for future work. We received very helpful feedback from Andr{\'e}s Aradillas-Fernandez, Jos\'e Blanchet, Iain Johnstone, Chen Qiu, Francesca Molinari, Mikkel Plagborg-M{\o}ller, J{\"o}rg Stoye, Lezhi Tan, and Christian Wolf. Montiel Olea gratefully acknowledges financial support by the National
Science Foundation Grant SES-2315600. }}
\author{Jos\'e Luis Montiel Olea\thanks{Department of Economics, Cornell University. Email: {{\href{mailto:montiel.olea@gmail.com}
{montiel.olea@gmail.com}}}} \and Ekaterina Zubova\thanks{Department of Economics, Cornell University. Email: {{\href{mailto:ez268@cornell.edu}
{ez268@cornell.edu}}}}}
\date{\today}
\maketitle
\begin{abstract}
\noindent This paper presents a computational approach to find an approximately minimax estimator for the classical \emph{Bounded Normal Mean} problem. The suggested procedure is the Bayes estimator corresponding to an approximately least-favorable distribution  obtained from a \emph{stochastic mirror ascent} routine for concave maximization. The paper shows that both the approximately least-favorable distribution and the approximately minimax estimator are indeed close (in a sense we make precise) to their desired targets. Simulation evidence suggests that the approximately minimax estimator can yield---with a reasonable amount of compute---risk improvements from 6\% to almost 18\% relative to the minimax linear estimator (which is known to admit a maximal improvement of 20\%). The approximately minimax estimator is then applied to the problem of how to best aggregate the information contained in local projections and vector autoregressions to estimate an impulse response coefficient.    \\
\vspace{0in}\\
\noindent\textbf{Keywords:} Bounded Normal Mean, minimax estimation, stochastic mirror ascent, statistical decision theory, impulse response estimation.\\
\vspace{0in}\\
\noindent\textbf{JEL Codes:} C13, C18, C61.

\bigskip
\end{abstract}
\setcounter{page}{0}
\thispagestyle{empty}
\end{titlepage}
\pagebreak \newpage

\doublespacing

\section{Introduction} \label{sec:introduction}

We propose a numerical algorithm to find an approximately minimax estimator (under quadratic loss) of the mean of a standard normal distribution, when such mean is known to belong to the interval $[-m,m]$. \cite{casella_strawderman_1981} refer to this classical problem in statistical decision theory as the \emph{Bounded Normal Mean} estimation problem. As noted by \cite{johnstone1992minimax}, the minimax risk in this problem plays a central role in comparing the behavior of linear to nonlinear estimators in an important class of nonparametric estimation problems; see \citet{ibragimov1985nonparametric} and  \citet{donoho_liu_macgibbon_1990}. The minimax risk of the Bounded Normal Mean problem has also been used recently by \cite{armstrong_kline_sun_2025} as an input to construct estimators that \emph{adapt to misspecification}.

Our suggested algorithm finds an approximately minimax estimator by first obtaining an approximately \emph{least-favorable distribution}. This strategy is common in the literature; see  \cite{kempthorne1987numerical} and \cite{chamberlain2000econometric}. One point of departure relative to previous work is that we use a \emph{stochastic mirror ascent routine} (with negative entropy as a mirror map) to approximately solve the following concave optimization problem: we maximize Bayes risk over the space of probability distributions supported on a fixed (equally-spaced) grid of $I$ points over the interval $[-m,m]$. In what follows, we let $\Delta^{I-1}$ denote the space of such distributions, without making explicit reference to the support points. 

Before providing more specific details about our contribution, it is worth noting that our numerical strategy has two sources of approximation error, both of which are explicitly taken into account in our analysis. The first source of error is an obvious \emph{discretization error}, which comes from the fact that we optimize Bayes risk over the simplex $\Delta^{I-1}$ and not over \emph{all} probability distributions supported on $[-m,m]$. We show that we can control this discretization error thanks to a special (Lipschitz) continuity property of the risk function (see Lemma \ref{lemma:RiskFunction is Lips} and Lemma \ref{lemma:discretized vs original}). The second source of error is what we term  \emph{optimization error}, which comes from the fact that the mirror ascent routine for concave maximization is designed to find an \emph{approximate solution}, and not an \emph{exact} one. An explicit analysis of this type of error is common in theoretical computer science, operations research, and optimization where it is typical \emph{``to relax the requirement of finding an optimal solution, and instead settle for a solution that is good enough''}  \cite[p. 14]{shmoys2011design}. We can control this optimization error thanks to the rich literature that has analyzed the theoretical properties of mirror descent (or ascent). 

The methods of mirror descent \citep[Chapter 3]{nemirovski_yudin_1983} are a family of iterative procedures recommended in the optimization literature for approximately solving convex problems of high dimensions. These methods are iterative, first-order optimization algorithms, in that they typically require repeated evaluations of the objective function and an estimator of its (sub or super) gradient, but do not exploit any further smoothness information about the objective function.\footnote{The algorithm we recommend only requires repeated evaluations of an estimator of the supergradient of the objective function, but not of the objective function itself. See Algorithm \ref{alg:SMA}.} The optimization error associated with these algorithms comes from the fact that the procedure is designed to stop after finitely many iterations. Applications of the methods of mirror descent to other problems in statistics are given in \cite{juditsky2008} and \cite{fernandez2025approximate}.

There are at least four reasons that explain why mirror ascent was the algorithm we chose for finding an approximately least-favorable distribution. \emph{First}, even in the simple Bounded Normal Mean problem, it is difficult to get an explicit characterization of higher-order derivatives of the objective function defining the least-favorable distribution, which means that first-order methods (which only use information concerning the gradient and do not require further smoothness assumptions) are particularly attractive. \emph{Second}, there is also a sense in which our algorithm is essentially the best first-order iterative algorithm for approximately minimizing (maximizing) a convex (concave), Lipschitz function over the simplex (as in the problem defining the approximately least-favorable distribution); see Proposition 4.2 in \cite{ben2001ordered}.\footnote{It is also known that the rate of convergence of mirror descent (or ascent) improves over regular subgradient descent for convex minimization (maximization) problems in the probability simplex \citep[Section 4.3]{bubeck_2015}.} \emph{Third}, as our results show, the supergradient of the objective function at any distribution $\pi \in \Delta^{I-1}$ depends on the risk function $R(d_{\pi},\theta)$, where $d_\pi$ is the Bayes estimator associated with $\pi$. This means that even though evaluating the supergradient is potentially costly (as it requires numerical integration), we can rely on the well-known stochastic version of mirror ascent that replaces the supergradient by an unbiased estimator (and in our case, this unbiased estimator can be constructed using only \emph{one} Monte Carlo draw per grid point per iteration, further reducing the computational burden of the algorithm). \emph{Fourth}, since there is a rich theory regarding the convergence analysis of stochastic mirror descent (ascent) \citep[Chapter 6]{bubeck_2015}, we can provide explicit theoretical guarantees about the performance of our suggested algorithm.\footnote{While it is entirely possible that off-the-shelf methods implemented in numeric computing platforms (for example,  second-order optimization methods like sequential quadratic programming via \texttt{fmincon} in Matlab) perform as satisfactorily as our recommended algorithm, it is not clear to us how to provide theoretical guarantees without being more explicit about the tuning parameters used by the algorithm to generate approximate minimax solutions.} 

The first theoretical result in this paper (Theorem \ref{thm:avg_prior_pointwise_LF}) shows that our stochastic mirror ascent algorithm \emph{provably} finds an approximately least-favorable distribution for the Bounded Normal Mean problem after \emph{finitely} many iterations. The theoretical guarantees we provide are expressed in terms of a scalar parameter $\epsilon>0$. This parameter controls the overall approximation error of our numerical algorithm. Let $\lceil \cdot \rceil$ denote the ceiling function.\footnote{The ceiling function returns the smallest integer that is greater than or equal to a given number.} The theorem states that if the number of iterations, $J(\epsilon)$, and the number of points in the grid, $I(\epsilon)$, are chosen to be
\[ J(\epsilon) := \left \lceil \frac{2 M^2 \ln (I(\epsilon))}{(\epsilon/2)^2} \right \rceil, \quad I(\epsilon):= \left \lceil 1 + \frac{3m^3 + 4m^2}{\epsilon/2} \right \rceil, \quad \textrm{ where }M:= 4m^2, \]
then, for any $\alpha \in (0,1)$, with probability at least $1-\alpha$, the Bayes risk associated with the prior distribution $\pi^\epsilon$ found by the algorithm is close to the \emph{maximin} value of the Bounded Normal Mean problem. And, in particular, it differs at most by an additive factor  
\[ \epsilon \left(1 + \sqrt{\frac{\ln(1/\alpha)} {\ln(I(\epsilon))}} \right).\]
With the exception of $\epsilon$, our theorem makes explicit recommendations about all of the parameters needed to run the algorithm (such as the initial condition and the learning rate). It is worthwhile to make three remarks about our result. \emph{First}, the probabilistic statement that appears in the theorem comes from the fact that our theoretical analysis acknowledges the fact that the gradient of the objective function needs to be estimated by Monte Carlo integration. This estimation induces---just as any other stochastic approximation method---an additional source of variation that will make our approximately least-favorable distribution change for different \emph{runs} of the algorithm if the random seed is not fixed. \emph{Second}, the total number of iterations (which scales logarithmically in the number of grid points) should be viewed as \emph{sufficient} to guarantee a good performance of the algorithm, but it is by no means \emph{necessary}. In fact, our simulation results show that the approximation error is likely to be of size $\epsilon$ for a considerably smaller number of iterations (which is important, given that $J(\epsilon)$ is already in the order of millions for values of $m$ as small as $m=3$). \emph{Third}, the convergence guarantee we provide in Theorem \ref{thm:avg_prior_pointwise_LF} is expressed in terms of how well we can approximate the \emph{maximin} value of the Bounded Normal Mean problem, and not in terms of how close the approximate least-favorable distribution is relative to the true least-favorable distribution.\footnote{The least-favorable distribution is known to exist and be unique; see Proposition 4.19 in \citet{johnstone_2019}. See also \citet{delatte26}. }  

The second theoretical result of this paper (Theorem \ref{thm:convergence of pi_n}) complements Theorem \ref{thm:avg_prior_pointwise_LF} by showing that $\pi^{\epsilon}$ will indeed be close (in terms of the \emph{1-Wasserstein distance}) to the least-favorable distribution of the Bounded Normal Mean problem, provided $\epsilon$ is small enough.  To the best of our knowledge, none of the other existing papers in the literature that have suggested algorithms for the Bounded Normal Mean problem---such as \citet{feldman1989minimax}, \citet{donoho_liu_macgibbon_1990}, and \citet{gourdin1994global}---have derived results like ours. This is also true for other existing papers that have suggested algorithms for finding approximate solutions for more general minimax/maximin problems---such as \citet{kempthorne1987numerical}, \citet{johnstone1992minimax}, \citet{chamberlain2000econometric}, \citet{montielolea_aradillasfernandez_blanchet_qiu_stoye_tan_2024}, \citet{fernandez2025approximate}, \citet{elliott2015nearly}, and \citet{guggenberger2025numerical}.

The third theoretical result of this paper (Theorem \ref{thm:minimax sequence}) shows that there is a sense in which the Bayes estimator associated with the prior distribution $\pi^\epsilon$ found by the stochastic mirror ascent algorithm is approximately minimax. The notion of approximate minimaxity that we use in this paper extends the classical definition of \emph{minimax sequence} given in the seminal work of \cite{ghosh1964uniform}, p. 1037. Broadly speaking, \cite{ghosh1964uniform} defines a deterministic sequence of decision rules to be a \emph{minimax sequence} if their worst-case risk converges to the minimax risk of the problem of interest. Since, as we have explained before, our stochastic mirror ascent algorithm introduces noise beyond the statistical uncertainty contained in the estimation problem, we need to allow for the possibility of a stochastic sequence of decision rules that approximate the minimax value of interest. In light of this observation, we define a sequence of decision rules to be a \emph{stochastic minimax sequence} if their worst-case risk converges in probability to the minimax value of interest. This means that the \emph{tail} of a stochastic minimax sequence contains decision rules that are $\epsilon$-minimax---in the sense of \cite{ferguson_1967}, Chapter 1, Definition 4, p. 33---with probability at least $1-\delta$, for any $\epsilon,\delta>0$.  

The last theoretical result of the paper (Theorem \ref{thm:rule_convergence}) shows that the Bayes estimator associated with the prior distribution $\pi^\epsilon$ approximates the \emph{exact} minimax estimator of a bounded normal mean in a stronger sense: for any compact set $K$ of data realizations, there is a small enough $\epsilon$ (that could depend on the set) such that the  approximate and the exact minimax estimators are \emph{uniformly} close to each other over $K$, with probability approaching one. This means that our approximate minimax estimator converges---\emph{uniformly over compact sets}---to the minimax estimator of the Bounded Normal Mean problem.  

{\scshape Simulations:} It is known that in the Bounded Normal Mean problem the worst-case risk of the best minimax \emph{linear} estimator is at most 25\% above the best, unconstrained, minimax estimator. This is the so-called \cite{ibragimov1985nonparametric} constant, see Theorem 4.7 in \cite{johnstone_2019}. The minimax linear estimator thus provides a computationally feasible (and trivial to compute) benchmark for the Bounded Normal Mean problem. The value of the \cite{ibragimov1985nonparametric} constant implies that the largest reduction in risk that we could expect to see from our approximately minimax estimator (relative to the minimax linear estimator) is 20\%. Figures \ref{fig: risk_all_m_4} and \ref{fig: improvement_all} show that our stochastic mirror ascent routine generates risk improvements over the minimax linear estimators that range from about 6\% to 18\%, as $m$ ranges from 1 to 4 (with the largest improvement occurring at $m=1.6$). In our simulations we choose $\epsilon$ to be $\epsilon(m) := (1/5)m^2/(1+m^2)$. We provide a rationale for this choice in Section \ref{sec: simulation}.   

In terms of the computational effort required to achieve the 6\% to 18\% improvement reported above, we note that the largest reduction in risk that we see takes about 45 seconds on a personal computer. The computational time increases with $m$, and implementing our algorithm for $m=3$ takes about 2 hours (see Figure \ref{fig: time_SMD_new}). The main bottleneck we face in the implementation of our algorithm is the number of iterations recommended in Theorem \ref{thm:avg_prior_pointwise_LF}. We report simulations where we bound the \emph{wall-clock} time of our procedure to 5 and 15 minutes. With the 15-minute limit, we still see improvements  between 6\% and 18\%, but the improvement over linear estimators when $m=3$ drops from about 14\% without constraints, to almost 6\%. 

{\scshape Application:} A common problem in macroeconometrics is the estimation of the dynamic causal effects---or \emph{impulse response functions}---of aggregate shocks to macroeconomic policies or economic fundamentals. The two most common empirical methods currently used by applied macroeconomists for doing so are \emph{structural vector autoregressions} (henceforth interchangeably referred to as VARs or SVARs), going back to the seminal work of \cite{sims_1980}, and the so-called \emph{local projections} (LPs), introduced by \cite{jorda_2005}. It is now understood that there is a sense in which VARs and LPs share the same estimand \citep{plagborgmoller_wolf_2021,xu2026local}, but that in finite samples these procedures lie on opposite ends of a bias-variance trade-off \citep{li2024local}: LP estimators have lower bias than VAR estimators, but they also have substantially higher variance at intermediate and long horizons. 

The question of interest is how to best aggregate the information in LP and VAR estimators to estimate a particular impulse response coefficient. We follow closely the work of \cite{armstrong_kline_sun_2025} and consider the problem of constructing a minimax estimator based on quadratic loss, and a bivariate Gaussian model for the VAR and LP estimators.\footnote{In the bivariate Gaussian model, the LP estimator is unbiased but has potentially high variance, and the VAR estimator is biased but with smaller variance than the LP estimator; see Equation \eqref{eq:bivariate_normal_model}.} We impose an a priori bound on the bias of the VAR. As we explain in Section \ref{sec: LP-VAR}, both the bivariate normality and the bound on the bias can be motivated using the \emph{locally misspecified}
vector autoregression model used in \cite{montielolea_plagborgmoller_qian_wolf_forthcoming}. In particular, the bound on the bias can be obtained from restricting the fraction of the residual variance that is explained by the lagged shocks in the misspecified part of the VAR model; see \cite{montielolea_qian_wolf_plagborgmoller_2025}, p. 16. 

Under a convenient invariance assumption, the minimax invariant estimator is a \emph{debiased} version of the VAR estimator. The optimal debiasing function depends on the data only through the standardized difference between the VAR and LP estimators, which can be interpreted as a \cite{hausman1978specification} statistic for misspecification and which we denote as $\Delta$. As shown in \cite{armstrong_kline_sun_2025}, the optimal debiasing function turns out to correspond to the minimax estimator of a Bounded Normal Mean problem, where we use $\Delta$ to try to estimate the magnitude of the bias of the VAR estimator.  

We report our estimator for four influential macroeconomic papers, where we allow the fraction of the residual variance that is explained by the lagged shocks in the misspecified part of the VAR model to be at most 1\%. The four papers we consider are  \citet{gertler_karadi_2015}, focusing on the impulse response of log industrial production to a monetary policy shock; \citet{romer_romer_2010}, focusing on the impulse response of log of real GDP per capita to a tax shock; \citet{ramey_2011}, focusing on the impulse response of log of real GDP per capita to a government spending shock (based on military news); and \citet{francis_owyang_roush_dicecio_2014}, focusing on the impulse response of log of labor productivity to a technology shock. Even though the amount of misspecification that we allow is relatively small (which should favor the VAR estimator), in all applications the approximately minimax estimator is very close to the LP estimator. This is also consistent with the minimax linear estimator reported by \cite{montielolea_plagborgmoller_qian_wolf_forthcoming} (Corollary 4.2), which under our assumptions, places about 70\% of the weight on the LP estimator. In our applications, for medium and long horizons the \emph{shrinkage} towards the LP estimator is a bit more aggressive than the one implied by the linear estimator (and the gains in excess risk relative to an oracle that knows the bias are about 15\%). 

The remainder of the paper is organized as follows. Section \ref{sec: BNM} introduces the Bounded Normal Mean problem and its minimax formulation. Section \ref{sec: main_results} presents the algorithm and the main theoretical results. Section \ref{sec: simulation} reports simulation results. Section \ref{sec: LP-VAR} applies the method to estimation of impulse responses.

\section{The Bounded Normal Mean problem} \label{sec: BNM}

Fix $m > 0$ and consider the statistical model 
\[ Y \sim \mathcal N(\theta, 1), \quad \theta \in [-m,m].\]
The problem of interest is to estimate $\theta$ under the loss function: 
\[\mathcal{L} (a, \theta) =  (a- \theta)^2.\] 
\citet{casella_strawderman_1981} refer to this problem as the \emph{bounded normal mean estimation problem}. An excellent and detailed treatment of the problem is given in Chapter 4.6 of \cite{johnstone_2019}. 

A large literature in statistics has studied the theoretical properties of different estimators for the Bounded Normal Mean problem; see, for example, \cite{bickel1981minimax}, \cite{levit1981asymptotic}, \cite{donoho_liu_macgibbon_1990}, and the references therein. As we will explain below, an important difference of our paper relative to existing literature is that we make it our goal to suggest a concrete numerical strategy---based on the methods of \emph{stochastic mirror descent (ascent)}---to obtain an \emph{approximate} minimax estimator (in a sense we make precise). The main theoretical results in our paper will be focused on showing that our numerical strategy indeed achieves the desired objective. 

We note that we have deliberately set the variance of the outcome $Y$ to equal one. As we discuss later, this is without loss of generality as the Bounded Normal Mean problem treats the variance as known. See also the scale-invariance relation in Equation 4.35, Chapter 4.6 of \cite{johnstone_2019}. It is also without loss of generality to restrict the action space to be $[-m,m]$. The reason is that, under squared-error loss, truncating any real-valued estimator to $[-m,m]$ weakly reduces risk pointwise in $\theta$.\footnote{One benchmark is the projection of $Y$ onto $[-m,m]$,
$d_{\textrm{trunc}}(y) = \min \{\max \{y,-m\}, m\},$ which is measurable, valued in $[-m,m]$, and, by truncation, weakly dominates the unconstrained estimator $d(y) = y$ under squared loss as $(d_{\textrm{trunc}}(y) - \theta)^2 \le (y-\theta)^2$ for all $\theta \in \Theta$ pointwise.}

Let $\mathcal{D}$ be the class of all measurable decision rules $d: \mathbb{R} \to [-m,m]$. As usual, the \emph{risk} of a rule $d$ is
\[R (d, \theta) := \mathbb{E}_{\theta} [(d(Y)- \theta)^2],\] 
where the expectation is taken assuming $Y \sim \mathcal{N}(\theta,1)$. The first result of this paper establishes an important property of the risk function that will be exploited by our numerical strategy; namely, that  $R(d,\cdot)$ is \emph{Lipschitz continuous} in $\theta$ for any $d \in \mathcal{D}$. 
\begin{lemma} [The Risk Function is Lipschitz in $\theta$] \label{lemma:RiskFunction is Lips}
For any $d \in \mathcal{D}$ and any $\theta,\theta' \in [-m,m]$
\begin{equation} \label{eqn:RiskFunction is Lips} 
\left | R(d,\theta) - R(d,\theta') \right|   \leq 
\left(3m^2+ 4m \right) |\theta-\theta'|. 
\end{equation}
\end{lemma}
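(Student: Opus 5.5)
The plan is to bound the derivative of $\theta \mapsto R(d,\theta)$ uniformly over $[-m,m]$ and then apply the mean value theorem. Write
\[
R(d,\theta) = \int_{\mathbb{R}} (d(y)-\theta)^2 \,\phi(y-\theta)\,dy,
\]
where $\phi$ is the standard normal density. Since $d$ takes values in $[-m,m]$ and $\theta \in [-m,m]$, the integrand is bounded by $4m^2\phi(y-\theta)$. Its $\theta$-derivative,
\[
-2(d(y)-\theta)\,\phi(y-\theta) + (d(y)-\theta)^2 (y-\theta)\,\phi(y-\theta),
\]
is dominated, locally uniformly in $\theta$, by an integrable envelope of the form $C(1+|y|)\sup_{|t|\le m}\phi(y-t)$. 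This has Gaussian tails, so dominated convergence justifies differentiating under the integral sign. Measurability of $d$ is all we need here. No smoothness of $d$ is required, because all $\theta$-dependence sits in $\theta$ itself and in the Gaussian kernel.

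The key step is bounding the two resulting terms. Use $|d(y)-\theta| \le 2m$ throughout.

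For the first term,
\[
\left|\int 2(d(y)-\theta)\phi(y-\theta)\,dy\right| \le 4m.
\]

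For the second term,
\[
\left|\int (d(y)-\theta)^2 (y-\theta)\phi(y-\theta)\,dy\right| \le 4m^2\,\mathbb{E}|Z| = 4m^2\sqrt{2/\pi},
\]
with $Z \sim \mathcal{N}(0,1)$. Since $\sqrt{2/\pi} \approx 0.798$, this is at most $3.2m^2$. That is slightly worse than the stated $3m^2$, so the crude bound needs sharpening.

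To get $3m^2$, I would exploit the fact that $\int (y-\theta)\phi(y-\theta)\,dy = 0$. This lets me subtract any constant $c$ from $(d(y)-\theta)^2$. The quantity $(d(y)-\theta)^2$ ranges over $[0,(m+|\theta|)^2] \subseteq [0,4m^2]$. Choosing $c$ as the midpoint $2m^2$ gives
\[
\left|(d(y)-\theta)^2 - 2m^2\right| \le 2m^2,
\]
so the second term is at most $2m^2\,\mathbb{E}|Z| \le 2m^2 \le 3m^2$. Combining the two terms, $|\partial_\theta R(d,\theta)| \le 3m^2 + 4m$ on $[-m,m]$. The mean value theorem then yields \eqref{eqn:RiskFunction is Lips}.

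The main obstacle is only the constant in the $m^2$ term. If the centering trick needs care, an alternative is to split $(d-\theta)^2 = d^2 - 2d\theta + \theta^2$. The pure $\theta^2$ piece contributes zero, because $\mathbb{E}[Z]=0$. The other pieces contribute at most $m^2\,\mathbb{E}|Z| + 2m^2\,\mathbb{E}|Z| \le 3m^2$, which again gives the stated bound. The justification of differentiation under the integral is routine.
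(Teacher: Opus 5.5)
Your argument is correct, but it takes a different route from the paper. The paper never differentiates. It expands $R(d,\theta)=\int d^2\phi(y;\theta)\,dy-2\theta\int d\,\phi(y;\theta)\,dy+\theta^2$ and splits $R(d,\theta)-R(d,\theta')$ into two groups. The first group is the kernel-difference terms $\int d^2(\phi(y;\theta)-\phi(y;\theta'))\,dy$ and $-2\theta\int d\,(\phi(y;\theta)-\phi(y;\theta'))\,dy$, bounded jointly by $3m^2\int|\phi(y;\theta)-\phi(y;\theta')|\,dy$. The second group is $2(\theta'-\theta)\int d\,\phi(y;\theta')\,dy+(\theta-\theta')(\theta+\theta')$, bounded by $4m|\theta-\theta'|$. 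It then cites a Gaussian total-variation inequality (Devroye, Mehrabian and Reddad) to get $\int|\phi(y;\theta)-\phi(y;\theta')|\,dy\le|\theta-\theta'|$.

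Your fallback decomposition $d^2-2d\theta+\theta^2$ is exactly the infinitesimal version of this. Your first derivative term $-2\int(d(y)-\theta)\phi(y-\theta)\,dy$ is the paper's $4m$ piece. The elementary identity $\int|\partial_\theta\phi(y-\theta)|\,dy=\mathbb{E}|Z|=\sqrt{2/\pi}$ replaces the cited TV bound. That is why you obtain $3\sqrt{2/\pi}\,m^2+4m$, the sharper constant the paper only mentions in a footnote via Barsov and Ulyanov.

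The paper's route needs no regularity argument, but it leans on an external lemma and gives a looser constant. Yours is self-contained and sharper, at the price of justifying differentiation under the integral, which you do adequately. For the mean value theorem you only need differentiability on the open interval between $\theta$ and $\theta'$, which lies inside $(-m,m)$, together with continuity on $[-m,m]$.

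Your centering trick uses $\mathbb{E}[Z]=0$ to treat $(d(y)-\theta)^2\in[0,4m^2]$ as a single bounded function, and it sharpens the constant further to $2\sqrt{2/\pi}\,m^2+4m$. The same trick also transfers to the finite-difference setting. Write the difference as $\int(d-\theta)^2(\phi(y;\theta)-\phi(y;\theta'))\,dy+\int[(d-\theta)^2-(d-\theta')^2]\phi(y;\theta')\,dy$ and use $\int(\phi(y;\theta)-\phi(y;\theta'))\,dy=0$. This would improve the paper's constant to $2m^2+4m$.
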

\begin{proof}
See Appendix \ref{subsec:Proof of Lemma 1}.
\end{proof}

This first lemma shows that the Lipschitz constant is independent of $d$, but dependent on the bound $m$. To the best of our knowledge this result has not appeared previously in the literature. The key argument we use to establish this result is an upper bound on the total variation distance between two normal distributions with unit variance; see Theorem 1.3 in \cite{devroye2018total}.\footnote{For normal distributions with common covariance, Theorem 1 in \cite{barsov1987proximity} implies the sharper bound: \[ \frac{1}{2} \int_{-\infty}^{\infty} \left| \phi(y;\theta) - \phi(y;\theta') \right| dy \le \frac{ |\theta-\theta'| }{\sqrt{2\pi}}.\] Using this bound in Appendix \ref{subsec:Proof of Lemma 1} would give the sharper Lipschitz constant equal to $3 \sqrt{\frac{2}{\pi}} m^2 + 4 m$.}

\subsection{Approximate Minimax Estimation in the Bounded Normal Mean problem} 

Given $m>0$, we say that an estimator $d^*$ is minimax---or \emph{$m$-minimax},  adopting the terminology of \cite{armstrong_kline_sun_2025}---if it satisfies
\begin{equation} \label{eqn:minimax}
\sup_{\theta \in [-m,m]} R(d^*,\theta) =  \bar{v}^*(m) := \adjustlimits \inf_{d \in \mathcal{D}} \sup_{\theta \in [-m,m]} R(d,\theta). 
\end{equation} 
We refer to $\bar{v}^*(m)$ as the \emph{minimax value} of the bounded normal mean estimation problem associated with the bound $m>0$. 

It is known that when the bound $m$ is sufficiently small, there is a closed-form solution for the problem in \eqref{eqn:minimax}; see Theorems 3.1 and 4.2 in \citet{casella_strawderman_1981}. More generally, the problem of finding a minimax estimator can be viewed as an infinite dimensional convex programming problem; see pp. 1419 and 1420 in \cite{donoho_liu_macgibbon_1990}. This means that finding a minimax estimator must necessarily involve some type of numerical approximation.  

To be explicit about the fact that any given numerical strategy will only approximately solve the minimax problem in \eqref{eqn:minimax}, we build on the seminal work of \cite{ghosh1964uniform} and define a \emph{stochastic minimax sequence} as follows. Let $\{P_n\}_{n=1}^{\infty}$ be a sequence of probability distributions over the space $\mathcal{D}$ and let $\{d^*_{n}\}_{n=1}^{\infty}$ be a stochastic sequence of decision rules in which each decision rule $d^*_n$ is drawn from $P_n$.

\begin{definition}[Stochastic Minimax Sequence] \label{def:stochastic_minimax_sequence} A stochastic sequence of decision rules $\{d^*_{n}\}_{n=1}^{\infty}$ is said to be a stochastic minimax sequence for the Bounded Normal Mean problem if for any $\varepsilon>0$
\begin{equation} \label{eq:stochastic_minimax_sequence}  
P_n \left( \sup_{\theta \in [-m,m]} R(d^*_n,\theta) \leq \bar{v}^*(m) + \varepsilon \right) \rightarrow 1,
\end{equation}
as $n \rightarrow \infty$. 
\end{definition} 
\noindent Definition \ref{def:stochastic_minimax_sequence} extends the classical definition of \emph{minimax sequence} given in \cite{ghosh1964uniform}, p. 1037. \cite{ghosh1964uniform} defines a deterministic sequence of decision rules $\{d_n\}_{n=1}^{\infty} \subseteq \mathcal{D}$ to be a minimax sequence if 
\begin{equation} \label{eq:Ghosh's definition}
\sup_{\theta \in [-m,m]} R(d_n,\theta) \rightarrow \bar{v}^*(m). 
\end{equation}
Since the numerical procedure that we will recommend (which is akin to stochastic gradient descent for convex minimization problems) introduces noise beyond the statistical uncertainty contained in the estimation problem, we need to allow for the possibility that $d_n$ is stochastic. To this end, Definition \ref{def:stochastic_minimax_sequence} replaces the $``\rightarrow''$ relation in \eqref{eq:Ghosh's definition} by convergence in probability, where convergence in probability is assessed relative to the distributions $P_n$.  

Definition \ref{def:stochastic_minimax_sequence} is also related to the notion of $\epsilon$-minimax decision rules, which is a common way of formalizing the notion of an approximate minimax solution in statistical decision problems; see \cite{ferguson_1967}, Chapter 1, Definition
4, p. 33.\footnote{See also \cite{montielolea_aradillasfernandez_blanchet_qiu_stoye_tan_2024} for recent applications and an algorithmic approach for finding $\epsilon$-minimax decision rules in a class of statistical decision problems.} Note that if $\{d^*_n\}_{n=1}^{\infty}$ is a stochastic minimax sequence, then for every pair $\epsilon,\delta > 0$ there exists $n(\epsilon,\delta)$ such that whenever $n \geq n(\epsilon,\delta)$
\[ P_n \left( \bar{v}^*(m) \leq \sup_{\theta \in [-m,m]} R(d^*_n,\theta) \leq \bar{v}^*(m) + \epsilon \right) \geq 1-\delta. \]
This means that the \emph{tail} of a stochastic minimax sequence contains decision rules that are $\epsilon$-minimax with probability at least $1-\delta$.

\subsection{Approximate Least-Favorable Distribution for a Discretized Version of the Bounded Normal Mean problem}

Our strategy for constructing a stochastic minimax sequence for the problem in \eqref{eqn:minimax} consists of finding a sequence of priors that approximate the problem's \emph{least-favorable distribution}. We do this by discretizing the parameter space. The strategy of discretizing the parameter space to find an approximate least-favorable distribution has been recommended before in the econometrics literature by \cite{chamberlain2000econometric} and it has been used recently by \cite{armstrong_kline_sun_2025}. To the best of our knowledge, neither paper provides theoretical results that guarantee that such a strategy indeed succeeds in finding an approximate minimax rule (or even an approximate least-favorable distribution). The closest result that we are aware of is Lemma 1 of \cite{guggenberger2025numerical}, which provides a set of sufficient conditions that guarantee that the value of a discretized version of a minimax problem converges to the minimax value of the original problem.

For the sake of exposition, this section presents a quick overview of the typical relation between a minimax decision rule and a least-favorable distribution. We also present the discretized Bounded Normal Mean problem and the definition of an approximate least-favorable distribution for this problem.  

Let $\tau$ denote the standard (subspace) topology over $[-m,m]$. It is known that the topological space $([-m,m],\tau)$ is a Polish space.\footnote{This follows from the fact that $\tau$ is metrizable by the absolute value metric, and the space $([-m,m],|\cdot|)$ is a complete and separable metric space.} In order to define probability distributions over $[-m,m]$, we endow this space with the $\sigma$-algebra generated by $\tau$; i.e., its Borel $\sigma$-algebra. Let $\Delta([-m,m])$ denote the set of all (Borel) probability measures over $[-m,m]$.  

Let $\pi$ be an element of $\Delta([-m,m])$. The \emph{Bayes risk} of $d$ with respect to $\pi$ is defined as
\[r(d, \pi) := \mathbb{E}_{\pi}[\mathbb{E}_{\theta} (d(Y)- \theta)^2] = \int_{[-m,m]} R(d,\theta)  d\pi.\]
The rule $d_{\pi}$ is said to be a Bayes rule with respect to $\pi$ if it satisfies 
\begin{equation}\label{eqn:Bayes}
r(d_{\pi},\pi) = \inf_{d \in \mathcal{D}} r(d,\pi). 
\end{equation}
Define a \emph{least-favorable distribution} as a distribution $\pi^*$ that satisfies
\begin{equation}\label{equation:maximin}
\inf_{d\in \mathcal{D}} r(d,\pi^*) =  \underline{v}^* (m) := \sup_{\pi\in\Delta([-m,m])} \inf_{d \in \mathcal{D}} r(d,\pi).
\end{equation} 
In Appendix \ref{section: math_properties} we show that the maximin problem that defines the least-favorable distribution can be viewed as a concave maximization problem over the infinite-dimensional space $\Delta([-m,m])$. The objective function in this problem---which corresponds to the function $f(\pi) = \inf_{d \in \mathcal{D}} r(d,\pi)$---is Lipschitz continuous in $\pi$, with respect to the 1-Wasserstein metric over $\Delta([-m,m])$. 

It is known that  
\begin{equation} \label{eqn:minimax_theorem}
\bar{v}^*(m)=\underline{v}^*(m).
\end{equation}
See Proposition 4.19 in \cite{johnstone_2019} and also their Theorem A.5.\footnote{There are two additional interesting properties that we will reference later. First, the least-favorable distribution is unique. Second, the least-favorable distribution is supported on finitely many points (and all support points have the same risk). See, again, Proposition 4.19 in \cite{johnstone_2019}.} It is also known that if \eqref{eqn:minimax_theorem} holds, any minimax rule $d^*$ must be a Bayes rule with respect to the least-favorable distribution $\pi^*$.

\emph{Discretized Maximin Problem:} As mentioned before, our numerical strategy to construct a stochastic minimax sequence starts by discretizing the parameter space $\Theta := [-m, m]$ using an equally-spaced grid of $I>1$ points $\{\theta_1,\dots,\theta_{I}\}$ in the interval $[-m,m]$. This means that the $i$-th point in the grid equals:
\[ \theta_i: = -m +  \frac{2m(i-1)}{I-1}. \]
The gap between any consecutive points $\theta_{i+1}, \theta_i$ is 
\begin{equation} \label{eqn: gap Delta}
\mathbf{\Delta}(m,I) := \theta_{i+1} - \theta_i = \frac{2m}{I-1}.  
\end{equation}

Define the $(I-1)$-dimensional simplex
$\Delta^{I-1} := \left \{\pi \in [0,1]^I: \sum_i \pi_i = 1 \right\}$ and then consider the \emph{discretized} version of the maximin problem in \eqref{equation:maximin}
\begin{equation} \label{eq:discretized maximin problem}
\underline{v}(m;I) := \max_{\pi\in\Delta^{I-1}} \inf_{d \in \mathcal{D}} r(d,\pi), \quad \textrm{where } \quad r(d,\pi) := \sum_{i=1}^{I}\pi_i R(d,\theta_i).
\end{equation}

There is an evident gap between the value of the original maximin problem in \eqref{equation:maximin}---which we denoted as $\underline{v}^*(m)$---and the value of the discretized maximin problem in \eqref{eq:discretized maximin problem}---which we are denoting as $\underline{v}(m;I)$. Obviously, $\underline{v}(m;I) \leq \underline{v}^*(m)$, since the feasible set of the outer maximization in the discretized maximin problem is a subset of $\Delta([-m,m])$. The second result of this paper establishes a concrete quantitative relation between the maximin values of the Bounded Normal Mean problem and its discretized version.   

\begin{lemma} [Distance between the original and discretized maximin problems] \label{lemma:discretized vs original} 
For any $m>0$ and $I>1$ 
\[ \underline{v}(m;I) \leq \underline{v}^*(m)    
\leq \underline{v}(m;I)  + (3m^3+4m^2)/(I-1).\]
\end{lemma}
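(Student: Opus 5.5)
The lower bound $\underline{v}(m;I)\le \underline{v}^*(m)$ holds because every $\pi\in\Delta^{I-1}$, read as a discrete measure on the grid, belongs to $\Delta([-m,m])$, so the discretized supremum is over a smaller set. The upper bound will come from a rounding argument: push an arbitrary prior onto the grid and use Lemma \ref{lemma:RiskFunction is Lips} to control the resulting change in Bayes risk uniformly over $d\in\mathcal{D}$.

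Fix any $\pi\in\Delta([-m,m])$. Define the measurable map $T:[-m,m]\to\{\theta_1,\dots,\theta_I\}$ that sends $\theta$ to a grid point with $|T(\theta)-\theta|\le \mathbf{\Delta}(m,I)$. Rounding down to the left endpoint of the cell containing $\theta$ works, with ties broken by a fixed rule; this gives a distance of at most the gap, which is all the stated constant needs. (Nearest-point rounding would give half the gap, a stronger bound.) Let $\tilde\pi := \pi\circ T^{-1}$, viewed as a vector in $\Delta^{I-1}$. For any $d\in\mathcal{D}$, change of variables and Lemma \ref{lemma:RiskFunction is Lips} give
\[
r(d,\pi) = \int R(d,\theta)\,d\pi(\theta) \le \int R(d,T(\theta))\,d\pi(\theta) + (3m^2+4m)\,\mathbf{\Delta}(m,I) = r(d,\tilde\pi) + \frac{2m(3m^2+4m)}{I-1}.
\]
The key point is that the additive term does not depend on $d$, because the Lipschitz constant in Lemma \ref{lemma:RiskFunction is Lips} is uniform over $\mathcal{D}$.

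Taking $\inf_{d\in\mathcal{D}}$ on both sides yields $\inf_d r(d,\pi)\le \inf_d r(d,\tilde\pi)+\text{const}\le \underline{v}(m;I)+\text{const}$. Taking $\sup_\pi$ then gives the upper bound. The constant obtained this way is $2m(3m^2+4m)/(I-1) = (6m^3+8m^2)/(I-1)$, which is twice the stated $(3m^3+4m^2)/(I-1)$. So to match the statement exactly, I will instead use nearest-grid-point rounding, with $|T(\theta)-\theta|\le \mathbf{\Delta}(m,I)/2 = m/(I-1)$. This gives exactly $(3m^2+4m)\,m/(I-1) = (3m^3+4m^2)/(I-1)$.

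The only delicate points are minor. First, $T$ must be Borel measurable; it is piecewise constant on intervals, so this holds. Second, the discretized problem uses $\max$ rather than $\sup$, but only the inequality $\inf_d r(d,\tilde\pi)\le\underline{v}(m;I)$ is needed, and that holds by the definition of $\underline{v}(m;I)$ as the optimum over $\Delta^{I-1}$. The main care is therefore in getting the half-gap rounding so that the constant matches.
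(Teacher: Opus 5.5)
Your proof is correct, and it takes a genuinely different route from the paper's.

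\textbf{The paper's route.} The paper argues on the minimax side. It takes a minimax rule $d_I^*$ for the discretized problem. It then uses Lemma \ref{lemma:RiskFunction is Lips} with nearest-neighbor distance $\mathbf{\Delta}(m,I)/2$ to get $\sup_{\theta\in[-m,m]}R(d_I^*,\theta)\le \bar{v}(m,I)+(3m^3+4m^2)/(I-1)$, hence $\bar{v}^*(m)\le \bar{v}(m,I)+(3m^3+4m^2)/(I-1)$. Finally it converts both sides into maximin values by invoking the minimax theorem twice: $\bar{v}^*(m)=\underline{v}^*(m)$ (Johnstone) for the original problem, and $\bar{v}(m,I)=\underline{v}(m;I)$ (Blackwell--Girshick) for the discretized one.

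\textbf{Your route.} You stay entirely on the maximin side. You push an arbitrary prior forward under nearest-point rounding and use the fact that the Lipschitz constant is uniform in $d$. This gives $\inf_d r(d,\pi)\le \inf_d r(d,\pi\circ T^{-1})+(3m^3+4m^2)/(I-1)$ directly. In effect, this is the Wasserstein-Lipschitz property of $f$ (Lemma \ref{lemma:OptimizedBayesRisk is Lips}) applied to the coupling $(\theta,T(\theta))$, whose transport cost is at most $m/(I-1)$. Your version is slightly more self-contained than that lemma, because it never needs a Bayes rule to exist.

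\textbf{What each buys.} Your route needs no minimax theorem for either problem and no discretized minimax rule $d_I^*$. That fits well, since the lemma is purely a statement about maximin values. The paper's route yields an estimator-level by-product. Because $\bar{v}(m,I)\le\bar{v}^*(m)$ trivially, its first step shows that $d_I^*$ is itself $(3m^3+4m^2)/(I-1)$-minimax for the original problem.

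\textbf{One presentational fix.} Your remark that left-endpoint rounding ``is all the stated constant needs'' is wrong; as you then compute, it doubles the constant. Drop that version and present only the nearest-point map with $|T(\theta)-\theta|\le m/(I-1)$.
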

\begin{proof}
See Appendix \ref{subsec:Proof of Lemma 2}.
\end{proof}

The key argument to establish Lemma \ref{lemma:discretized vs original} is the Lipschitz continuity of the risk function (which was established in Lemma \ref{lemma:RiskFunction is Lips}) and the minimax theorem applied to both the discretized and the original problem. As expected, the gap between the discretized maximin problem and the original problem decreases as a function of $I$, and larger values of $m$ will require a finer grid to attain a good approximation.  We note that Lemma \ref{lemma:discretized vs original} readily yields a formula to select $I$ in order to guarantee that the gap between the two maximin problems is small. For example, to guarantee a gap of at most $\epsilon/2$ it suffices to set the number of points in the uniform grid equal to
\begin{equation} \label{eqn:I of epsilon}
I(\epsilon) : = \left \lceil 1 + \frac{3m^3 + 4m^2}{\epsilon/2} \right \rceil.
\end{equation}
The function $\lceil \cdot \rceil$ is the ceiling function: the function that returns the smallest integer that is greater than or equal to a given number. In light of Lemma \ref{lemma:discretized vs original} our numerical strategy will focus on obtaining an approximate solution to the discretized maximin problem. The following definition is entirely analogous to the notion of  $\epsilon$-minimax decision rules---see \cite{ferguson_1967}, Chapter 1, Definition 4, p. 33---but applied to the maximin problem.

\begin{definition}[$\epsilon$-least-favorable distribution for the discretized maximin problem] We say that $\pi^{(\epsilon,I)} \in \Delta^{I-1}$ is an $\epsilon$-least-favorable distribution for the discretized maximin problem in \eqref{eq:discretized maximin problem} if
\[  \underline{v}(m;I) - \epsilon \leq \inf_{d \in \mathcal{D}} r(d,\pi^{(\epsilon,I)}) \leq  \underline{v}(m;I).   \]
We say that a distribution $\pi^{I} \in \Delta^{I-1}$ is an approximately least-favorable distribution for the discretized maximin problem in \eqref{eq:discretized maximin problem}, if there exists $0<\epsilon<4m^2$ for which $\pi^I$ is an $\epsilon$-least-favorable distribution.
\end{definition}

Let $I(\epsilon)$ be defined as in \eqref{eqn:I of epsilon}. Note that an important implication of Lemma \ref{lemma:discretized vs original} is that if $\pi^{(\epsilon/2,I(\epsilon))}$ is an $(\epsilon/2)$-least-favorable distribution for the discretized problem with $I(\epsilon)$ equally-spaced points, then $\pi^{(\epsilon/2,I(\epsilon))}$ is an $\epsilon$-least-favorable distribution of the \emph{original} Bounded Normal Mean problem. Our strategy to construct a stochastic minimax sequence for the Bounded Normal Mean problem relies on this observation. More concretely, we will consider a sequence $\{(\epsilon_n,I_n)\}_{n=1}^{\infty}$, where $\epsilon_n \rightarrow0$ and $I_n \rightarrow \infty$. We will show that we can build an associated stochastic sequence of priors $\pi_n \equiv \pi^{\epsilon_n}$ that are guaranteed to be---with high probability---approximately the least-favorable distribution for the original problem. In the next section we will show how to obtain such a stochastic sequence of priors. Our main result will be that the sequence of Bayes decision rules corresponding to the sequence of priors $\{\pi_n\}_{n=1}^{\infty}$ is a stochastic minimax sequence.

Note also that without an upper bound on $\epsilon$, any distribution $\pi \in \Delta^{I-1}$ is approximately least-favorable. The risk function in the Bounded Normal Mean problem is bounded by $4m^2$, so we are only interested in considering approximation errors below this upper bound.

\section{Main Results} \label{sec: main_results}

This section presents our main results. First, we show that a \emph{stochastic mirror ascent} algorithm for concave maximization over the $(I(\epsilon)-1)$-dimensional simplex can be used to provably obtain---with high probability---an approximate least-favorable distribution for the maximin problem in \eqref{equation:maximin}. Second, we show that the Bayes rules associated with the sequence of approximately least-favorable priors obtained via stochastic mirror ascent constitute a stochastic minimax sequence.

\subsection{Stochastic Mirror Ascent Algorithm for the Discretized Maximin Problem} 
For any $\pi \in \Delta([-m,m])$, define the function $f: \Delta([-m,m]) \rightarrow \mathbb{R}_{+}$ given by
\begin{equation*}
f(\pi) := \inf_{d \in \mathcal{D}} r(d,\pi) = \inf_{d \in \mathcal{D}} \int_{[-m,m]} R(d,\theta) d\pi .
\end{equation*}
The first key observation here is that the function $f(\cdot)$ is concave in $\pi$ over $\Delta([-m,m])$, and also over $\Delta^{I-1}$ for any $I > 1$. This result has been established more generally before; see, for example,  \cite{chamberlain2000econometric}, Equation 5, p. 630, and the discussion therein. For the sake of completeness, we formalize this observation in Lemma \ref{lemma:OptimizedBayesRule is concave} in Appendix \ref{section: math_properties}. Lemma \ref{lemma:OptimizedBayesRule is concave} also shows that for any prior $\pi \in \Delta^{I-1}$, the vector $g(\pi):=(R(d_{\pi}, \theta_1), \ldots, R(d_{\pi},\theta_{I}))$ is a supergradient of $f(\cdot)$ at $\pi$ (where $d_{\pi}$ is the Bayes rule corresponding to $\pi$). Lastly, since $|R(d, \theta)| \leq 4m^2$ for all $(d,\theta) \in \mathcal{D} \times \Theta$, the function $f:\Delta^{I-1} \rightarrow \mathbb{R}$ is Lipschitz (with respect to the $L_1$-norm) with Lipschitz constant $M:=4m^2$.

A popular algorithmic approach for \emph{approximately} solving concave maximization problems---in particular, those of high dimensions---is to use the methods of (stochastic) mirror descent of  \cite{nemirovski_yudin_1983}. The methods of mirror descent are a family of iterative procedures designed for finding an  \emph{approximate solution} of convex minimization (or concave maximization) problems in high dimensions. These methods require repeated (but finitely many) evaluations of the subgradient of the objective function or a noisy but unbiased estimator of the subgradient. These methods exploit the geometry of the optimization domain, but they do not exploit any higher-order smoothness information about the optimization problem. Mirror descent is known to outperform gradient descent in some optimization domains, such as the probability simplex; see Section 4.3 of \cite{bubeck_2015}. See also Section 5.1.1 in \cite{nemirovski2009robust}.  

We consider the iterative algorithm described below for maximizing $f(\cdot)$ over the simplex $\Delta^{I-1}$. The stochastic component of this algorithm comes from the evaluation of the supergradient. The pseudo-code that describes Algorithm \ref{alg:SMA} below takes the grid $\{\theta_i\}_{i=1}^{I}$ as given. Another input is the \emph{step size} (denoted as $\eta$), which controls the adjustment of the prior at each epoch. We also take the number of epochs ($J$) as given. 

\begin{algorithm}[h]
\caption{Stochastic Mirror Ascent to Find Approximate Least-Favorable Prior}\label{alg:SMA}
\begin{algorithmic}[1]
\State \textbf{Input:} Step size $\eta > 0$; number of epochs $J \in \mathbb{N}$; equally-spaced grid $\{\theta_i\}_{i=1}^I \subset \Theta$.
\State Initialize $w_0 \in \mathbb{R}^I$ by setting $w_{i,0} = 1$ for all $i \in \{1,\ldots,I\}$.
\For{$j = 1, 2, \ldots, J$}
    \State Compute $\phi_j := \sum_{i=1}^{I} w_{i,j-1}$.
    \State For each $i \in \{1,\ldots,I\}$, compute
        \[\pi_{i,j} := \frac{w_{i,j-1}}{\phi_j}.\]
    \State Compute the Bayes decision rule $d_{\pi_j}$ under prior $\pi_j$:\footnote{Since $\pi_j \in \Delta^{I-1}$, the Bayes rule $d_{\pi_j}$ admits this closed-form posterior mean representation, which can be evaluated pointwise for any $y$.}
\[d_{\pi_j}(y) = \mathbb{E}_{\pi_j}[\theta \mid y] = \frac{\sum_{i=1}^{I} \theta_i \pi_{i,j} \phi(y;\theta_i)}{\sum_{i=1}^{I} \pi_{i,j} \phi(y;\theta_i)},\]
where $\phi(y;\theta_i)$ is the p.d.f.\ of $\mathcal{N}(\theta_i,1)$.
    \State For each $i \in \{1,\ldots,I\}$, draw an independent $y_{i,j} \sim \mathcal{N}(\theta_i,1)$ and compute the stochastic supergradient estimate
        \[\tilde{g}_{i,j} = \bigl(d_{\pi_j}(y_{i,j}) - \theta_i\bigr)^2.\]
    \State Multiplicative weights update: for each $i \in \{1,\ldots,I\}$,
\[w_{i,j} := w_{i, j-1} \cdot \exp\bigl(\eta \tilde{g}_{i,j} \bigr).\]
\EndFor
\State \textbf{Output:} $\frac{1}{J} \sum_{j=1}^{J}\pi_j = \left(\frac{1}{J}\sum_{j=1}^{J}\pi_{1,j}, \dots, \frac{1}{J}\sum_{j=1}^{J}\pi_{I,j} \right).$
\end{algorithmic}
\end{algorithm}

The next theorem shows that---with high probability---the stochastic mirror ascent routine outputs an approximate least-favorable distribution for the Bounded Normal Mean problem. 

\begin{theorem}[Approximately least-favorable distribution]
\label{thm:avg_prior_pointwise_LF}
Fix $\epsilon>0$ and $m>0$. Let $\pi^{\epsilon}$ be the output of Algorithm \ref{alg:SMA} with inputs
\begin{equation} \label{eq: tuning_params}
    \eta(\epsilon) := \frac{\epsilon/2}{M^2}, \qquad J(\epsilon) := \left \lceil \frac{2 M^2 \ln (I(\epsilon))}{(\epsilon/2)^2} \right \rceil, \quad I(\epsilon):= \left \lceil 1 + \frac{3m^3 + 4m^2}{\epsilon/2} \right \rceil,
\end{equation}
where $M:=4m^2$. Then, for any $\alpha \in (0,1)$, with probability at least $1-\alpha$
we have
\begin{equation}\label{eq:avg_prior_pointwise_LF}
\inf_{d\in\mathcal{D}} r(d,\pi^\epsilon) \ge \underline{v}^*(m)-\epsilon \left(1 + \sqrt{\frac{\ln(1/\alpha)} {\ln(I(\epsilon))}} \right).
\end{equation}
\end{theorem}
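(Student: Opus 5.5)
Our plan is to split the error into a discretization part, handled by Lemma \ref{lemma:discretized vs original}, and an optimization part, handled by a high-probability regret bound for stochastic mirror ascent with the entropic mirror map. The choice $I(\epsilon)$ in \eqref{eq: tuning_params} gives $(3m^3+4m^2)/(I(\epsilon)-1)\le \epsilon/2$. Lemma \ref{lemma:discretized vs original} then yields $\underline{v}(m;I(\epsilon)) \ge \underline{v}^*(m)-\epsilon/2$. It therefore suffices to show that, with probability at least $1-\alpha$,
\[
f(\pi^\epsilon) \ge \underline{v}(m;I(\epsilon)) - \frac{\epsilon}{2} - \epsilon\sqrt{\frac{\ln(1/\alpha)}{\ln I(\epsilon)}}.
\]

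For the optimization part, let $\pi^\star\in\Delta^{I-1}$ attain the discretized maximum. Since $f$ is concave (Lemma \ref{lemma:OptimizedBayesRule is concave}), Jensen gives $f(\pi^\epsilon)\ge \frac1J\sum_j f(\pi_j)$. The supergradient inequality with $g(\pi_j)=(R(d_{\pi_j},\theta_i))_i$ gives $f(\pi^\star)-f(\pi_j)\le \langle g(\pi_j),\pi^\star-\pi_j\rangle$. We then write $g(\pi_j)=\tilde g_j+(g(\pi_j)-\tilde g_j)$. Conditional on $\mathcal F_{j-1}$, the $\sigma$-field generated by draws up to epoch $j-1$ (so $\pi_j$ is $\mathcal F_{j-1}$-measurable), we have $\mathbb E[\tilde g_{i,j}\mid\mathcal F_{j-1}]=R(d_{\pi_j},\theta_i)$, because $y_{i,j}\sim\mathcal N(\theta_i,1)$ is fresh. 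Hence
\[
J\big(f(\pi^\star)-f(\pi^\epsilon)\big)\le \sum_j\langle\tilde g_j,\pi^\star-\pi_j\rangle+\sum_j \xi_j,\qquad \xi_j:=\langle g(\pi_j)-\tilde g_j,\pi^\star-\pi_j\rangle .
\]
Here $(\xi_j)$ is a martingale difference sequence.

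The first sum is the regret of the multiplicative-weights (Hedge) update with gains $\tilde g_{i,j}\in[0,M]$, since $d_{\pi_j}$ takes values in $[-m,m]$. The standard entropic mirror-ascent bound with uniform initialization gives
\[
\sum_j\langle\tilde g_j,\pi^\star-\pi_j\rangle\le \frac{\ln I}{\eta}+\frac{\eta}{2}\sum_j\sum_i\pi_{i,j}\tilde g_{i,j}^2\le \frac{\ln I}{\eta}+\frac{\eta J M^2}{2}.
\]
We use $\|\tilde g_j\|_\infty\le M$ via Hoeffding's lemma applied to the log-partition function. Plugging in $\eta=(\epsilon/2)/M^2$ and $J\ge 2M^2\ln I/(\epsilon/2)^2$, and dividing by $J$, gives
\[
\frac{\ln I}{\eta J}+\frac{\eta M^2}{2}\le \frac{\epsilon}{4}+\frac{\epsilon}{4}=\frac{\epsilon}{2}.
\]
This is the claimed constant.

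The main obstacle is the martingale term. Each $\xi_j$ satisfies $|\xi_j|\le \|g-\tilde g_j\|_\infty\|\pi^\star-\pi_j\|_1$. A crude bound is $M\cdot 2$. A sharper one comes from noting that $\langle g-\tilde g_j,\pi^\star-\pi_j\rangle$ is a difference of two quantities each lying in an interval of length $M$, so $\xi_j$ lies in an interval of length $2M$. Azuma–Hoeffding then gives, with probability at least $1-\alpha$,
\[
\frac1J\sum_j\xi_j\le 2M\sqrt{\frac{\ln(1/\alpha)}{2J}}.
\]
Using $J\ge 2M^2\ln I/(\epsilon/2)^2$, this is at most $2M\cdot(\epsilon/2)\sqrt{\ln(1/\alpha)}/(2M\sqrt{\ln I})=(\epsilon/2)\sqrt{\ln(1/\alpha)/\ln I}$, which is even better than required. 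If the range bookkeeping turns out less favorable, for instance a range of $4M$, the bound becomes $\epsilon\sqrt{\ln(1/\alpha)/\ln I}$, which still matches the statement, so there is slack. The part that needs care is the measurability setup: $\pi^\star$ is deterministic and $\pi_j$ is predictable, which makes $\xi_j$ a genuine bounded martingale difference. Combining the three pieces yields \eqref{eq:avg_prior_pointwise_LF}.
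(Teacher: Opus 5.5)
Your proposal is correct and follows the paper's proof essentially step for step: Lemma~\ref{lemma:discretized vs original} for the discretization half, the supergradient inequality plus Jensen, the split $g_j=\tilde g_j+(g_j-\tilde g_j)$, the entropic mirror-ascent regret bound $\frac{\ln I}{\eta J}+\frac{\eta M^2}{2}\le\frac{\epsilon}{2}$, and a Hoeffding/Azuma bound on the martingale term. The one difference is that you use $\tilde g_{i,j}\ge 0$, with $\mathcal F_{j-1}$-measurable interval endpoints, to get a conditional range of length $2M$ rather than the paper's $4M$ (which comes from $\|\tilde g_j-g_j\|_\infty\|\pi_j-\pi^\star\|_1\le 2M$); this is valid and halves the stochastic term to $\frac{\epsilon}{2}\sqrt{\ln(1/\alpha)/\ln I(\epsilon)}$, and as a minor point you should drop the intermediate term $\frac{\eta}{2}\sum_j\sum_i\pi_{i,j}\tilde g_{i,j}^2$, because that is the loss-form bound and does not follow per step for gains, while Hoeffding's lemma on $\ln\sum_i\pi_{i,j}e^{\eta\tilde g_{i,j}}$ already gives $\frac{\ln I}{\eta}+\frac{\eta JM^2}{8}$ directly.
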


\begin{proof}
See Appendix \ref{subsec:Proof of Theorem 1}. 
\end{proof}

\noindent Theorem \ref{thm:avg_prior_pointwise_LF} shows that our suggested stochastic mirror ascent algorithm---with the choice of tuning parameters in \eqref{eq: tuning_params}---succeeds in finding an approximate least-favorable distribution for the maximin problem \eqref{equation:maximin}. Since Algorithm \ref{alg:SMA} relies on an estimator of the supergradient (see Step 7), it is important to quantify this uncertainty in our theoretical guarantees. The theorem provides an explicit description of the effects of the stochastic gradient on the optimization procedure. In particular, the theorem says that with probability at least $1-\alpha$, the Bayes risk of $\pi^\epsilon$ is at most within the additive factor 
\[ \epsilon \left(1 + \sqrt{\frac{\ln(1/\alpha)} {\ln(I(\epsilon))}} \right)\]
of the maximin value of interest. Note that this term is increasing in $\epsilon$. However, a smaller value of $\epsilon$ increases the computational complexity of the procedure because it increases the dimension of the concave optimization problem via its effect on $I(\epsilon)$.

To complement Theorem \ref{thm:avg_prior_pointwise_LF} we also show that there is a sense in which $\pi^{\epsilon}$ is close to the (unique) least-favorable distribution of the original maximin problem, which we have denoted as $\pi^*$. To show this, take an arbitrary sequence $\{\epsilon_n\}_{n=1}^{\infty} \subseteq \mathbb{R}_{++}$ such that $\epsilon_n  \rightarrow 0$. For each $\epsilon_n$ in the sequence, we obtain the corresponding approximate least-favorable prior $\pi_n : = \pi^{\epsilon_n}$ as described in Theorem \ref{thm:avg_prior_pointwise_LF}. We show that $\pi_n$ converges (in 1-Wasserstein distance, as defined below) to the least-favorable distribution $\pi^*$. 

\noindent \emph{The 1-Wasserstein Distance.} For arbitrary Borel probability distributions $\pi, \pi' \in \Delta([-m,m])$ define $\Gamma(\pi,\pi')$ to be the set of all couplings of $(\pi,\pi')$. Define the 1-Wasserstein distance (with cost function $|\cdot|$) as
\[ W(\pi,\pi') := \inf_{\gamma \in \Gamma(\pi,\pi')} \int_{[-m,m]^2} |\theta-\theta'|d\gamma. \]
It is known that $W(\pi,\pi')$ defines a metric over all Borel probability measures on $\Delta([-m,m])$; see Theorem 7.3 in \cite{villani2021topics}. Moreover, it is common to refer to $W(\cdot,\cdot)$ as the \emph{Kantorovich-Rubinstein} distance. It is also known that the Kantorovich-Rubinstein duality formula implies that 
\begin{equation} \label{eqn:KR-duality} 
W(\pi,\pi') = \sup_{\varphi} \left( \int_{[-m,m]} \varphi d\pi - \int_{[-m,m]} \varphi d\pi' \right) \quad \textrm{s.t.} \quad  \| \varphi \|_{\textrm{Lip}} \leq 1, 
 \end{equation}
where  
\[ \| \varphi \|_{\textrm{Lip}} := \sup_{\theta \neq \theta'} \frac{|\varphi(\theta)-\varphi(\theta')|}{|\theta-\theta'|}. \]
See Remark 7.5, p. 207 in \cite{villani2021topics}.

\begin{theorem}[$\pi_n$ converges to $\pi^*$ in 1-Wasserstein Distance]  \label{thm:convergence of pi_n} Fix $m>0$. Take an arbitrary sequence $\{\epsilon_n\}_{n=1}^{\infty} \subseteq \mathbb{R}_{++}$ such that $\epsilon_n  \rightarrow 0$. Let $\pi_n:=\pi^{\epsilon_n}$ be the output of Algorithm \ref{alg:SMA} with the parameters specified in Theorem \ref{thm:avg_prior_pointwise_LF}. Let $\pi^*$ be the unique least-favorable prior of the Bounded Normal Mean problem. Then, for any $\xi>0$
\[ P_n \left( W(\pi_n,\pi^*) > \xi \right) \rightarrow 0, \textrm{ as } n \rightarrow \infty.  \]
\end{theorem}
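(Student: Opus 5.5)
The plan is a compactness argument combined with the uniqueness of the least-favorable prior and the continuity of the objective $f$ in the 1-Wasserstein metric.

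\emph{Step 1: the Bayes risk of $\pi_n$ converges in probability to the maximin value.} Fix $\delta>0$. Apply Theorem \ref{thm:avg_prior_pointwise_LF} with $\epsilon=\epsilon_n$ and a confidence level $\alpha_n\to 0$ chosen slowly. A natural choice is $\alpha_n = 1/I(\epsilon_n)$. Then $\ln(1/\alpha_n)/\ln I(\epsilon_n)=1$, so with probability at least $1-\alpha_n$,
\[ f(\pi_n)\ge \underline{v}^*(m)-2\epsilon_n . \]
Since $I(\epsilon_n)\to\infty$, we have $\alpha_n\to 0$. Also $f(\pi_n)\le \underline{v}^*(m)$ always, because $\pi_n$, viewed as a measure on the grid, lies in $\Delta([-m,m])$. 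Hence
\[ P_n\bigl(|f(\pi_n)-\underline{v}^*(m)|>\delta\bigr)\to 0 . \]

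\emph{Step 2: a deterministic well-separatedness statement.} I would show that for every $\xi>0$,
\[ \kappa(\xi):=\underline{v}^*(m)-\sup\{f(\pi):\pi\in\Delta([-m,m]),\ W(\pi,\pi^*)\ge\xi\}>0 . \]
The argument uses three facts.
\begin{itemize}
\item $\Delta([-m,m])$ is compact under $W$. This holds because $[-m,m]$ is compact, so by Prokhorov's theorem it is weakly compact, and $W$ metrizes weak convergence on a compact space (Villani).
\item The set $\{\pi: W(\pi,\pi^*)\ge\xi\}$ is closed, hence compact.
\item $f$ is continuous, indeed Lipschitz, in $W$, as stated in Appendix \ref{section: math_properties}.
\end{itemize}
If I want to avoid citing that appendix, continuity follows directly. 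Each $R(d,\cdot)$ is $(3m^2+4m)$-Lipschitz by Lemma \ref{lemma:RiskFunction is Lips}, so Kantorovich--Rubinstein duality \eqref{eqn:KR-duality} gives $|r(d,\pi)-r(d,\pi')|\le (3m^2+4m)W(\pi,\pi')$ uniformly in $d$. Taking infima over $d$ gives the same bound for $f$.

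Given these facts, the supremum defining $\kappa(\xi)$ is attained at some $\bar\pi$ with $W(\bar\pi,\pi^*)\ge\xi$. If $f(\bar\pi)=\underline{v}^*(m)$, then $\bar\pi$ would be a second least-favorable distribution, contradicting uniqueness (Proposition 4.19 in \cite{johnstone_2019}). So $\kappa(\xi)>0$. If the constraint set is empty, $\kappa(\xi)=+\infty$ and the claim is trivial.

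\emph{Step 3: combine.} On the event $W(\pi_n,\pi^*)>\xi$ we have $f(\pi_n)\le \underline{v}^*(m)-\kappa(\xi)$. Therefore
\[ P_n\bigl(W(\pi_n,\pi^*)>\xi\bigr)\le P_n\bigl(f(\pi_n)\le\underline{v}^*(m)-\kappa(\xi)\bigr), \]
and the right-hand side tends to zero by Step 1 with $\delta<\kappa(\xi)$.

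\emph{Main obstacle.} The main obstacle is Step 2: establishing compactness and continuity carefully in the $W$ topology, and invoking uniqueness correctly. A smaller technicality is measurability of $\pi_n$ as a random element, which I would handle by noting that the output of Algorithm \ref{alg:SMA} is a continuous function of finitely many Gaussian draws.
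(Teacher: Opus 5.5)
Your proposal is correct and follows essentially the same route as the paper. Lipschitz continuity of $f$ in $W$, together with uniqueness of $\pi^*$, gives a positive gap $\nu(\xi)$ (your $\kappa(\xi)$) on $\{W(\pi,\pi^*)>\xi\}$, and Theorem \ref{thm:avg_prior_pointwise_LF} then forces $f(\pi_n)\ge \underline{v}^*(m)-\nu(\xi)$ with high probability. The differences are only refinements: the paper fixes $\alpha$ and lets it be arbitrary where you take $\alpha_n=1/I(\epsilon_n)$, and you make explicit the $W$-compactness of $\Delta([-m,m])$ that the paper's strict inequality $\sup_{S(\xi)}f<f(\pi^*)$ tacitly relies on.
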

\begin{proof}
See Appendix \ref{subsec:Proof of Theorem 2}.
\end{proof}

\subsection{Stochastic Minimax Sequence for the Bounded Normal Mean problem}
Theorems \ref{thm:avg_prior_pointwise_LF} and \ref{thm:convergence of pi_n} have shown that the stochastic mirror ascent algorithm can be used to generate an approximate least-favorable prior. In this section, we use these results to construct a stochastic minimax sequence. 

\begin{theorem} \label{thm:minimax sequence} Fix $m>0$. Take an arbitrary sequence $\{\epsilon_n\}_{n=1}^{\infty} \subseteq \mathbb{R}_{++}$ such that $\epsilon_n  \rightarrow 0$. Let $\pi_n:=\pi^{\epsilon_n}$ be the output of Algorithm \ref{alg:SMA} with the parameters specified in Theorem \ref{thm:avg_prior_pointwise_LF}. Let $d_n$ denote the Bayes rule associated with $\pi_n$. The sequence $\{d_n\}_{n=1}^{\infty}$ is a stochastic minimax sequence. 
\end{theorem}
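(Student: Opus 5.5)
The plan is to reduce Theorem \ref{thm:minimax sequence} to Theorem \ref{thm:convergence of pi_n} through a deterministic continuity statement. Define $G(\pi) := \sup_{\theta \in [-m,m]} R(d_\pi,\theta)$ for $\pi \in \Delta([-m,m])$, where $d_\pi$ is the posterior-mean Bayes rule. By \eqref{eqn:minimax_theorem}, any minimax rule is Bayes with respect to $\pi^*$, so $d^* := d_{\pi^*}$ satisfies $G(\pi^*) = \bar{v}^*(m)$. The key claim is that $G$ is continuous at $\pi^*$ with respect to $W$. Granting this, fix $\varepsilon>0$ and choose $\xi>0$ such that $W(\pi,\pi^*)\le \xi$ implies $G(\pi) \le \bar{v}^*(m)+\varepsilon$. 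Then
\[ P_n\left( \sup_{\theta} R(d_n,\theta) > \bar{v}^*(m)+\varepsilon \right) \le P_n\left( W(\pi_n,\pi^*) > \xi \right) \rightarrow 0 \]
by Theorem \ref{thm:convergence of pi_n}, which is exactly \eqref{eq:stochastic_minimax_sequence}.

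To prove continuity, take any deterministic sequence $\pi_k$ with $W(\pi_k,\pi^*)\to 0$. On the compact set $[-m,m]$, $W$ metrizes weak convergence, so $\pi_k \Rightarrow \pi^*$.

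\emph{Step 1 (pointwise convergence of the Bayes rules).} For each fixed $y$,
\[ d_{\pi}(y) = \frac{\int \theta \phi(y;\theta)\,d\pi}{\int \phi(y;\theta)\,d\pi}. \]
Both integrands are bounded and continuous in $\theta$ on $[-m,m]$, so numerator and denominator converge under weak convergence. The denominator is bounded below by $\phi(|y|+m;0)>0$ uniformly in $\pi$, so $d_{\pi_k}(y)\to d^*(y)$ for every $y$.

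\emph{Step 2 (pointwise convergence of risks).} Since $|d_{\pi_k}(y)-\theta|\le 2m$, dominated convergence gives $R(d_{\pi_k},\theta)\to R(d^*,\theta)$ for each $\theta$.

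\emph{Step 3 (uniformity in $\theta$).} This is where Lemma \ref{lemma:RiskFunction is Lips} is essential: its Lipschitz constant $3m^2+4m$ does not depend on $d$. Hence the family $\{R(d_{\pi_k},\cdot)\}_k$ is equicontinuous on the compact set $[-m,m]$, and a standard finite-net argument upgrades pointwise convergence to $\sup_\theta |R(d_{\pi_k},\theta)-R(d^*,\theta)|\to 0$. Consequently $G(\pi_k)\to G(\pi^*)=\bar v^*(m)$. Since this holds for every sequence converging in $W$, it yields the $\xi$–$\varepsilon$ continuity used above.

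I expect the main obstacle to be making Steps 1–3 genuinely uniform rather than merely pointwise. Equicontinuity from Lemma \ref{lemma:RiskFunction is Lips} handles uniformity in $\theta$; Step 1 needs only the positive lower bound on the denominator. A minor technical point is measurability of the event $\{\sup_\theta R(d_n,\theta)\le \bar v^*(m)+\varepsilon\}$ under $P_n$. This is fine because, by Lemma \ref{lemma:RiskFunction is Lips}, the supremum can be taken over rational $\theta$, and $d_n$ depends measurably on the finitely many Monte Carlo draws.
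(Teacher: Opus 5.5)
Your proof is correct, but it is organized differently from the paper's. The paper never isolates a deterministic continuity statement. It bounds $|R(d_n,\theta)-R(d_{\pi^*},\theta)|\le 4m\int|d_n(y)-d_{\pi^*}(y)|\phi(y;\theta)\,dy$ and then appeals to Lemma~\ref{lemma: uniform risk bound}. That lemma combines the explicit estimate $|d_{\pi_n}(y)-d_{\pi^*}(y)|\le C(y)\,W(\pi_n,\pi^*)/b(y,\pi_n)$ (from the proof of Lemma~\ref{lemma:sequential continuity}) with a truncation of the $y$-integral at $\theta\pm z_{1-\alpha/2}$, where $b(y,\pi)$ is bounded below uniformly in $\pi$. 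So in the paper, uniformity in $\theta$ comes from uniform control of Gaussian integrals, and convergence in probability is carried through these inequalities.

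You instead show that $G(\pi)=\sup_\theta R(d_\pi,\theta)$ is $W$-continuous by a purely qualitative argument: weak convergence of numerator and denominator, then dominated convergence for each fixed $\theta$. You get uniformity in $\theta$ from the $d$-free Lipschitz constant of Lemma~\ref{lemma:RiskFunction is Lips} via equicontinuity, and then transfer to probability in one line using Theorem~\ref{thm:convergence of pi_n}.

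What your route buys:
\begin{itemize}
\item It is more modular and elementary, needing neither Lemma~\ref{lemma: uniform risk bound} nor the quantitative bound on the posterior mean.
\item Your argument in fact gives continuity of $G$ at every $\pi$, not only at $\pi^*$.
\item It handles measurability, which the paper leaves implicit.
\end{itemize}

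What the paper's route buys: an explicit modulus, $\sup_\theta|R(d_n,\theta)-R(d_{\pi^*},\theta)|\le 4m\bigl(2m\alpha+C\,W(\pi_n,\pi^*)/C_3(\alpha)\bigr)$ for every $\alpha\in(0,1)$, with $C$ and $C_3(\alpha)$ as in the proof of Lemma~\ref{lemma: uniform risk bound}. Its pointwise estimate on $d_\pi(y)$ is also reused for Theorem~\ref{thm:rule_convergence}. For Theorem~\ref{thm:minimax sequence} itself you lose nothing, because Theorem~\ref{thm:convergence of pi_n} is already non-quantitative: its $\nu(\xi)$ comes only from uniqueness of $\pi^*$.

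One small point to tighten. Going from ``any minimax rule is Bayes for $\pi^*$'' to $G(\pi^*)=\bar v^*(m)$ also uses two facts: a minimax rule exists, and the Bayes rule under squared loss is unique Lebesgue-a.e.\ here (the marginal density is positive). Together these make the two risk functions coincide. The paper sidesteps this by citing Proposition~4.19 of \cite{johnstone_2019} directly for minimaxity of $d_{\pi^*}$.
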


\begin{proof}
See Appendix \ref{subsec:Proof of Theorem 3}.  
\end{proof}

\begin{theorem} \label{thm:rule_convergence}
Fix $m>0$ and let $S \subset \mathbb{R}$ be any compact subset. Under the same conditions as Theorem \ref{thm:minimax sequence}, for any $\xi > 0$
\[P_n \left(\sup_{y \in S} |d_n(y) - d_{\pi^*}(y)| > \xi \right) \to 0, \textrm{ as } n \rightarrow \infty. \]
\end{theorem}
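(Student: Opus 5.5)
The plan is to reduce the statement to Theorem \ref{thm:convergence of pi_n} via a deterministic continuity argument. I will show that the map $\pi \mapsto d_\pi(y)$ is continuous with respect to the 1-Wasserstein distance, uniformly over $y \in S$. Once that is in place, for any $\xi>0$ there is $\delta>0$ such that $W(\pi,\pi^*)\le\delta$ implies $\sup_{y\in S}|d_\pi(y)-d_{\pi^*}(y)|\le\xi$. The event in the statement is then contained in $\{W(\pi_n,\pi^*)>\delta\}$, whose probability tends to zero by Theorem \ref{thm:convergence of pi_n}.

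The Bayes rule under squared loss for a prior $\pi\in\Delta([-m,m])$ is the posterior mean:
\[ d_\pi(y)=\frac{N_\pi(y)}{D_\pi(y)},\qquad N_\pi(y):=\int \theta\,\phi(y;\theta)\,d\pi(\theta),\quad D_\pi(y):=\int \phi(y;\theta)\,d\pi(\theta). \]
This formula matches the one used in Algorithm \ref{alg:SMA} for discrete priors.

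\textbf{Lipschitz bounds for the integrands.} For fixed $y\in S$, I will bound the Lipschitz constants in $\theta\in[-m,m]$ of $\varphi_1(\theta)=\phi(y;\theta)$ and $\varphi_2(\theta)=\theta\phi(y;\theta)$.
\begin{itemize}
\item Since $|\partial_\theta\phi(y;\theta)|=|y-\theta|\phi(y;\theta)\le (|y|+m)/\sqrt{2\pi}$, the constant for $\varphi_1$ is at most $L_1(S):=(\sup_S|y|+m)/\sqrt{2\pi}$.
\item Similarly, $\varphi_2$ has Lipschitz constant at most $L_2(S):=(1+m(\sup_S|y|+m))/\sqrt{2\pi}$.
\end{itemize}
By the Kantorovich--Rubinstein duality \eqref{eqn:KR-duality}, $|D_\pi(y)-D_{\pi^*}(y)|\le L_1 W(\pi,\pi^*)$ and $|N_\pi(y)-N_{\pi^*}(y)|\le L_2W(\pi,\pi^*)$, uniformly in $y\in S$.

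\textbf{Lower bound on the denominator.} For any prior on $[-m,m]$ and any $y\in S$, we have $D_\pi(y)\ge c(S):=\min_{y\in S,\theta\in[-m,m]}\phi(y;\theta)>0$. This holds because $S\times[-m,m]$ is compact and $\phi$ is continuous and positive. I also use $|N_\pi|\le m D_\pi$.

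\textbf{Combining.} Writing
\[ \frac{N}{D}-\frac{N^*}{D^*}=\frac{N-N^*}{D}+\frac{N^*(D^*-D)}{DD^*} \]
gives
\[ \sup_{y\in S}|d_\pi(y)-d_{\pi^*}(y)|\le \frac{L_2+mL_1}{c(S)}\,W(\pi,\pi^*). \]
This is a Lipschitz-in-$W$ bound on $S$, which is stronger than the continuity needed.

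There is no genuine obstacle here. The only points to check are that $d_n$ and $d_{\pi^*}$ are both given by this posterior-mean formula (Bayes rules are unique up to null sets, and we take the posterior-mean version), and that the constants depend on $S$ but not on $n$. The result then follows by choosing $\delta=\xi\,c(S)/(L_2+mL_1)$.
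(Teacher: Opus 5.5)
Your proposal is correct and is essentially the paper's own argument. The paper likewise writes $d_\pi$ as the ratio of $\int\theta\,\phi(y;\theta)\,d\pi$ to $\int\phi(y;\theta)\,d\pi$, and bounds both differences via Kantorovich--Rubinstein duality with the same Lipschitz constants (its Lemma~\ref{lemma:sequential continuity}). It then lower-bounds the denominator on $S\subseteq[-c,c]$ by $\frac{1}{\sqrt{2\pi}}e^{-(c+m)^2/2}$ and concludes from $\sup_{y\in S}|d_n(y)-d_{\pi^*}(y)|\le (A/B)\,W(\pi_n,\pi^*)$ together with Theorem~\ref{thm:convergence of pi_n}. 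The only differences are cosmetic: you put $\sup_S|y|$ directly into the constants and get the denominator bound from compactness rather than from the explicit Gaussian formula.
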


\begin{proof}
See Appendix \ref{subsec:Proof of Theorem 4}.
\end{proof}

Theorems \ref{thm:minimax sequence} and \ref{thm:rule_convergence} translate the convergence of the approximate least-favorable distribution into convergence of the approximately minimax estimator. Theorem \ref{thm:minimax sequence} shows that the sequence of Bayes rules corresponding to the approximately least-favorable distribution forms a bona fide stochastic minimax sequence. The stochastic convergence reflects the randomness from the Monte Carlo estimation of supergradients. Theorem \ref{thm:rule_convergence} complements Theorem \ref{thm:minimax sequence} by showing that the sequence of approximately minimax rules converges (uniformly over compacts) with high probability to the problem's minimax rule.

\section{Simulations} \label{sec: simulation}

This section explores the numerical performance of the stochastic mirror ascent algorithm in the Bounded Normal Mean problem. We compare the risk of the approximately minimax estimator to the linear minimax benchmark. We also assess the computational cost of the suggested procedure.

\subsection{Selection of the parameter $\epsilon$}
As discussed in the introduction---with the exception of $\epsilon$---our Theorem \ref{thm:avg_prior_pointwise_LF} makes explicit recommendations about all of the parameters that need to be specified in order to run the stochastic mirror ascent algorithm. We now argue that in the Bounded Normal Mean problem, there is a simple way to choose $\epsilon$ (as a function of $m$).  

Note first that the minimax risk over linear estimators in the Bounded Normal Mean problem is 
\begin{equation} \label{eqn:linear minimax} 
\bar{v}_{\mathrm{linear}}(m) := \frac{m^2}{1+m^2},
\end{equation}
see Equation 4.28 in \cite{johnstone_2019}. It is also known that
\begin{equation} \label{eqn:ibragimov} \bar{v}^*(m) \leq \bar{v}_{\mathrm{linear}}(m) \leq 1.25 \bar{v}^*(m), \end{equation} where the upper bound is due to \cite{donoho_liu_macgibbon_1990}; 
see also Theorem 4.17 in \cite{johnstone_2019}.\footnote{The factor 1.25 is typically referred to as the \cite{ibragimov1985nonparametric} constant.} 

If $\epsilon$ were chosen to satisfy $\epsilon > 0.25 \bar{v}^*(m)$, then the minimax linear estimator would be $\epsilon$-minimax:  $\bar{v}_{\mathrm{linear}}(m) \leq 1.25 \bar{v}^*(m) < \bar{v}^*(m) + \epsilon$. We are interested in values of $\epsilon$ associated with smaller approximation errors (to avoid having provable $\epsilon$-minimax guarantees for the minimax linear estimator). Thus, using \eqref{eqn:ibragimov} we recommend choosing $\epsilon$ as  
\begin{equation}\label{eqn:bound_epsilon} 
\epsilon(m) := \frac{1}{5} \frac{m^2}{1+m^2} = \frac{0.25}{1.25} \bar{v}_{\mathrm{linear}}(m) \leq 0.25 \bar{v}^*(m). 
\end{equation}

Figure \ref{fig: J_and_I_of_m} displays both the required number of iterations of the stochastic mirror ascent algorithm and the implied grid size when we choose $\epsilon(m)$ as in \eqref{eqn:bound_epsilon} and   $m \in [0.1,9]$. The figure illustrates that both the number of iterations and the grid size grow (polynomially) with $m$. The dominant cost is the number of iterations, since $J(m):=J(\epsilon(m))$ grows as $m^4\ln(m)$ while the number of grid points $I(m):=I(\epsilon(m))$ grows as $m^3$. For example, at $m=1$, $I(m) = 141$ and $J(m) \approx 63,345$ (yielding a total time of about 3 seconds for the algorithm to run); while at $m=3$, $I(m) = 1,301$ and $J(m) \approx 2.3$ million (which makes the runtime of the algorithm about 2 hours).

\begin{figure}[h!] 
\begin{center}	
\includegraphics[width=5in]{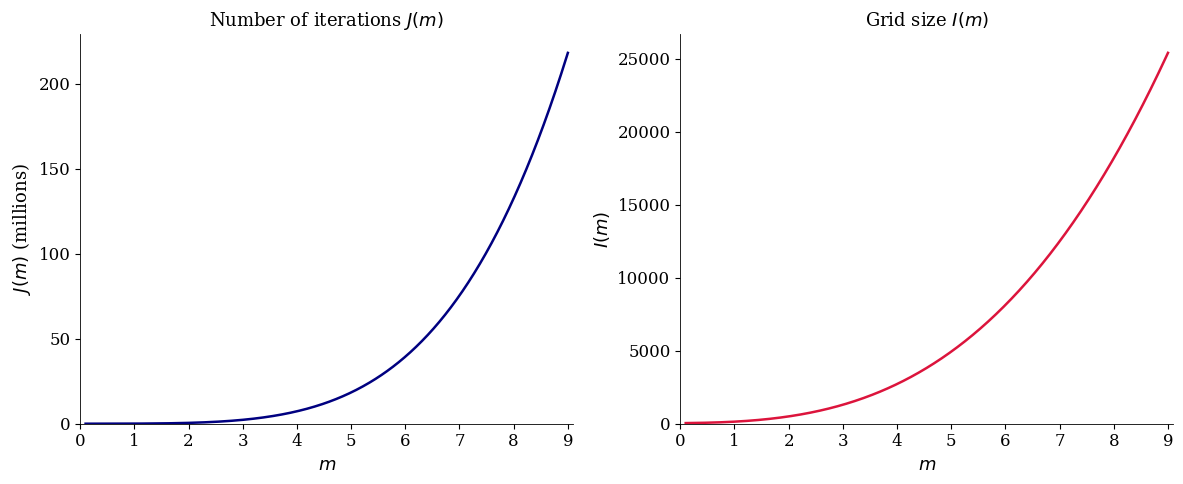}
\caption{Number of iterations and grid size as functions of $m$ for $m \in [0.1,9]$.}
\label{fig: J_and_I_of_m}
\end{center}	
\end{figure}

Figure \ref{fig: time_SMD_new} illustrates the total time required to run the stochastic mirror ascent routine and evaluate the risk with 1,000 MC draws and the values of $\epsilon(m)$ for $m \in \{1, \ldots, 3\}$, implied by the formula in \eqref{eqn:bound_epsilon}.

\begin{figure}[h!] 
\begin{center}	
\includegraphics[width=3.75in]{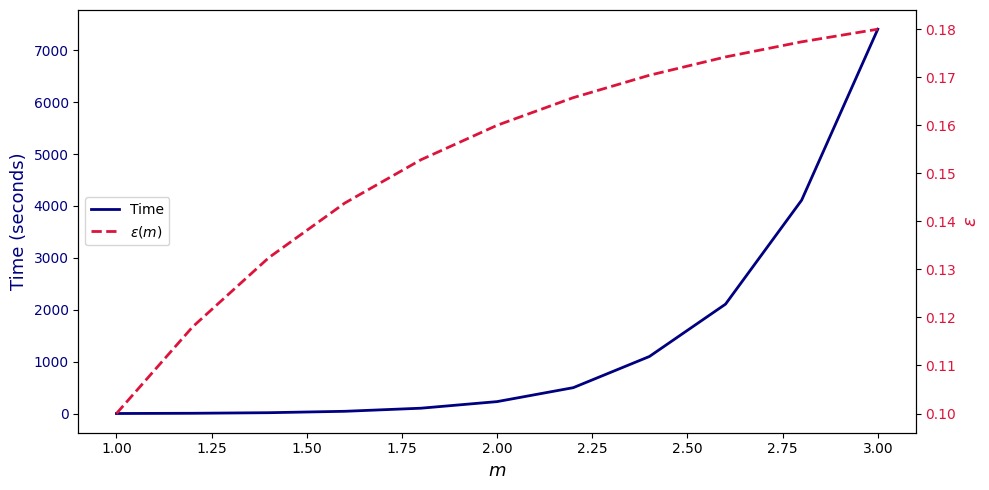}
\caption{Computation time and $\epsilon(m)$.}
\label{fig: time_SMD_new}
\end{center}	
\end{figure}

\subsection{Improvement of the approximately minimax estimator relative to the minimax linear estimator}

Figure \ref{fig: risk_all_m_4} compares the risk of the minimax estimator obtained via stochastic mirror ascent and the linear minimax estimator for $m \in \{1,\ldots,4\}$. The upper bound (gray dashed line) is $\bar{v}_{\mathrm{linear}}(m)=m^2/(1+m^2)$, the worst-case risk of the minimax linear estimator. The lower bound (orange dashed line) is $0.8 \bar{v}_{\mathrm{linear}}(m)$, obtained from the \cite{ibragimov1985nonparametric} constant. The difference between these upper and lower bounds equals $(1/5)m^2/(1+m^2)$, which is our choice of $\epsilon(m)$.

We obtain the worst-case risk of the estimator generated by the stochastic mirror ascent algorithm (red line, with circle markers) in two steps. In the first step, we evaluate the risk function of the approximately minimax estimator over a finite grid of 200,000 equally-spaced points in the interval $[-m,m]$. The risk function is evaluated by Monte Carlo simulation using $1,000$ draws.\footnote{We simulate $N=1,000$ independent draws $Z_{j} \sim N(0,1)$ and compute, for each grid point $\theta_i$,
\begin{equation}
\label{eq:mc_risk_sim}
\widehat{R}_N(d,\theta_i) = \frac{1}{N} \sum_{j=1}^{N} \left[ d(\theta_i+Z_{j})-\theta_i \right]^2.
\end{equation}} In the second step, we acknowledge the fact that the worst-case risk evaluated on the finite grid understates the estimator's worst-case risk over the parameter space (since the worst-case risk may be attained between grid points). In Appendix \ref{subsec:grid_vs_interval_risk} we show how the Lipschitz continuity of the risk function in Lemma \ref{lemma:RiskFunction is Lips} can be used to translate the evaluation of the risk over a finite grid into a \emph{certifiable} upper bound for the worst-case risk of the approximately minimax estimator---and, consequently, for the problem's minimax value. The red line (with circle markers) in Figure \ref{fig: risk_all_m_4} is such an upper bound. 

The blue line (with circle markers) is the average risk---computed according to the approximate least-favorable distribution---of the estimator generated by the stochastic mirror ascent algorithm. Since, by construction, the approximately minimax estimator is a Bayes decision rule with respect to the approximately least-favorable prior, the blue line (with circle markers) is a legitimate lower bound for both the problem's minimax value and for the estimator's worst-case risk.  

Since the worst-case risk of $d_{\pi^\epsilon}$---the estimator generated by the stochastic mirror ascent algorithm---lies in between the red and blue lines (both with circle markers), we conclude that $d_{\pi^\epsilon}$ is $\epsilon$-minimax optimal for a value of $\epsilon$ considerably smaller (the difference between the lines with circle markers) than the target $\epsilon(m)$ (the difference between the dashed lines). Figure \ref{fig: risk_all_m_4} is thus graphical evidence of the success of our stochastic mirror ascent algorithm.

\begin{figure}[h!] 
\begin{center}	
\includegraphics[width=3.75in]{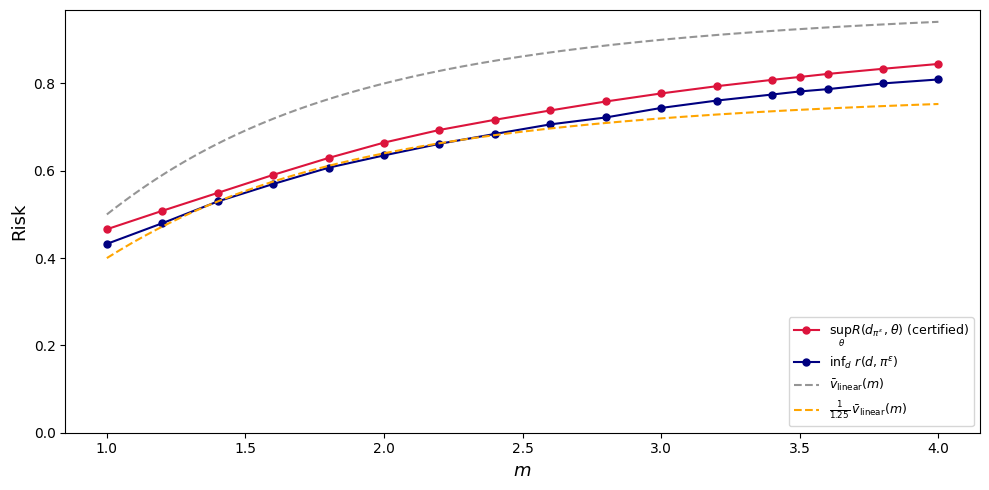}
\caption{Bounds on the worst-case risk of $d_{\pi^\epsilon}$.}
\label{fig: risk_all_m_4}
\end{center}	
\end{figure}

To further illustrate the success of our stochastic mirror ascent algorithm, Figure \ref{fig: improvement_all} reports a lower bound on the percentage improvement in the worst-case risk of $d_{\pi^\epsilon}$ relative to the worst-case risk of the minimax linear estimator. That is, we would like to report the risk improvement computed as
\[100 \times \left(1 - \frac{\sup_{\theta \in [-m,m]} R(d_{\pi^\epsilon}, \theta)}{\bar{v}_{\mathrm{linear}}(m)} \right),\]
but we replace $\sup_{\theta \in [-m,m]} R(d_{\pi^\epsilon}, \theta)$ by the certified upper bound reported in Figure \ref{fig: risk_all_m_4}.

\begin{figure}[h!] 
\begin{center}	
\includegraphics[width=3.75in]{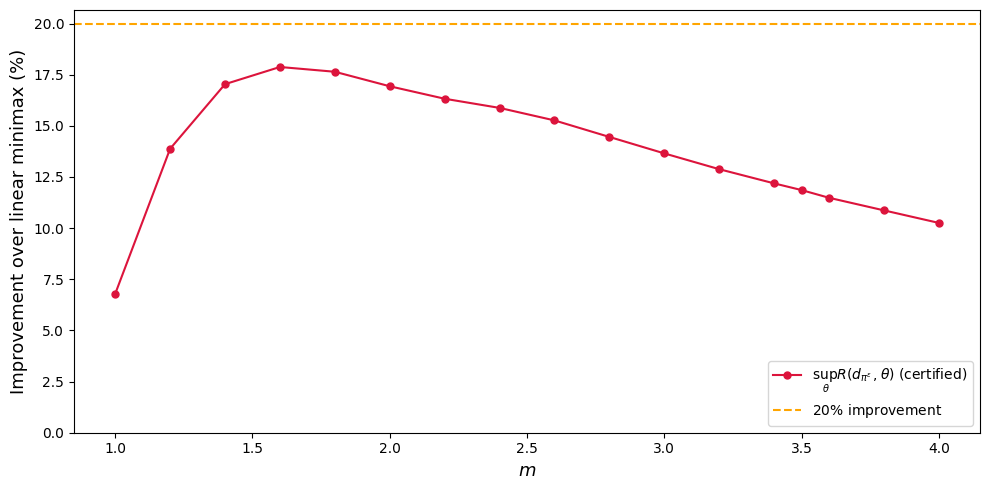}
\caption{Lower bound on percentage improvement in worst-case risk of $d_{\pi^\epsilon}$ over the worst-case risk of the minimax linear estimator.}
\label{fig: improvement_all}
\end{center}	
\end{figure}

For comparison, in Appendix \ref{subsec:grid_vs_interval_risk} we also report the worst-case risk of two alternative estimators: the truncated maximum likelihood estimator $d_{\textrm{trunc}} (y) = \max\{\min\{y,m\}, -m\}$ and the \emph{clipped} linear estimator (purple line) $d_{\textrm{linear, clipped}} = \max\{\min\{c \cdot y,m\}, -m\}$, where $c=m^2/(1+m^2)$.

\subsection{Comparison of $d_{\pi^\epsilon}$ and the minimax linear estimator}

Figure \ref{fig: estimators_m_1_2_4} plots the approximately minimax estimator and the minimax linear estimator for $m \in \{1,2,4\}$. Note that for $m=2$ and $m=4$, the most visible differences between $d_{\pi^\epsilon}$ and the minimax linear estimator concentrate near the boundaries.

\begin{figure}[h!] 
\begin{center}	
\includegraphics[width=5in]{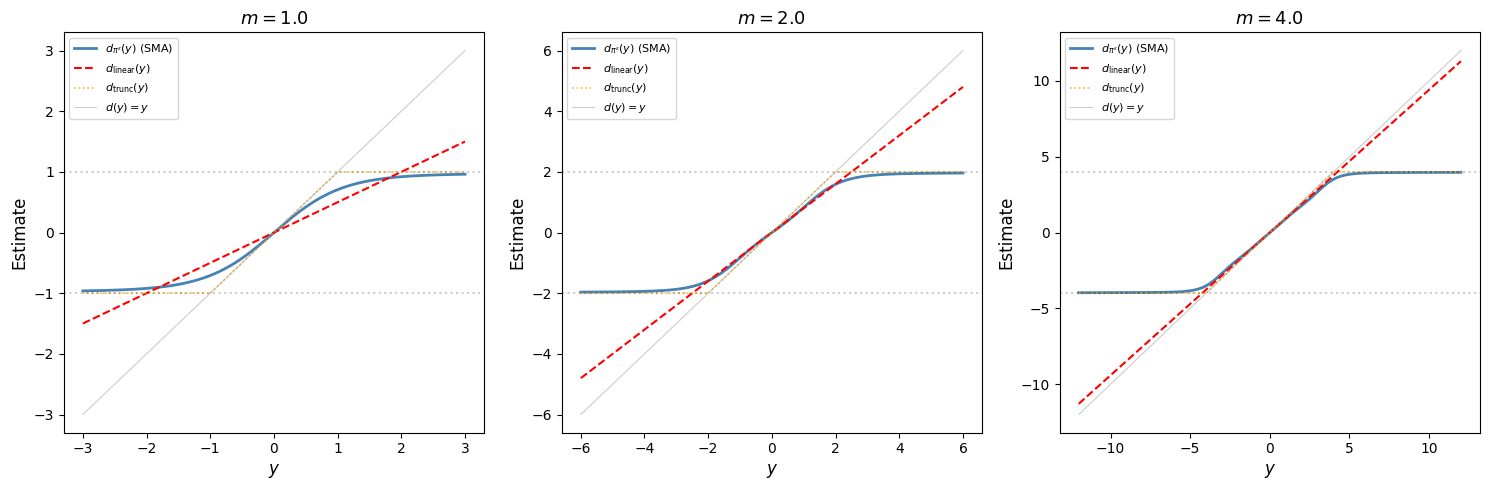}
\caption{Comparison of different estimators for the Bounded Normal Mean problem.}
\label{fig: estimators_m_1_2_4}
\end{center}	
\end{figure}

\subsection{Simulations with a time limit}

Suppose that we are only willing to run the algorithm for at most $x$ minutes. That is, we will run the algorithm either until the required number of iterations is reached or the time budget is exhausted, whichever happens first. These types of simulations are typically referred to as \emph{fixed wall-clock time} simulations.\footnote{See Section 7 of \cite{JMLR:v24:21-0377} for an example of numerical simulations that also fix the wall-clock time. } 

Figure \ref{fig: J_required_vs_completed_4} displays the number of iterations completed within a five-minute limit for each $m \in \{1, \dots, 4\}$, versus the theoretical target implied by $\epsilon(m) = \frac{1}{5} \frac{m^2}{1+m^2}$. As illustrated by this figure, the five-minute time limit starts binding at $m=2.2$.

\begin{figure}[h!] 
\begin{center}	
\includegraphics[width=3.75in]{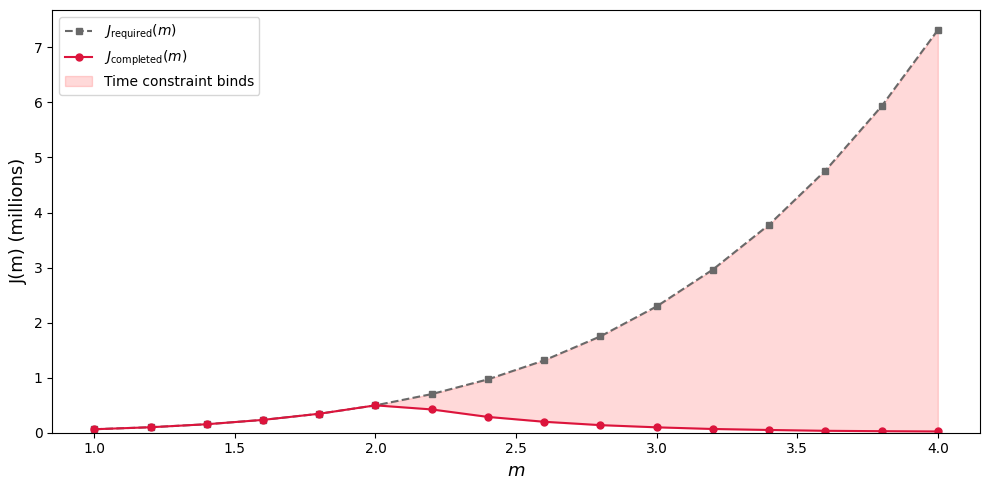}
\caption{Required iterations vs. completed within five-minute limit for $m \in \{1, \dots, 4\}$.}
\label{fig: J_required_vs_completed_4}
\end{center}
\end{figure}

Figure \ref{fig: risk_5_min} displays different risk values: analytical bounds ($\bar{v}$ and $\frac{1}{1.25}\bar{v}$), and the grid-evaluated worst-case risk of the stochastic mirror ascent algorithm.

\begin{figure}[h!] 
\begin{center}	
\includegraphics[width=3.75in]{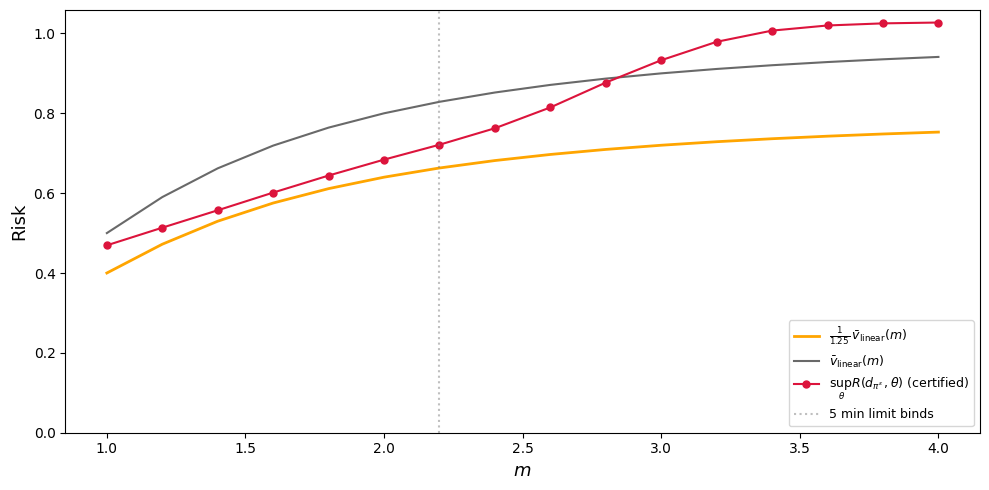}
\caption{Risk of $d_{\pi^\epsilon}$ vs. linear minimax (five-minute limit).}
\label{fig: risk_5_min}
\end{center}
\end{figure}

Figure \ref{fig: risk_impr_5_15_min} illustrates the percentage improvements in risk, analogous to Figure \ref{fig: improvement_all}, but under time restrictions: a five-minute limit (solid red line) and a fifteen-minute limit (dashed red line). Even with the five-minute limit binding for $m > 2.2$, the proposed algorithm still delivers a meaningful improvement in the worst-case risk up to $m \approx 3$. Beyond this point, the five-minute constraint is no longer sufficient to certify any improvement, and the SMA-based decision rule is dominated by the linear minimax rule. Increasing the limit to fifteen minutes pushes the break-even point out to $m \approx 3.3$ and recovers additional risk improvement throughout the upper range of $m$. However, at large $m$ the problem becomes hard enough that this time is not sufficient to beat the linear rule.

\begin{figure}[h!] 
\begin{center}	
\includegraphics[width=3.75in]{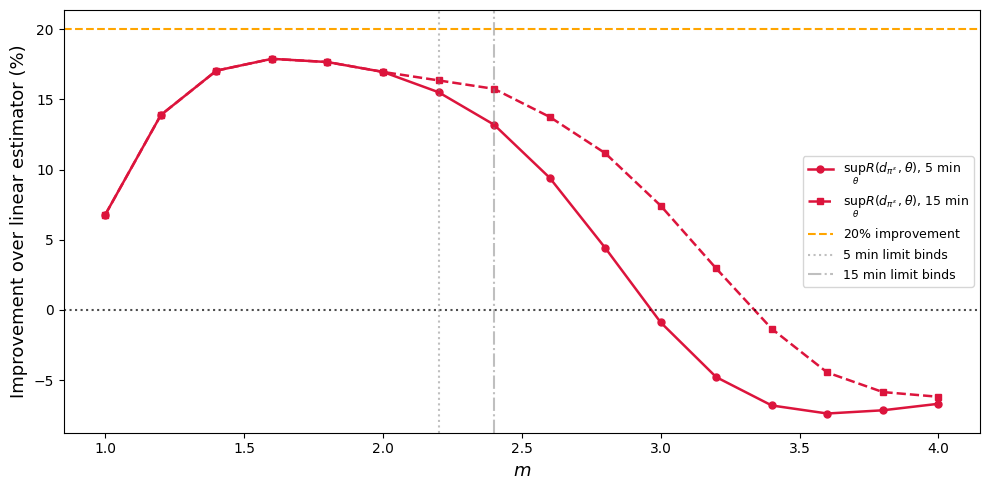}
\caption{Risk improvement of proposed estimator vs. linear (five- and fifteen-minute limits).}
\label{fig: risk_impr_5_15_min}
\end{center}
\end{figure}

\subsection{Adaptive algorithms}
We conclude this section by making a brief comment about \emph{adaptive} stochastic gradient methods---see \cite{leijordan2020} for a recent discussion. Our interest is centered on discussing potential applications of adaptive algorithms to the Bounded Normal Mean problem. As mentioned earlier, the algorithm we developed in the paper only exploits the fact that the objective function defining the least-favorable distribution is concave and Lipschitz. More concretely, our algorithm  used a simple characterization of the supergradient and a bound on the Lipschitz constant ($M:=4m^2$). However, it is entirely possible that the objective function has additional properties that could be exploited by an iterative optimization routine; for instance, the function could be strongly concave, or it could possess well-behaved higher-order derivatives, or it could have a tighter modulus of Lipschitz continuity. 

For strong concavity, it would be interesting to study the extent to which the algorithms in \cite{leijordan2020}---which adapt to an unknown modulus of strong convexity---could be used in our problem.

For adaptation to the problem's Lipschitz constant, one could explore the possibility of using the algorithms in \cite{duchi2011}, in which the step sizes are computed from the observed gradients. Interestingly, these algorithms could yield theoretical guarantees that are comparable to the ones obtained with the optimal parameters chosen a priori. This type of adaptation seems particularly promising in our problem, as fewer iterations could suffice to achieve the same accuracy. In Appendix \ref{sec: appendix_d} we present some numerical evidence that the improvement is non-negligible. To show this, we run Algorithm \ref{alg:SMA} for $m \in \{1, 1.6, 2\}$ and record the magnitude of the stochastic supergradient at every iteration. This exercise shows that replacing the worst-case Lipschitz bound with its empirical counterpart would reduce the number of iterations by factors of approximately 2, 3.5, and 5 (see Figure \ref{fig: J_of_M} in Appendix \ref{sec: appendix_d}), with the improvement increasing in $m$. We note, however, that extending the high-probability guarantees in Theorem \ref{thm:avg_prior_pointwise_LF} to data-dependent step sizes would require a different type of concentration argument, which is beyond the scope of this paper.

\section{Application} \label{sec: LP-VAR}

\subsection{Motivation: Estimation of Impulse Response Coefficients} 

A common problem in macroeconometrics is the estimation of the dynamic causal effects---or \emph{impulse response functions}---of aggregate shocks to macroeconomic policies or economic fundamentals. The two most common empirical methods currently used by applied macroeconomists for doing so are \emph{structural vector autoregressions} (henceforth interchangeably referred to as VARs or SVARs), going back to the seminal work of \cite{sims_1980}, and the so-called \emph{local projections} (LPs), introduced by \cite{jorda_2005}. 

Understanding the properties of these two estimation procedures has been an active area of recent research; see the references and discussion in \cite{montielolea_plagborgmoller_2021}, \cite{montielolea_qian_wolf_plagborgmoller_2025}, and \cite{jorda2025local}. It is now understood that there is a sense in which VARs and LPs share the same estimand \citep{plagborgmoller_wolf_2021,xu2026local}, but that in finite samples these procedures lie on opposite ends of a bias-variance trade-off \citep{li2024local}: LP estimators have lower bias than VAR estimators, but they also have substantially higher variance at intermediate and long horizons. 

Let $\theta_h$ denote the impulse response coefficient of interest at horizon $h$. For the sake of notation, we omit any explicit reference to the horizon of interest in the estimators and the covariance matrix (although both of these clearly depend on $h$). We consider a set-up where an applied macroeconomist has access to both LP and VAR estimators of $\theta_h$. We assume that these estimators follow the simple parametric model 
\begin{equation} \label{eq:bivariate_normal_model}
    \begin{pmatrix}
        \hat{\theta}_{\textrm{LP}} \\
        \hat{\theta}_{\textrm{VAR}}
    \end{pmatrix}
    \sim \mathcal{N} \left(
    \begin{pmatrix}
        \theta_h \\
        \theta_h + b
    \end{pmatrix},
    \Sigma \right), \: \Sigma \equiv
    \begin{pmatrix}
        \sigma_{\textrm{LP}}^2 & \rho \cdot \sigma_{\textrm{LP}} \sigma_{\textrm{VAR}} \\
        \rho \cdot \sigma_{\textrm{LP}} \sigma_{\textrm{VAR}}  & \sigma_{\textrm{VAR}}^2
    \end{pmatrix}.
\end{equation}
We treat the matrix $\Sigma$ as known, and assume that the variance of the LP estimator $(\sigma^2_{\textrm{LP}})$ is strictly larger than the variance of the VAR estimator $(\sigma^2_{\textrm{VAR}})$. The unknown parameter $b$ is used to represent the potential bias of the VAR estimator. 

While the exact bivariate normality of the estimators is unlikely to hold in finite samples, the asymptotic normality of the estimators can be established by using the  \emph{locally misspecified}
vector autoregression model used in \cite{montielolea_plagborgmoller_qian_wolf_forthcoming}.\footnote{\cite{montielolea_plagborgmoller_qian_wolf_forthcoming} consider stationary data generating processes that are well
approximated by a finite-order SVAR model, but subject to local misspecification in the form
of an asymptotically vanishing moving average process of potentially infinite order.} In particular, their Proposition 3.1 provides an asymptotic expansion that shows that the large-sample distribution of the LP estimator is unaffected by local misspecification (and so, there is a sense in which LPs are asymptotically unbiased). In contrast, their Proposition 3.2 shows that local misspecification typically leads to asymptotic bias in the VAR estimator. This bias is captured by the scalar parameter $b$ in \eqref{eq:bivariate_normal_model}, which will depend on the magnitude of the local misspecification. When the local misspecification is of order $O_{P}(1/\sqrt{T})$, a martingale central limit theorem can then be applied to show that the right-hand side of \eqref{eq:bivariate_normal_model} provides a valid asymptotic approximation to the finite-sample distribution of the VAR and LP estimators under the locally misspecified VAR model. 

We make two additional remarks about the structure of the bivariate normal distribution in \eqref{eq:bivariate_normal_model}. First, when the bias parameter $b$ equals zero, the estimator $\alpha \hat{\theta}_{\textrm{LP}} + (1-\alpha) \hat{\theta}_{\textrm{VAR}}$ is also unbiased for the parameter of interest $\theta_h$ (this holds for any positive or negative $\alpha \in \mathbb{R}$). Since we want to treat the VAR estimator as the efficient estimator in the absence of potential bias, this means that the linear estimator with the smallest possible variance must coincide with the VAR estimator. This is enforced by setting $\rho = \sigma_{\textrm{VAR}}/\sigma_{\textrm{LP}}$ in \eqref{eq:bivariate_normal_model}, which implies that the covariance between the two estimators is $\sigma^2_{\textrm{VAR}}$. We note that this result can also be established more generally under the locally misspecified
vector autoregression model used in \cite{montielolea_plagborgmoller_qian_wolf_forthcoming}; see their Corollary A.2. 

Second, throughout the rest of this section, we will assume that there is a known bound on the bias parameter $b$ that enters the model \eqref{eq:bivariate_normal_model}. The motivation for this bound is the following. Let $T$ denote the sample size used to produce the VAR estimator.  \cite{montielolea_plagborgmoller_qian_wolf_forthcoming} showed that if the researcher is willing to place an upper bound on the fraction of the variance of the moving average residual that is explained by the moving average component of their locally misspecified VAR (denote this parameter by $\mathcal{M}$), then the worst-case bias (in absolute value) for the VAR estimator takes the following form: 
\begin{equation} \label{eq:worst-case-bias}
\underbrace{\sqrt{T \times \mathcal{M}}}_{:=m} \:  \sigma_{\Delta}, \quad \textrm{ where } \quad  \sigma_{\Delta}:=\sqrt{\sigma_{\textrm{LP}}^2 - \sigma_{\textrm{VAR}}^2}.
\end{equation}
See Proposition 4.1 of \cite{montielolea_plagborgmoller_qian_wolf_forthcoming} and also Equation 3.3 in \cite{montielolea_qian_wolf_plagborgmoller_2025}. 

\subsection{Minimax combinations of VAR and LP estimators} \label{subsec: lp_var_minimax_combination}
The question of interest is how to best aggregate the information in LP and VAR estimators to estimate the impulse response coefficient $\theta_h$. This question has been considered recently by \cite{nemtyrev2026targeted} and \cite{chen_pesavento_vonnak_2026}. More generally, the question of how to best aggregate the information in two estimators---one of which is potentially biased but more efficient---has been analyzed by \cite{armstrong_kline_sun_2025} and \cite{lin2026introducing}. A related problem is also analyzed in \cite{wuetal2026}.

We follow closely the work of \cite{armstrong_kline_sun_2025} and consider the problem of constructing a minimax estimator (or decision rule) based on quadratic loss, and the bivariate Gaussian model in \eqref{eq:bivariate_normal_model}. We want to impose the bound \eqref{eq:worst-case-bias} on the bias parameter $|b| \leq m \cdot \sigma_{\Delta}$. 

It is without loss of generality to focus on estimators $d_h(\hat{\theta}_{\textrm{LP}}, \Delta)$ that depend on the VAR and LP estimators only through the statistics

\begin{equation*}
    \begin{pmatrix}
        \hat{\theta}_{\textrm{LP}} \\
        \Delta
    \end{pmatrix}
    \sim \mathcal{N} \left(
    \begin{pmatrix}
        \theta_h \\
        b/\sigma_{\Delta}
    \end{pmatrix},
    \begin{pmatrix}
        \sigma_{\textrm{LP}}^2 & -\sigma_{\Delta} \\
        -\sigma_{\Delta}  & 1
    \end{pmatrix} \right), \quad \Delta \equiv \left( \hat{\theta}_{\textrm{VAR}}-\hat{\theta}_{\textrm{LP}} \right) / \sigma_{\Delta}, 
\end{equation*} 
where the absolute value of $\Delta$ can be viewed as a \cite{hausman1978specification} test of correct specification of the VAR model that compares the VAR and LP impulse response estimates. A test of this kind was proposed by \cite{stock2018identification} for testing the invertibility of a vector autoregressive model. It will be convenient to introduce an auxiliary random variable $V \sim \mathcal{N}(0,\sigma^2_{\textrm{LP}}-\sigma^2_{\Delta})$, assumed to be independent of $\Delta$. We can use this random variable to construct a simple representation of the estimation error in the LP estimator. In particular, if we define $U := V - \sigma_{\Delta}(\Delta - b/\sigma_{\Delta}) \sim \mathcal{N}(0, \sigma^2_{\textrm{LP}} )$, then $(U,\Delta)$ has the same joint distribution as $(\hat{\theta}_{\textrm{LP}}-\theta_h,\Delta)$.   

We follow \cite{armstrong_kline_sun_2025} and consider the following \emph{invariance} restriction on the estimators $d: \mathbb{R}^{2} \rightarrow \mathbb{R}$:
\begin{equation} \label{eq:invariance} 
d_h(s+s',t) = d_h(s,t) + s'  \quad \forall \quad s,s',t \in \mathbb{R}. 
\end{equation}
This restriction means that if we shift the unbiased LP estimator by $s'$ units (but we keep the Hausman statistic constant), then the estimator $d$ should also be adjusted by $s'$ units. Note that, by definition of $\Delta$, any estimator of $\theta_h$ that tries to additively ``bias-adjust'' the VAR estimator by means of a function $\hat{b}(\cdot)$ that only depends on $\Delta$ satisfies the invariance restriction in \eqref{eq:invariance}. To see this, note that 
\begin{eqnarray*}
\hat{\theta}_{\textrm{VAR}}-\hat{b}(\Delta) &=& \hat{\theta}_{\textrm{LP}}+\sigma_{\Delta} \Delta - \hat{b}(\Delta) \\
& & (\textrm{by definition of $\Delta$}) \\
&=& \hat{\theta}_{\textrm{LP}} - \left( \hat{b}(\Delta) -\sigma_{\Delta} \Delta \right).
\end{eqnarray*}
Note also that any invariant estimator $d_h(\hat{\theta}_{\textrm{LP}},\Delta)$ can be written as a ``bias-adjusted'' VAR estimator: \[d_h(\hat{\theta}_{\textrm{LP}},\Delta)=d_h(0,\Delta)+\hat{\theta}_{\textrm{LP}}=\hat{\theta}_{\textrm{VAR}}+\underbrace{(-\sigma_{\Delta}\Delta+d_h(0,\Delta))}_{:=-\hat{b}(\Delta)}.\]
The risk function of estimators satisfying the invariance restriction in \eqref{eq:invariance} admits a convenient simplification:
\begin{eqnarray}
\mathcal{R}(d_h,\theta_h,b) &:=& \mathbb{E}_{\theta,b} [ (d_h(\hat{\theta}_{\textrm{LP}},\Delta)-\theta_h)^2 ] \nonumber\\
&=& \mathbb{E}_{\theta,b} [ (d_h(0,\Delta)+\hat{\theta}_{\textrm{LP}}-\theta_h)^2 ] \nonumber \\
&=& \mathbb{E}_{\theta,b} [ (d_h(0,\Delta)+V - \sigma_{\Delta}(\Delta - b/\sigma_{\Delta}))^2 ], \quad V \bot \Delta, \quad V \sim \mathcal{N}(0, \sigma^2_{\textrm{VAR}} ) \nonumber \\
&=& \sigma^2_{\textrm{VAR}} + \sigma^2_{\Delta} \mathbb{E}_{b/\sigma_{\Delta}}\left[ \left( \left[ \Delta -\frac{d_h(0,\Delta)}{\sigma_{\Delta}} \right] - \frac{b}{\sigma_{\Delta}}\right)^2 \right]. \label{eq:risk_invariant_estimators}
\end{eqnarray}
Consequently, the worst-case risk of any invariant estimator does not depend on $\theta_h$, and can be expressed in terms of the minimax risk of the Bounded Normal Mean problem we discussed in Section \ref{sec: BNM}. In particular, let $\mathcal{D}_{I}$ be the set of all invariant estimators for $\theta_h$. Then,
\begin{equation}
\bar{v}^*_h(m;\Sigma) := \adjustlimits\inf_{d_h \in \mathcal{D}_{I}}\sup_{\theta_h, |b| \leq m \cdot \sigma_{\Delta}} \mathcal{R}(d_h,\theta_h,b) = \sigma^2_{\textrm{VAR}} + \sigma^2_{\Delta} \bar{v}^*(m).
\end{equation}
Moreover, if $d^*_{\textrm{BNM-m}}$ denotes the minimax estimator for the Bounded Normal Mean problem with bound equal to $m$, the minimax invariant estimator of the parameter $\theta_h$ becomes
\begin{eqnarray}
d^*_{\textrm{h-minimax-inv}}( \hat{\theta}_{\textrm{LP}},\Delta) &=& \hat{\theta}_{\textrm{LP}} + \sigma_{\Delta} \Delta - \sigma_{\Delta} d_{\textrm{BNM-m}}^*(\Delta) \nonumber \\
&=& \hat{\theta}_{\textrm{VAR}} - \sigma_{\Delta} d_{\textrm{BNM-m}}^*(\Delta). \label{eq:minimax_estimator_final_form}
\end{eqnarray}
That is, the minimax invariant estimator for the impulse response coefficient $\theta_h$ is simply a debiased VAR estimator, where the debiasing function is based on a minimax estimator of $b/\sigma_{\Delta}$ (imposing the bound $m$). The estimator of $b/\sigma_{\Delta}$ relies only on the marginal distribution of $\Delta$.  

By the same logic as in Section \ref{sec: main_results}, the stochastic mirror ascent algorithm applied to the Bounded Normal Mean problem with bound $m$ induces, via \eqref{eq:minimax_estimator_final_form}, a stochastic minimax sequence of invariant estimators for $\theta_h$, with worst-case risk converging to $\bar{v}^*_h(m;\Sigma)$. A formal statement is given in Appendix \ref{subsec: stochastic_minimax_sec_theta}.

\subsection{Data} 
To illustrate our results, we consider four examples of impulse response estimation from four influential macroeconomic papers. In each application, impulse response estimates are obtained using both local projection and vector autoregression methods, including the same set of observable variables and controls. These estimates are based on the replication files provided by \citet{montielolea_plagborgmoller_qian_wolf_forthcoming}.\footnote{Source: \href{https://github.com/ckwolf92/lp_var_inference}{GitHub repository with replication files for \citet{montielolea_plagborgmoller_qian_wolf_forthcoming}}. The original data are from \citet{ramey_2016}.}

\begin{enumerate}
  \item \citet{gertler_karadi_2015}: This paper studies how monetary policy shocks transmit to economic and financial variables when shocks are allowed to include forward guidance. The authors identify monetary policy shocks using high-frequency interest-rate surprises around FOMC announcements as external instruments. The shock is a one-standard-deviation monetary policy tightening, which corresponds to a roughly 6-basis-point increase in the one-year government bond rate on impact. We focus on the impulse response of first-differenced log industrial production to this shock, using monthly data from January 1990 to June 2012 (270 monthly observations, or 269 after first-differencing) and including two lags. 
  
  \item \citet{romer_romer_2010}: The authors study the macroeconomic effects of tax changes by using historical records to distinguish tax actions driven by policy motives from those responding to the state of the economy. Tax shocks are identified using the narrative classification of legislated tax changes. We focus on the impulse response of log real GDP per capita to a positive exogenous tax shock---measured as an increase in tax liabilities relative to GDP---using quarterly data from 1950:Q1 to 2007:Q2 (230 data points),  including four lags and a quadratic trend. 
  
  \item \citet{ramey_2011}: This study shows that standard VAR identification may miss the timing of government-spending news, since agents may respond when news about future military spending arrives, before spending is realized. The authors identify government spending shocks using narrative measures of military spending news. We focus on the impulse response of log real GDP per capita to a government spending shock, corresponding to an increase in the expected present discounted value of future government spending associated with military events relative to previous-quarter GDP, using quarterly data from 1947:Q1 to 2013:Q2 (266 data points), including two lags and a quadratic trend. 
  \item \citet{francis_owyang_roush_dicecio_2014}: They revisit the identification of technology shocks in structural VARs by replacing infinite-horizon long-run restrictions with a finite-horizon criterion. The authors propose the ``Max Share'' technique, which identifies the technology shock as the structural shock that maximizes the explained share of the forecast-error variance of labor productivity at a long but finite horizon. We focus on the impulse response of log labor productivity to a positive unit-variance technology shock selected by this ``Max Share'' criterion, using quarterly data from 1949:Q1 to 2009:Q2 (242 data points), including two lags and a quadratic trend. 
\end{enumerate}

Figure \ref{fig: IRF_LP_VAR} shows the estimated impulse responses along with bootstrapped standard errors for both the LP and VAR estimators.

\begin{figure}[h!] 
\begin{center}	
\includegraphics[width=5in]{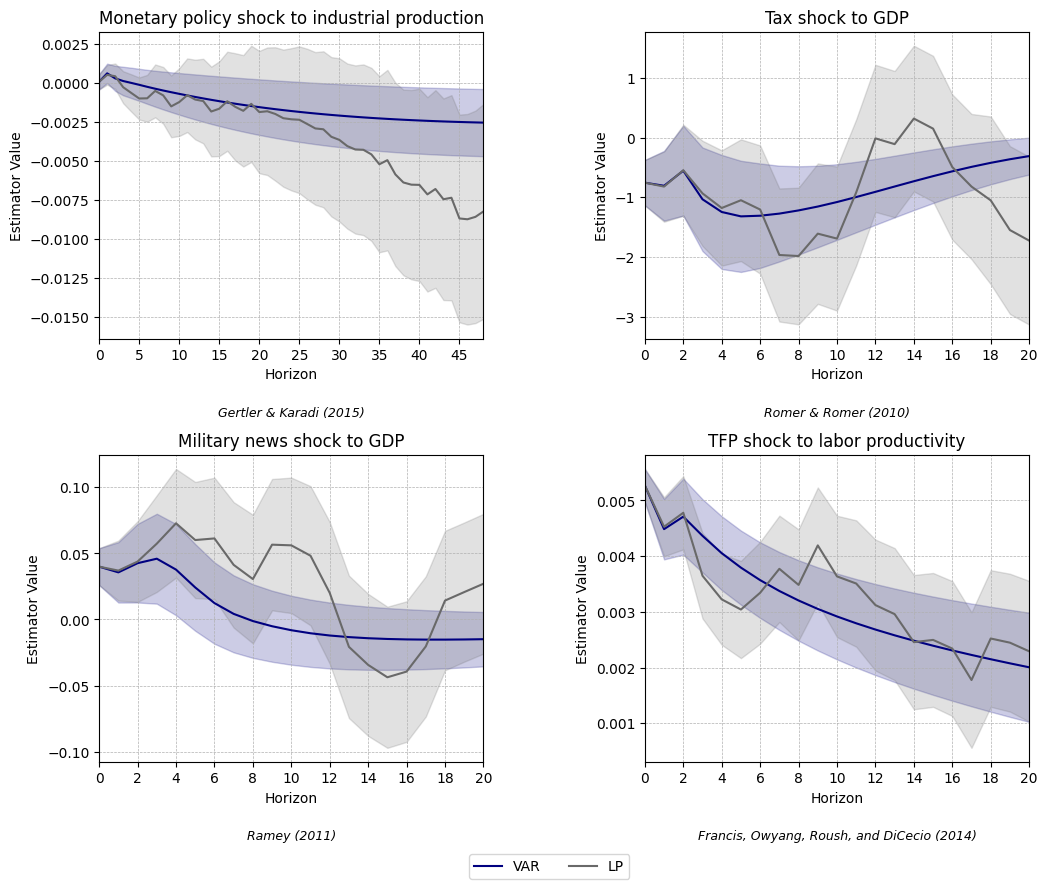}
\caption{LP vs. VAR impulse responses in four applications.}
\label{fig: IRF_LP_VAR}
\end{center}	
\noindent\footnotesize{\textit{Notes:} Each panel reports point estimates from vector autoregressions (VAR, blue) and local projections (LP, gray). Shaded areas are bootstrapped standard errors. Top-left (monthly horizons): monetary policy shock to industrial production from \citet{gertler_karadi_2015}. Other panels (quarterly horizons): tax shock to real GDP per capita from \citet{romer_romer_2010}, government spending shocks to real GDP per capita from \citet{ramey_2011}, and unanticipated TFP shocks to labor productivity from \citet{francis_owyang_roush_dicecio_2014}. }
\end{figure}

\subsection{Implementation}

We now apply the estimator in \eqref{eq:minimax_estimator_final_form} to the four applications described above. For each application and horizon $h$, we compute: (1) the \cite{hausman1978specification} specification test statistic $|\Delta_h|$, (2) the bias correction $\sigma_{\Delta_h} d_{\textrm{BNM-m}}^*(\Delta_h)$,  (3) the resulting minimax estimator as described by Equation \eqref{eq:minimax_estimator_final_form}, and (4) the percentage risk improvement relative to the minimax-optimal linear combination of LP and VAR. For reference, Table \ref{tab:m_calibration} reports the implied values of $m$ and $\epsilon(m)$ across applications. We use those values to run the stochastic mirror ascent algorithm as described in Section \ref{sec: main_results} and construct an approximate minimax decision rule corresponding to $\Delta_h$.

\begin{table}[h!]
\centering
\caption{Values of $m$ across applications ($\mathcal{M}=1$\%).}
\label{tab:m_calibration}
\begin{tabular}{lcccc}
\toprule
Application & $T$ & $m = \sqrt{T \times \mathcal{M}}$ & $\epsilon(m)$ & Time (sec)\\
\midrule
\citet{gertler_karadi_2015} & 269 & 1.64 & 0.146 & 54 \\
\citet{romer_romer_2010} & 230 & 1.52 & 0.139 & 32 \\
\citet{ramey_2011} & 266 & 1.63 & 0.145& 54 \\
\citet{francis_owyang_roush_dicecio_2014} & 242 & 1.56 & 0.142 & 40 \\
\bottomrule
\end{tabular}
\end{table}

Figure \ref{fig: Delta_stat} displays the value of the Hausman specification test statistic $|\Delta_h|$ across horizons in each application. We set the Hausman test statistic to be equal to zero at horizon $h=0$ (although it theoretically equals 0 divided by 0, since the LP and VAR estimators coincide on impact).

\begin{figure}[h!] 
\begin{center}	
\includegraphics[width=5in]{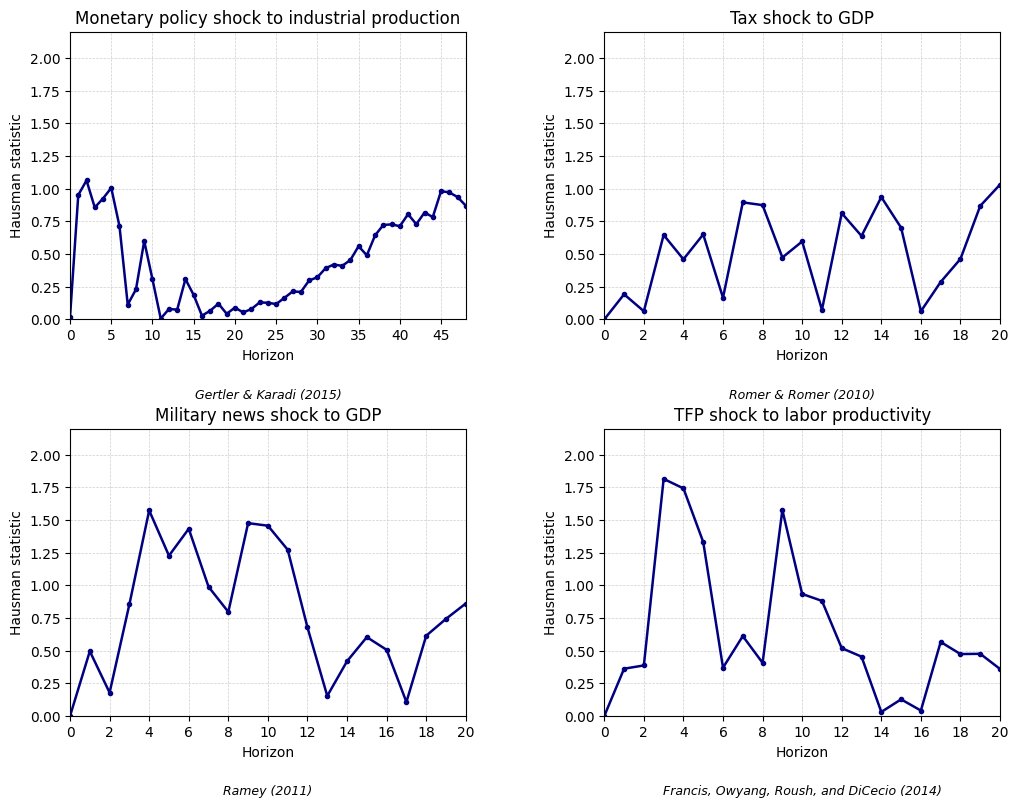}
\caption{Hausman statistic across horizons.}
\label{fig: Delta_stat}
\end{center}	
\end{figure}

For each application, we compute the approximate rule $d_{\textrm{BNM-m}}^*$ in the corresponding minimax Bounded Normal Mean problem using stochastic mirror ascent. The bias correction of the VAR estimator is nonlinear in $\Delta_h$ and is given by $\sigma_{\Delta_h} d_{\textrm{BNM-m}}^*(\Delta_h)$.

Figure \ref{fig: bias_correction} plots the implied nonlinear bias correction across horizons for the four different applications.

\begin{figure}[h!] 
\begin{center}	
\includegraphics[width=5in]{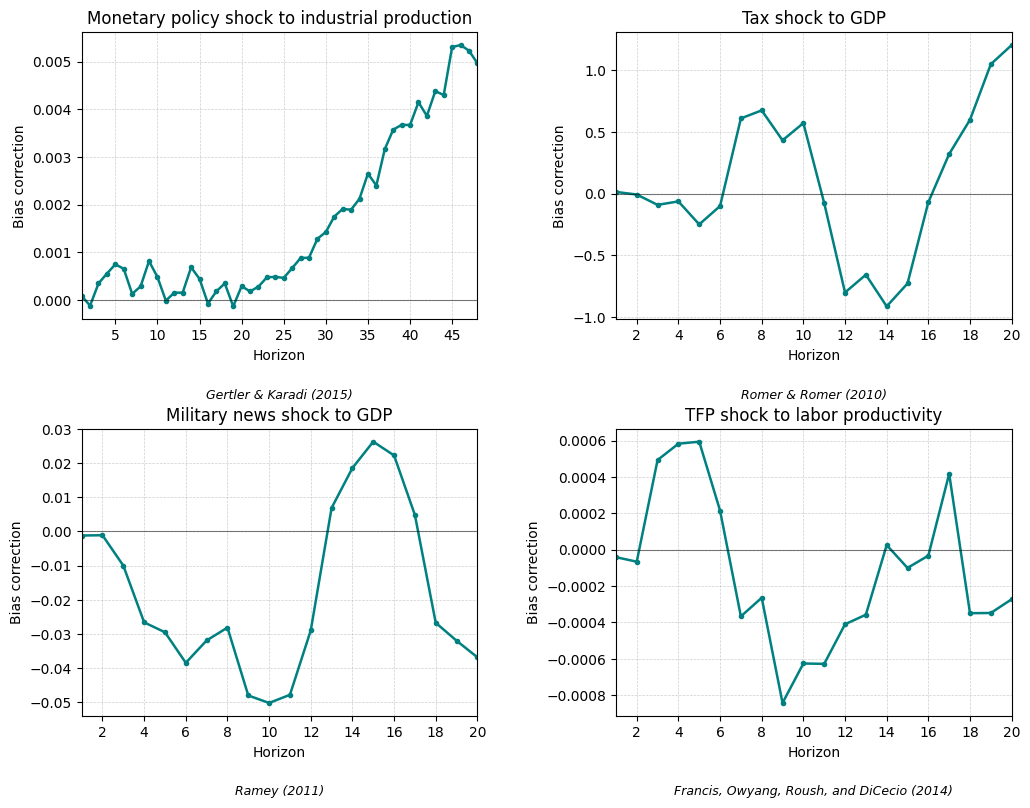}
\caption{Bias correction.}
\label{fig: bias_correction}
\end{center}	
\end{figure}

We then construct the bias-corrected VAR estimator for impulse responses in each of these applications. Figure \ref{fig: b_minimax_irfs} illustrates the comparison of this estimator to the underlying VAR and LP estimators.

\begin{figure}[h!] 
\begin{center}	
\includegraphics[width=5in]{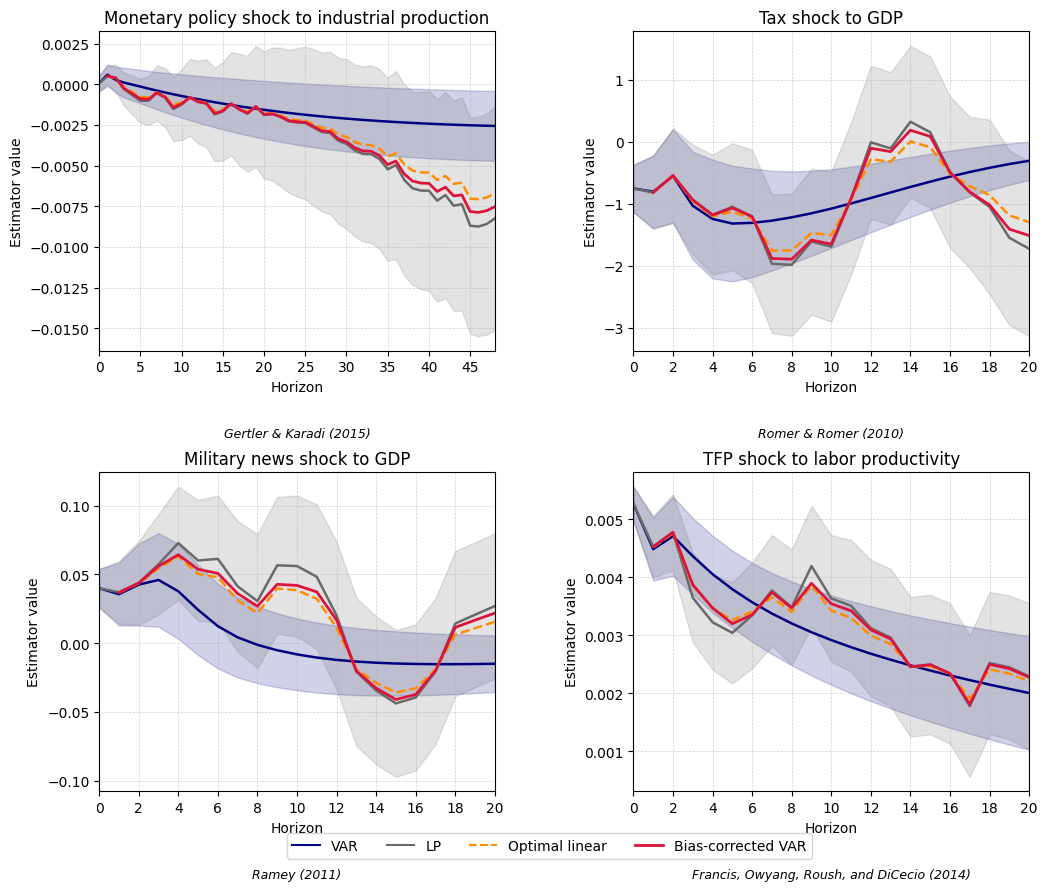}
\caption{Comparison of VAR, LP and proposed bias-corrected VAR estimators.}
\label{fig: b_minimax_irfs}
\end{center}	
\end{figure}

To benchmark the nonlinear bias-corrected VAR, we compare it with the minimax-optimal linear combination of LP and VAR. That is, a linear combination of VAR and LP estimators, with weights chosen to minimize the worst-case risk. As shown in \citet[p.~40]{montielolea_plagborgmoller_qian_wolf_forthcoming}, the worst-case risk for such a linear combination is given by
  \[(1-\omega)^2m^2(\sigma_{\textrm{LP}}^2 - \sigma_{\textrm{VAR}}^2) + \sigma_{\textrm{VAR}}^2 + \omega^2(\sigma_{\textrm{LP}}^2 - \sigma_{\textrm{VAR}}^2).\]
The optimal weight that minimizes this expression is
  \[\omega^* = \frac{m^2}{1+m^2},\]
  leading to the estimator
  \[d^*_{\textrm{linear}}=\hat{\theta}_{\textrm{VAR}} - \sigma_\Delta \omega^* \Delta.\]

In the following exercise, we compare these values with $\mathcal{R}(d_h,\theta_h,b)$, which uses the (grid-evaluated) worst-case risk of the optimal decision rule in the Bounded Normal Mean problem (1,000 Monte Carlo draws).

\begin{figure}[h!] 
\begin{center}	
\includegraphics[width=5in]{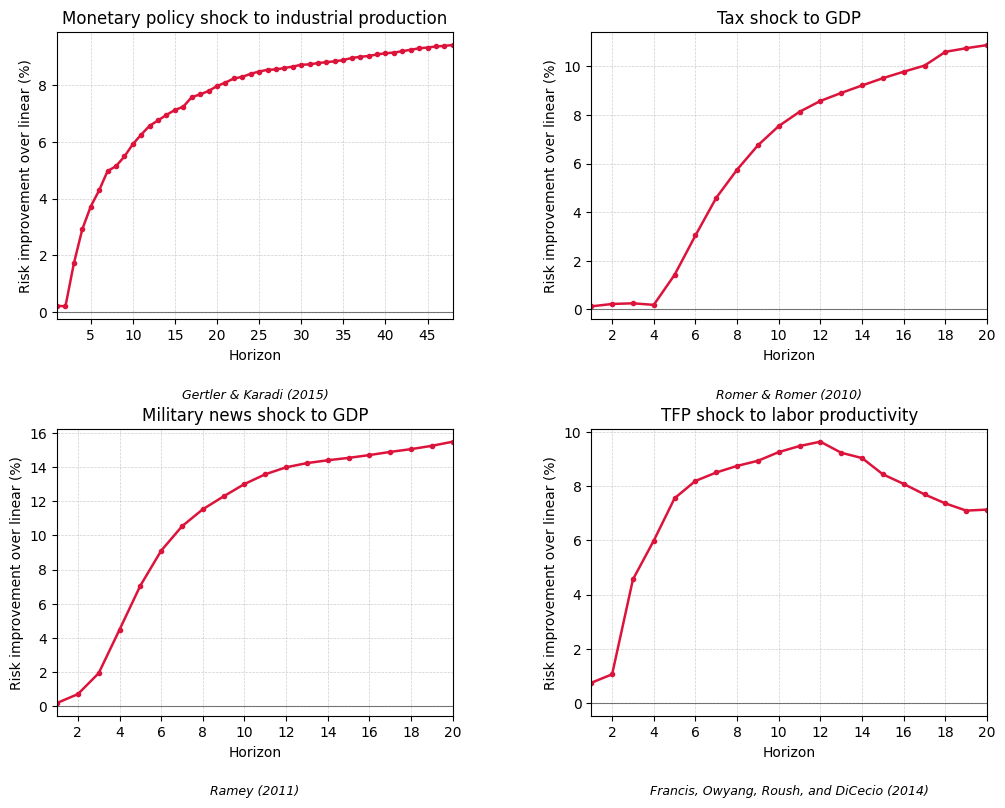}
\caption{Risk improvement over linear combination.}
\label{fig: risk_impr_irfs}
\end{center}	
\end{figure}

Figure \ref{fig: risk_impr_irfs} compares the estimated worst-case risk of the nonlinear minimax correction with the worst-case risk of this optimal linear combination. Positive values indicate that the nonlinear rule improves on the best linear rule.

\section{Conclusion}

We presented a computational approach to find an approximately minimax estimator (under quadratic loss) for the Bounded Normal Mean problem of \citet{casella_strawderman_1981}. Our suggested algorithm considers a fixed (equally-spaced) grid of $I$ points that discretize the parameter space $[-m,m]$. We then approximately maximize Bayes risk over the space of probability distributions supported on the $I$ points of the grid using a stochastic mirror ascent routine for concave maximization. We use this approximately least-favorable distribution to find the corresponding Bayes estimator. Our main results (Theorems \ref{thm:avg_prior_pointwise_LF}-\ref{thm:rule_convergence}) showed that the approximately least-favorable distribution and the corresponding Bayes estimator are indeed close (in a sense we made precise) to their desired targets.

As we explained in the paper, our algorithm has two sources of approximation error: discretization and optimization error. The Lipschitz continuity property of the risk function that we established in Lemma \ref{lemma:RiskFunction is Lips} controls the first component. The well-known theoretical properties of stochastic mirror descent allow us to control the second component. Taking into account these two sources of approximation error, we showed that the approximately least-favorable distribution obtained from our algorithm converges in $1-$Wasserstein distance to the exact least-favorable distribution of the Bounded Normal Mean problem. In addition, the approximately minimax estimator converges (uniformly over compacts) to the exact minimax estimator. 

We think that several questions remain open. First, as we have explained in the paper, our algorithm relies on an ex ante discretization of the parameter space $[-m,m]$. This makes the problem tractable and allows the use of mirror ascent on a finite-dimensional probability simplex, but it also introduces discretization error that needs to be accounted for. A natural direction for future work would be to explore the possibility of approximating the least-favorable distribution by looking directly at the concave problem of maximizing Bayes risk over all probability distributions on the interval $[-m,m]$. An approach for this problem could be based on the algorithm presented in the work of \cite{kent2021modified} for optimizing infinite-dimensional functionals of probability measures.\footnote{In Appendix \ref{section: math_properties} we derived different properties of the general maximin problem that could be helpful for this purpose.} Another possibility would be to use recent advances in online learning with continuous action spaces; see \cite{krichene2015hedge} and \cite{negrea2021minimax}. Second, some of the theoretical results we derived in order to control optimization error are based on constants/bounds that are conservative. For example, the Lipschitz constant for the risk function we derived in Lemma \ref{lemma:RiskFunction is Lips} could perhaps be tighter (which could decrease the number of iterations in Theorem \ref{thm:avg_prior_pointwise_LF}). Similarly, the number of iterations for the mirror-ascent algorithm is designed to work for any (Lipschitz) concave optimization problem over the simplex. Since we are only interested in solving the Bounded Normal Mean problem, it is perhaps possible to sharpen the analysis and reduce the number of iterations. Third, the least-favorable distribution in the Bounded Normal Mean problem is known to have a very particular structure: the support of the distribution is finite. The stochastic mirror ascent algorithm that we use is initialized at the uniform distribution in $\Delta^{I-1}$, and by design, the output of the algorithm places positive mass on every point (although as our simulations show, the mass associated with some of these points is numerically very close to zero). It could be useful to consider other algorithms for finding the least-favorable distribution that increase the number of support points in every iteration, as in the recent work of \cite{guggenberger2025numerical}.

We view our paper as part of a larger research program that emphasizes the importance of providing a careful analysis of the computational aspects surrounding  problems in econometrics and statistics; see, for example, the recent work of \citet{cattaneo2024onimplicitbias,cattaneo2025howmemory,cattaneo2025modifiedlossmomentum}. As mentioned by \citet{chandrasekaran2013computational}, \emph{``Although classical statistics gave little consideration to computational complexity, computational issues have come increasingly to the fore in modern high-dimensional statistics.''} There are many examples now within statistics and machine learning that illustrate the benefits of bringing \emph{computational awareness} to the forefront of these fields.\footnote{See, for illustration, the lectures of the \emph{``Computational Complexity of Statistical Inference Boot Camp''} of the Simons Institute for the Theory of Computing:  \href{https://www.youtube.com/playlist?list=PLgKuh-lKre13obZNGYdY-25ihfD0PIyJ-}{video lectures}.} Advances in econometrics in this regard have not followed the same pace, but as mentioned recently by Patrick Kline in one of his interventions at the Online Chamberlain Seminar reflecting on the future of econometrics,  \emph{``There is probably going to be a more important role for computation moving forward. I think we have not rewarded computational advances enough in the profession, and we ought to be more serious about that ... And it is important for procedures that are proposed to actually be implementable. And in some cases ... Having some guarantees on computability of different procedures is important and I think we ought to learn more computer science and we will.''}\footnote{See minute 19:58 of \href{https://www.youtube.com/watch?v=ifp_BUKl3ig}{this talk}} 

Finally, we hope that the methods outlined in this paper could make the application of statistical decision theory in econometrics more frequent. As noted by \citet{manski2021econometrics}, the \emph{``primary challenge to use of statistical decision theory is computational.''}

\doublespacing

\newpage

\setlength{\parindent}{20pt}
\begin{appendix}                
\begin{center}{\LARGE {\textsc{Appendix}}}\end{center}

\titleformat{\section}
  {\normalfont\Large\bfseries}
  {Appendix \thesection.}
  {1em}
  {}

\section{Proofs of Main Results} \label{sec: appendix_main_proofs}

\subsection{Proof of Lemma \ref{lemma:RiskFunction is Lips}} \label{subsec:Proof of Lemma 1} 
Let $\phi(\cdot; \theta)$ denote the p.d.f. of a univariate normal random variable with mean $\theta$ and unit variance. By definition 
\begin{eqnarray*}
R(d,\theta) &=& \int_{-\infty}^{\infty} (d(y)-\theta)^2 \phi(y;\theta)dy \\
&=& \int_{-\infty}^{\infty} d(y)^2 \phi(y;\theta)dy - 2 \theta \int_{-\infty}^{\infty} d(y) \phi(y;\theta)dy + \theta^2.
\end{eqnarray*}
Algebra then shows that for any decision rule $d \in \mathcal{D}$, and any $\theta,\theta' \in [-m,m]$ we have 
\begin{eqnarray*}
R(d,\theta) - R(d,\theta')  &=&  \int_{-\infty}^{\infty} d(y)^2 \left( \phi(y;\theta)- \phi(y;\theta') \right)dy \\
&-& 2 \theta \int_{-\infty}^{\infty} d(y) \left( \phi(y;\theta)- \phi(y;\theta') \right)dy \\
&+& 2(\theta'-\theta) \int_{-\infty}^{\infty} d(y) \phi(y;\theta')dy \\
&+& (\theta-\theta')(\theta+\theta'). 
\end{eqnarray*}
Consequently, 
\begin{eqnarray*}
\left | R(d,\theta) - R(d,\theta') \right|  & \leq &
3m^2\int_{-\infty}^{\infty} \left| \phi(y;\theta)- \phi(y;\theta') \right|dy + 4m |\theta-\theta'|. 
\end{eqnarray*}
Note that 
\[ \frac{1}{2} \int_{-\infty}^{\infty} \left| \phi(y;\theta)- \phi(y;\theta') \right|dy \]
is the total variation distance between two univariate normals, with variance 1 and means $\theta$ and $\theta'$, respectively. Therefore, Theorem 1.3 in \cite{devroye2018total} implies that 
\[ \frac{1}{2} \int_{-\infty}^{\infty} \left| \phi(y;\theta)- \phi(y;\theta') \right|dy \leq \frac{|\theta-\theta'|}{2}.  \]
Consequently, 
\[\left | R(d,\theta) - R(d,\theta') \right|   \leq 
\left(3m^2+ 4m \right) |\theta-\theta'|. \]

\subsection{Proof of Lemma \ref{lemma:discretized vs original} } \label{subsec:Proof of Lemma 2}

Note first that 
\[ \underline{v}(m;I):=  \max_{\pi\in\Delta^{I-1}} \inf_{d \in \mathcal{D}} r(d,\pi) \leq \underline{v}^*(m):=  \max_{\pi\in\Delta([-m,m])} \inf_{d \in \mathcal{D}} r(d,\pi),\] 
since $\Delta^{I-1} \subseteq \Delta([-m,m])$. Thus, the remaining part of the proof focuses on establishing the upper bound. 

For every $\theta \in [-m,m]$, let $N_{I}(\theta)$ denote a nearest neighbor of $\theta$ in the grid $\{ \theta_i  \}_{i=1}^{I}$. That is, $N_{I}(\theta)$ is any element that solves 
\[ \min_{i=1,\ldots, I} | \theta_i - \theta |. \]
Because the grid is equally-spaced with step $\mathbf{\Delta}(m,I)=2m/(I-1)$, the maximum distance from any $\theta \in [-m,m]$ to its nearest neighbor is given by
\[\sup_{\theta\in\Theta}\min_{i=1,\ldots, I} | \theta_i - \theta | =\frac{1}{2} \mathbf{\Delta}(m,I). \]
Let $d_{I}^*$ be a minimax estimator for the discretized problem, in the sense that
\[ \sup_{i=1,\ldots, I} R(d^*_{I},\theta_i) = \bar{v}(m,I):= \inf_{d \in \mathcal{D}} \sup_{i=1,\ldots, I} R(d,\theta_i). \]
By the Lipschitz continuity of the risk function established in Lemma \ref{lemma:RiskFunction is Lips} we have that for any $\theta \in [-m,m]$
\[ R(d_{I}^*,\theta) \leq R(d_I^*,N_{I}(\theta)) + (3m^2+4m) \frac{1}{2} \mathbf{\Delta}(m,I),  \]
where $\mathbf{\Delta}(m,I)$ is the gap between consecutive elements of the grid defined in \eqref{eqn: gap Delta}. Then, 
\begin{eqnarray*}
\sup_{\theta \in [-m,m]} R(d_I^*,\theta) &\leq& \sup_{\theta \in [-m,m]} R(d_I^*,N_{I}(\theta)) + (3m^2+4m) \frac{1}{2}\mathbf{\Delta}(m,I), \\
&=& \sup_{i=1,\ldots, I} R(d_I^*,\theta_i) + (3m^2+4m) \frac{1}{2}\mathbf{\Delta}(m,I)\\
&=& \bar{v}(m,I) + (3m^2+4m) \frac{1}{2} \mathbf{\Delta}(m,I).
\end{eqnarray*}
Let $d^*$ be the minimax estimator in the original Bounded Normal Mean problem. This means that
\begin{eqnarray*}
\sup_{\theta \in [-m,m]} R(d^*,\theta)  = \bar{v}^*(m) \leq \sup_{\theta \in [-m,m]} R(d_I^*,\theta)
&\leq& \bar{v}(m,I) + (3m^2+4m) \frac{1}{2} \mathbf{\Delta}(m,I).
\end{eqnarray*}
The minimax theorem holds for the Bounded Normal Mean problem $\bar{v}^*(m)= \underline{v}^*(m)$; see Proposition 4.19 in \cite{johnstone_2019} and also their Theorem A.5. And note also that the minimax theorem holds for the discretized problem (see Theorem 2.4.2 in \cite{blackwell1954} and also the discussion regarding S-games in Appendix C of \cite{montielolea_aradillasfernandez_blanchet_qiu_stoye_tan_2024}). Since the minimax theorem for the discretized problem holds, we have
\[  \underline{v}(m;I) = \inf_{d \in \mathcal{D}} \sup_{i=1,\ldots, I} R(d,\theta_i).\]
The result then follows. 

\subsection{Proof of Theorem \ref{thm:avg_prior_pointwise_LF}} \label{subsec:Proof of Theorem 1} 

Let $\pi^{*, I}$ be any maximizer of the discretized maximin problem, so that
\[ \underline{v}(m;I)=\inf_{d\in\mathcal{D}} r(d,\pi^{*, I}).\]
In a slight abuse of notation, for each epoch $j=1,\dots,J$, let $g_j$ denote the supergradient vector $g(\pi_j)$. Let $\tilde{g}_j$ be the estimator of the supergradient defined in Algorithm \ref{alg:SMA}. By definition of supergradient, we have that for every $\pi\in\Delta^{I-1}$:
\[\inf_{d\in\mathcal{D}} r(d,\pi_j) \ge \inf_{d\in\mathcal{D}} r(d,\pi) + g_j^\top(\pi_j-\pi).\]
Taking $\pi=\pi^{*, I}$ and rearranging yields
\[\underline{v}(m;I)-\inf_{d\in\mathcal{D}} r(d,\pi_j) \le g_j^\top(\pi^{*, I}-\pi_j).\]
Averaging over $j=1,\dots,J$ gives
\begin{equation}\label{eq:avg_gap_pf}
\underline{v}(m;I)-\frac{1}{J}\sum_{j=1}^J \inf_{d\in\mathcal{D}} r(d,\pi_j) \le \frac{1}{J}\sum_{j=1}^J g_j^\top(\pi^{*, I}-\pi_j).
\end{equation}
Decompose the right-hand side as
\begin{eqnarray*}
g_j^\top(\pi^{*, I}-\pi_j)
&=& \tilde{g}_j^\top(\pi^{*, I}-\pi_j) + (g_j-\tilde{g}_j)^\top(\pi^{*, I}-\pi_j) \\
&=& \tilde{g}_j^\top(\pi^{*, I}-\pi_j) + (\tilde{g}_j-g_j)^\top(\pi_j-\pi^{*, I}).
\end{eqnarray*}
Substituting into \eqref{eq:avg_gap_pf} yields
\begin{equation}\label{eq:two_terms_pf}
\underline{v}(m;I)-\frac{1}{J}\sum_{j=1}^J \inf_{d\in\mathcal{D}} r(d,\pi_j) \le
\frac{1}{J}\sum_{j=1}^J \tilde{g}_j^\top(\pi^{*, I}-\pi_j) + \frac{1}{J}\sum_{j=1}^J (\tilde{g}_j-g_j)^\top(\pi_j-\pi^{*, I}).
\end{equation}

By inequality~(4.10) in \citet{bubeck_2015} applied to the optimization domain $\Delta^{I-1}$,
\[\sum_{j=1}^J (-\tilde{g}_j)^\top(\pi_j-\pi) \le \frac{R^2}{\eta} + \frac{\eta}{2}\sum_{j=1}^J \|-\tilde{g}_j\|_\infty^2 \qquad \forall \pi\in\Delta^{I-1},\]
where $\eta$ is the step size used in the stochastic mirror descent routine and $R^2:=\ln(I)$. Taking $\pi=\pi^{*, I}$, using $\|-\tilde{g}_j\|_\infty=\|\tilde{g}_j\|_\infty\le M$ almost surely, and dividing by $J$ yields 
\begin{equation}\label{eq:regret_pf} \frac{1}{J}\sum_{j=1}^J \tilde{g}_j^\top(\pi^{*, I}-\pi_j) \le \frac{\ln(I)}{\eta J}+\frac{\eta M^2}{2}. 
\end{equation}

Next, Lemma \ref{lemma: prob_bound} in Appendix \ref{sec: aux_lemmas} applied with $\pi=\pi^{*, I}$ implies that for any $\delta>0$,
\begin{equation}\label{eq:stoch_pf}
\mathbb{P}\left( \frac{1}{J}\sum_{j=1}^J (\tilde{g}_j-g_j)^\top(\pi_j-\pi^{*, I})
<\delta \right) \ge 1-\exp\left(-\frac{\delta^2 J}{128m^4}\right).
\end{equation}
Thus, combining \eqref{eq:two_terms_pf} and \eqref{eq:regret_pf} gives
\[\frac{1}{J}\sum_{j=1}^J \inf_{d\in\mathcal{D}} r(d,\pi_j)
 \ge \underline{v}(m;I)-\frac{\ln(I)}{\eta J}-\frac{\eta M^2}{2}-\delta,\]
with probability at least $1-\exp\left(-\frac{\delta^2 J}{128m^4}\right)$. 
Finally, by concavity of Bayes risk (Lemma \ref{lemma:OptimizedBayesRule is concave}) and Jensen's inequality,
\[\inf_{d\in\mathcal{D}} r\left(d,\frac{1}{J}\sum_{j=1}^J \pi_j\right) \ge  \frac{1}{J}\sum_{j=1}^J \inf_{d\in\mathcal{D}} r(d,\pi_j).\]

Since the step size $\eta$ was chosen as $\eta(\epsilon)=(\epsilon/2)/M^2$ and the number of iterations, $J$, was set to $J(\epsilon)=2M^2\ln(I)/(\epsilon/2)^2$
\[\frac{\ln(I)}{\eta J}+\frac{\eta M^2}{2}=\frac{\epsilon}{4}+\frac{\epsilon}{4}=\frac{\epsilon}{2},\]
which yields
\[\inf_{d\in\mathcal{D}} r(d,\pi^\epsilon) \ge  \underline{v}(m;I)-\epsilon/2-\delta,\]
with probability at least $1-\exp\left(-\frac{\delta^2 J(\epsilon)}{128m^4}\right)$.
By Lemma \ref{lemma:discretized vs original} and the choice of $I(\epsilon)$ in the Theorem,
\[\underline{v}(m;I(\epsilon)) \ge \underline{v}^*(m)-\epsilon/2,\]
so the previous expression becomes
\[\inf_{d\in\mathcal{D}} r(d,\pi^\epsilon) \ge \underline{v}^*(m)-\epsilon - \delta.\]
To finish the proof, pick $\alpha \in (0,1)$ and set 
\[ \delta(\alpha,\epsilon) := \sqrt{ \frac{128m^4 \ln(1/\alpha)}{J(\epsilon)} }. \]
Note that 
\[ \exp\left(-\frac{\delta(\alpha,\epsilon)^2 J(\epsilon)}{128m^4}\right) = \alpha. \]
Algebra shows that 
\begin{equation*}
\delta(\alpha,\epsilon) \leq \frac{\epsilon}{2} \sqrt{\frac{128m^4 \ln(1/\alpha)}{2M^2 \ln(I(\epsilon))}} =  \frac{8m^2}{M} \frac{\epsilon}{2} \sqrt{\frac{\ln(1/\alpha)} {\ln(I(\epsilon))}} = \epsilon \sqrt{\frac{\ln(1/\alpha)} {\ln(I(\epsilon))}}, 
\end{equation*}
where the last equality uses the definition of $M:=4m^2$. We conclude that, with probability at least $1-\alpha$, 
\[\inf_{d\in\mathcal{D}} r(d,\pi^\epsilon) \ge \underline{v}^*(m)-\epsilon \left(1 + \sqrt{\frac{\ln(1/\alpha)} {\ln(I(\epsilon))}} \right).\]

\subsection{Proof of Theorem \ref{thm:convergence of pi_n}} \label{subsec:Proof of Theorem 2} 

Lemma \ref{lemma:OptimizedBayesRisk is Lips} shows that the map
\[f: \Delta([-m,m]) \to \mathbb{R}_+, \quad
f(\pi):=\inf_{d\in\mathcal{D}} r(d,\pi),\]
is Lipschitz continuous with respect to the 1-Wasserstein distance $W(\cdot,\cdot)$. Fix $\xi>0$ and define the set
\[S(\xi):=\{\pi\in\Delta([-m,m]): W(\pi,\pi^*) > \xi\}.\]
Since $\pi^*$ is the unique least-favorable distribution (Proposition 4.19 in \cite{johnstone_2019}), it is the unique maximizer of $f(\cdot)$.  Therefore
\[\sup_{\pi\in S(\xi)} f(\pi) < f(\pi^*)=\underline{v}^*(m).\]
The inequality above implies the existence of a scalar $\nu(\xi)>0$ such that
\[\sup_{\pi\in S(\xi)} f(\pi) < \underline{v}^*(m)-\nu(\xi).\]
Therefore,  for any $\xi>0$, 
\begin{eqnarray*}
P_n \left( W(\pi_n,\pi^*)>\xi \right) &\leq& P_n \left( f(\pi_n) < \underline{v}^*(m)-\nu(\xi)   \right) \\
&=& 1-  P_n \left( f(\pi_n) \geq \underline{v}^*(m)-\nu(\xi)   \right) \\
&=& 1-  P_n \left( \underline{v}^*(m) \geq \inf_{d\in\mathcal{D}} r(d,\pi_n) \geq \underline{v}^*(m)-\nu(\xi)   \right).
\end{eqnarray*}
It follows from Theorem \ref{thm:avg_prior_pointwise_LF} and the definition of $\pi_n$ that for any $\alpha \in (0,1)$
\[ P_n \left( \underline{v}^*(m) \geq  \inf_{d\in\mathcal{D}} r(d,\pi_n) \ge \underline{v}^*(m)-\epsilon_n\left( 1 + \sqrt{ \frac{\ln(1/\alpha)}{\ln(I(\epsilon_n))} } \right) \right) \geq 1-\alpha. \] 
Since $\epsilon_n \rightarrow 0$ and $I(\epsilon_n) \rightarrow \infty$, we have that there exists $n(\xi,\alpha)$ large enough such that for $n \geq n(\xi,\alpha)$:
\[ \epsilon_n\left( 1 + \sqrt{ \frac{\ln(1/\alpha)}{\ln(I(\epsilon_n))} } \right) \leq \nu(\xi).  \]
Therefore, for $n \geq n(\xi,\alpha)$ 
\[P_n \left( \underline{v}^*(m) \geq \inf_{d\in\mathcal{D}} r(d,\pi_n) \geq \underline{v}^*(m)-\nu(\xi)   \right) \geq 1-\alpha,\]
and consequently, 
\[P_n \left( W(\pi_n,\pi^*)>\xi \right) \leq \alpha.\]

\subsection{Proof of Theorem \ref{thm:minimax sequence} } \label{subsec:Proof of Theorem 3}  

Let $\pi^*$ be the least-favorable distribution of the Bounded Normal Mean problem. Algebra shows that, for any $\theta\in[-m,m]$, 
\begin{align*}
|R(d_n,\theta)-R(d_{\pi^*},\theta)|
&= \left | \int_{-\infty}^{\infty}
\Big((d_n(y)-\theta)^2-(d_{\pi^*}(y)-\theta)^2\Big)\phi(y;\theta) dy \right | \\
&\le \int_{-\infty}^{\infty}
\left|(d_n(y)-\theta)^2-(d_{\pi^*}(y)-\theta)^2 \right|\phi(y;\theta) dy \\
&= \int_{-\infty}^{\infty}
\left|d_n(y)^2-2d_n(y) \theta-d_{\pi^*}(y)^2+2d_{\pi^*}(y) \theta \right|\phi(y;\theta) dy \\
&= \int_{-\infty}^{\infty}
|d_n(y)-d_{\pi^*}(y)| \cdot \underbrace{|d_n(y)+d_{\pi^*}(y) - 2 \theta|}_{\le 4 m} \phi(y;\theta) dy \\
&\le 4m \int_{-\infty}^{\infty}|d_n(y)-d_{\pi^*}(y)|\phi(y;\theta) dy.
\end{align*}
Theorem \ref{thm:convergence of pi_n} showed that $W(\pi_n,\pi^*) \overset{p}{\rightarrow}0$. Using Lemma \ref{lemma: uniform risk bound} and the fact that $d_n:=d_{\pi_n}$, we have that  
\[ \sup_{\theta \in [-m,m]} \int_{-\infty}^{\infty}|d_n(y)-d_{\pi^*}(y)|\phi(y;\theta) dy \overset{p}{\rightarrow} 0,   \]
which implies that 
\begin{equation} \label{eq:auxiliary Theorem 3}
\sup_{\theta \in [-m,m]}|R(d_n,\theta)-R(d_{\pi^*},\theta)| \overset{p}{\rightarrow} 0. 
\end{equation}
To finalize the proof note that
\begin{eqnarray*}
\sup_{\theta \in [-m,m]} R(d_n,\theta) &=& \sup_{\theta \in [-m,m]} R(d_n,\theta)-R(d_{\pi^*},\theta) + R(d_{\pi^*},\theta) \\
&\leq &\sup_{\theta \in [-m,m]} \left | R(d_n,\theta)-R(d_{\pi^*},\theta) \right | + \sup_{\theta \in [-m,m]} R(d_{\pi^*},\theta)\\
&=& \sup_{\theta \in [-m,m]} \left | R(d_n,\theta)-R(d_{\pi^*},\theta) \right | + \bar{v}^*(m),
\end{eqnarray*}
where the last line follows from the fact that $d_{\pi^*}$ is minimax, by Proposition 4.19 in \cite{johnstone_2019}. Therefore, for any $\varepsilon>0$
\[P_n \left( \sup_{\theta \in [-m,m]} R(d_n,\theta) \leq \bar{v}^*(m) + \varepsilon \right) \geq P_n \left( \sup_{\theta \in [-m,m]} \left | R(d_n,\theta)-R(d_{\pi^*},\theta) \right | \leq \varepsilon \right).   \]
Equation \eqref{eq:auxiliary Theorem 3} implies that $\{d_n\}_{n=1}^{\infty}$ is a stochastic minimax sequence.

\subsection{Proof of Theorem \ref{thm:rule_convergence} } \label{subsec:Proof of Theorem 4} 

Take any constant $c > 0$ such that $S \subseteq [-c,c]$. By Lemma \ref{lemma:sequential continuity}, for any $y \in \mathbb{R}$
\[|d_n(y) - d_{\pi^*}(y)| \le \frac{C(y)}{b(y,\pi_n)} W(\pi_n,\pi^*),\]

where $C(y):=C_1(y)+mC_2(y)$ and $b(y,\pi_n):=  \int_{[-m,m]} \phi(y; \theta) d \pi_n$. Since $C(\cdot)$ is continuous and $S$ is compact, $\sup_{y \in S} C(y) =: A < \infty$. 

Using the triangle inequality, for any $y \in [-c,c]$ and for any $\theta \in [-m,m]$
\[ \phi(y; \theta) = \frac{1}{\sqrt{2\pi}} \exp \left( -\frac{(y-\theta)^2}{2} \right) \ge \frac{1}{\sqrt{2\pi}} \exp \left( -\frac{(c+m)^2}{2} \right) =: B >0.\]
Since this holds for every $\theta \in [-m,m]$, integrating over any $\pi \in \Delta([-m,m])$ yields
\[b(y,\pi) =  \int_{[-m,m]} \phi(y; \theta)\, d\pi(\theta) \ge B,\]
uniformly over all $y \in S$. In particular, $b(y,\pi_n) \ge B$ for all $y \in S$.

Combining,
\[|d_n(y) - d_{\pi^*}(y)| \le \frac{A}{B} W(\pi_n,\pi^*).\]

Therefore, by Theorem \ref{thm:convergence of pi_n}, for any $\xi>0$
\[P_n \left(  \sup_{y \in S} |d_n(y) - d_{\pi^*}(y)| > \xi \right) \le P_n \left( W(\pi_n,\pi^*) > \frac{B}{A} \xi \right) \to 0. \]

\section{Main Auxiliary Lemmas} \label{sec: aux_lemmas}

\subsection{Proof of Lemma \ref{lemma: prob_bound}}

The proof of Theorem \ref{thm:avg_prior_pointwise_LF} uses the following auxiliary lemma. 

\begin{lemma}[Probability bound]\label{lemma: prob_bound} Let $\tilde{g}_j$ and $g_j$ be defined as in the proof of Theorem \ref{thm:avg_prior_pointwise_LF}. For any $\delta > 0$ and any $\pi \in \Delta^{I-1}$,
\[\mathbb{P}\left(
  \frac{1}{J}\sum_{j=1}^{J}(\tilde{g}_j - g_j)^\top(\pi_j - \pi) < \delta
\right) \ge 1 - \exp \left(-\frac{J\delta^2}{128m^4}\right).\]
\end{lemma}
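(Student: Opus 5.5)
The plan is to treat the sum as a martingale and apply the Azuma--Hoeffding inequality. Let $\mathcal{F}_{j}$ denote the $\sigma$-algebra generated by the Monte Carlo draws $\{y_{i,k}: i=1,\ldots,I,\ k\le j\}$ used in the first $j$ epochs of Algorithm \ref{alg:SMA}. By construction, the weights $w_{j-1}$, the prior $\pi_j$, the Bayes rule $d_{\pi_j}$ and hence the exact supergradient $g_j=g(\pi_j)$ are all $\mathcal{F}_{j-1}$-measurable. Define
\[ X_j := (\tilde g_j - g_j)^\top(\pi_j-\pi), \qquad j=1,\ldots,J, \]
with $\pi\in\Delta^{I-1}$ fixed and deterministic.

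\textbf{Step 1: martingale differences.} Conditional on $\mathcal{F}_{j-1}$, each $y_{i,j}\sim\mathcal{N}(\theta_i,1)$ is drawn independently of the past. Therefore
\[ \mathbb{E}[\tilde g_{i,j}\mid\mathcal{F}_{j-1}] = \mathbb{E}_{\theta_i}\bigl[(d_{\pi_j}(Y)-\theta_i)^2\bigr] = R(d_{\pi_j},\theta_i) = g_{i,j}. \]
Since $\pi_j-\pi$ is $\mathcal{F}_{j-1}$-measurable, this gives $\mathbb{E}[X_j\mid\mathcal{F}_{j-1}]=0$. Hence $S_J := \sum_{j=1}^J X_j$ is a martingale with respect to $\{\mathcal{F}_j\}$.

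\textbf{Step 2: bounded increments.} Both $\tilde g_{i,j}$ and $g_{i,j}$ lie in $[0,4m^2]$, because $d_{\pi_j}$ and $\theta_i$ both take values in $[-m,m]$. So $\|\tilde g_j-g_j\|_\infty\le 4m^2$. By H\"older's inequality,
\[ |X_j| \le \|\tilde g_j-g_j\|_\infty\,\|\pi_j-\pi\|_1 \le 4m^2\cdot 2 = 8m^2 =: c. \]

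\textbf{Step 3: Azuma--Hoeffding.} For increments bounded by $c$,
\[ \mathbb{P}(S_J\ge J\delta)\le \exp\!\left(-\frac{(J\delta)^2}{2Jc^2}\right) = \exp\!\left(-\frac{J\delta^2}{128m^4}\right). \]
Taking complements gives exactly the stated bound, since the event $\{S_J/J<\delta\}$ is the complement of $\{S_J\ge J\delta\}$. The constant $2c^2=128m^4$ matches the one in the lemma, which confirms that the crude bound $c=8m^2$ is what the authors intend.

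\textbf{Main obstacle.} The only delicate point is the filtration argument in Step 1: the unbiasedness of $\tilde g_j$ must be conditional, because $\pi_j$ depends on earlier draws. For this to hold, the fresh draws at epoch $j$ must be independent of $\mathcal{F}_{j-1}$, and $d_{\pi_j}$ must be a deterministic function of $\mathcal{F}_{j-1}$. Both follow directly from how Algorithm \ref{alg:SMA} is written.
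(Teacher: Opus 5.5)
Your proposal is correct and follows essentially the same route as the paper. The paper uses the same H\"older bound $|(\tilde g_j-g_j)^\top(\pi_j-\pi)|\le 8m^2$ and the same natural filtration for the conditional mean-zero property, and then proves the Azuma--Hoeffding bound inline (conditional Hoeffding's lemma, iterated expectations, and a Chernoff bound optimized at $\gamma^*=\delta J/L^2$ with $L=8m^2$), whereas you invoke Azuma--Hoeffding directly and obtain the identical constant $128m^4$.
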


\begin{proof}
For each $j = 1,\dots,J$, define $x_j := \pi_j - \pi$ and
\[ \Delta_j := \tilde{g}_j - g_j = \bigl(
(d_{\pi_j}(Y_{i,j}) - \theta_i)^2 - \mathbb{E}_{\theta_i}\bigl[(d_{\pi_j}(Y_{i,j}) - \theta_i)^2\bigr] \bigr)_{i=1}^{I},
\]
where $Y_{i,j} \sim \mathcal{N}(\theta_i,1)$ (independently of $\pi_j$. Since $d_{\pi_j}(Y_{i,j})\in[-m,m]$ and $\theta_i\in[-m,m]$, we have $(d_{\pi_j}(Y_{i,j}) - \theta_i)^2\in[0,4m^2]$ for every $i \in \{1,\ldots, I\}$. Thus, each component of $\Delta_j=\tilde{g}_j-g_j$ lies in $[-4m^2,4m^2]$, and so $\|\Delta_j\|_\infty\le 4m^2$.
Let $L:= 8 m^2$ and let $\{\mathcal{F}_j\}_{j \ge 0}$ be the natural filtration, generated by $\{(Y_{1,j}, \ldots, Y_{I,j})\}_{j \geq 0}$). Note that under the natural filtration, $\mathbb E[\Delta_j\mid\mathcal{F}_{j-1}]=0$. Moreover, since $\|x_j\|_1 \le 2$ and $\|\Delta_j\|_\infty \le 4m^2$, we have
\[
|\langle \Delta_j, x_j\rangle|
\le \|x_j\|_1 \|\Delta_j\|_\infty
\le 8m^2.
\]
Hence, conditional on $\mathcal{F}_{j-1}$, the random variable $\langle \Delta_j, x_j\rangle$
is mean zero and takes values on a subset of the interval $[-8m^2,8m^2]$.
Hoeffding's lemma (Lemma 2.2 in \citet{boucheron2013}) therefore yields, for every $\gamma > 0$,
\begin{equation}\label{eq:mgf_step}
\mathbb{E}\bigl[ e^{\gamma\langle\Delta_j,x_j\rangle} \mid \mathcal{F}_{j-1} \bigr] \le \exp \left(\frac{\gamma^2(16m^2)^2}{8}\right) = \exp \left(\frac{\gamma^2 L^2}{2}\right).
\end{equation}

We now bound the moment generating function of the average $\frac{1}{J}\sum_{j=1}^{J} \langle \Delta_j, x_j \rangle$ by the law of iterated expectations. Conditioning on $\mathcal{F}_{J-1}$ and applying~\eqref{eq:mgf_step},
\begin{align*}
\mathbb{E} \left[ \exp \left(\frac{\gamma}{J}\sum_{j=1}^{J}\langle\Delta_j, x_j\rangle\right)
\right]
&= \mathbb{E}\left[ \exp \left(\frac{\gamma}{J}\sum_{j=1}^{J-1}\langle\Delta_j, x_j\rangle\right) \cdot
\mathbb{E}\left[ \exp\left(\frac{\gamma}{J}\langle\Delta_J, x_J\rangle\right) \mid \mathcal{F}_{J-1} \right] \right] \\
&\le \exp \left(\frac{\gamma^2 L^2}{2J^2}\right) \cdot
\mathbb{E}\left[ \exp \left(\frac{\gamma}{J}\sum_{j=1}^{J-1}\langle\Delta_j, x_j\rangle\right) \right].
\end{align*}

Repeating the argument sequentially from $j = J$ to $j = 1$ gives
\[ \mathbb{E} \left[ \exp\left(\frac{\gamma}{J}\sum_{j=1}^{J}\langle\Delta_j, x_j\rangle\right)
\right] \le \prod_{j=1}^{J}\exp\left(\frac{\gamma^2 L^2}{2J^2}\right) = \exp\left(\frac{\gamma^2 L^2}{2J}\right).\]

By Markov's inequality, for any $\gamma > 0$ and $\delta > 0$,
\begin{align*}
\mathbb{P} \left( \frac{1}{J} \sum_{j=1}^{J} \langle\Delta_j, x_j\rangle \ge \delta \right) &\le \exp\left(\frac{\gamma^2 L^2}{2J} - \gamma\delta\right) \\
\implies \mathbb{P}\left( \frac{1}{J}\sum_{j=1}^{J}(\tilde{g}_j - g_j)^\top(\pi_j - \pi)  < \delta \right) & \ge 1 - \exp \left(\frac{\gamma^2 L^2}{2J} - \gamma\delta\right). 
\end{align*}

Minimizing the exponent over $\gamma > 0$ yields $\gamma^* = \delta J / L^2$, and substituting into the expression above gives the resulting probability bound
\[ \exp \left( \frac{\delta^2 J}{2L^2} - \frac{\delta^2 J}{L^2} \right) = \exp\left(-\frac{J\delta^2}{2L^2}\right)
= \exp\left(-\frac{J\delta^2}{128m^4}\right).
\] 
\end{proof}

\subsection{Proof of Lemma \ref{lemma: uniform risk bound}} \label{subsec: uniform risk bound}

The proof of Theorem \ref{thm:minimax sequence} uses the following auxiliary Lemma. 

\begin{lemma} \label{lemma: uniform risk bound}
Let $\{\pi_n\}_{n=1}^{\infty}$ be a stochastic sequence of probability distributions in $\Delta([-m,m])$ such that $W(\pi_n,\pi) \overset{p}{\rightarrow}0$. Let $d_n:=d_{\pi_n}$ denote the Bayes rule associated with $\pi_n$. Then, for any $\xi>0$
\[ P_n \left( \sup_{\theta \in [-m,m]}\int_{-\infty}^{\infty}|d_n(y)-d_{\pi}(y)|\phi(y;\theta) dy > \xi \right) \rightarrow 0 \]
as $n \rightarrow \infty$. 
\end{lemma}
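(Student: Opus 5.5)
The plan is to split the integral into a compact region, where the pointwise bound from Lemma \ref{lemma:sequential continuity} (already used in the proof of Theorem \ref{thm:rule_convergence}) controls $|d_n(y)-d_\pi(y)|$ by a multiple of $W(\pi_n,\pi)$, and a tail region, where the integrand is bounded by $2m$ and the normal density carries negligible mass uniformly in $\theta\in[-m,m]$. Since both $d_n$ and $d_\pi$ take values in $[-m,m]$, we have $|d_n(y)-d_\pi(y)|\le 2m$ for all $y$. This uniform bound handles the tails, so no control of the Bayes rules far out is needed.

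\textbf{Step 1 (tail truncation).} Fix $\xi>0$ and choose $c>0$ such that
\[ 2m\sup_{\theta\in[-m,m]}\int_{|y|>c}\phi(y;\theta)\,dy \le 2m\cdot 2\,\Phi(-(c-m)) \le \xi/2, \]
where $\Phi$ is the standard normal c.d.f. This is possible because, for $|\theta|\le m$, the event $|y|>c$ forces $|y-\theta|>c-m$. The choice of $c$ depends only on $m$ and $\xi$, not on $n$ or $\theta$.

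\textbf{Step 2 (compact part).} On $[-c,c]$ I repeat the argument of the proof of Theorem \ref{thm:rule_convergence}. For every $y\in[-c,c]$, Lemma \ref{lemma:sequential continuity} gives
\[ |d_n(y)-d_\pi(y)|\le \frac{C(y)}{b(y,\pi_n)}\,W(\pi_n,\pi) \le \frac{A}{B}\,W(\pi_n,\pi), \]
with $A=\sup_{|y|\le c}C(y)<\infty$ (by continuity of $C$) and $B=\phi(c+m;0)>0$. The lower bound $b(y,\pi_n)\ge B$ holds for every prior, so it is uniform over the random $\pi_n$. Since $\int\phi(y;\theta)\,dy\le 1$, the compact contribution is at most $(A/B)\,W(\pi_n,\pi)$, uniformly in $\theta$.

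\textbf{Step 3 (conclusion).} Combining the two steps,
\[ \sup_{\theta\in[-m,m]}\int|d_n-d_\pi|\,\phi(y;\theta)\,dy\le \frac{A}{B}\,W(\pi_n,\pi)+\frac{\xi}{2}. \]
Hence the event in the statement is contained in $\{W(\pi_n,\pi)>B\xi/(2A)\}$, whose $P_n$-probability tends to zero by hypothesis.

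\textbf{Main obstacle.} The only delicate point is Step 2: it relies on Lemma \ref{lemma:sequential continuity} supplying a pointwise bound with a constant $C(y)$ that is continuous (or at least locally bounded) in $y$, and on the denominator being the marginal density under $\pi_n$ rather than under $\pi$. If instead the bound had $b(y,\pi)$ in the denominator, the same uniform lower bound $B$ applies, so the argument goes through either way. Measurability of the supremum over $\theta$ is immaterial, since it is dominated by a measurable function of $W(\pi_n,\pi)$.
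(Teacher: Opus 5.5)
Your proof is correct and follows essentially the same route as the paper. You truncate the integral, control the bounded region via the pointwise bound $|d_{\pi_n}(y)-d_\pi(y)|\le C(y)\,W(\pi_n,\pi)/b(y,\pi_n)$ from Lemma \ref{lemma:sequential continuity} together with a prior-free lower bound on the marginal density, and handle the tails using $|d_n-d_\pi|\le 2m$. The only differences are cosmetic. You use a fixed window $[-c,c]$ and bound $C(y)$ by its supremum there, whereas the paper uses the $\theta$-centered window $[\theta-z_{1-\alpha/2},\theta+z_{1-\alpha/2}]$ and integrates $C(y)$ against $\phi(y;\theta)$. Your version also makes explicit the final choice of truncation level relative to $\xi$, which the paper leaves implicit in ``since $\alpha$ is arbitrary.''
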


\begin{proof} The proof of Lemma \ref{lemma:sequential continuity} showed that for any $y \in \mathbb{R}$ and any sequence $\{\pi_n\}_{n=1}^{\infty}$
\begin{eqnarray*}
|d_{\pi_n}(y)-d_{\pi}(y) | &\leq& \frac{1}{b(y,\pi_n)} \left( C_1(y)  +  m C_2(y) \right) W(\pi_n,\pi),
\end{eqnarray*}
where 
\[ C_1(y):=\frac{1+m(|y|+m)}{\sqrt{2 \pi}} , \quad C_2(y):=\frac{|y|+m}{\sqrt{2 \pi}}, \quad b(y,\pi_n):= \int_{[-m,m]} \phi(y;\theta)d\pi_n. \]
Let $C(y):=C_1(y)+mC_2(y)$. Then, we have 
\begin{equation} \label{eq:auxiliary bound d_{pi} diff} 
|d_{\pi_n}(y)-d_{\pi}(y) | \leq \frac{C(y)}{ b(y,\pi_n)}  W(\pi_n,\pi). 
\end{equation}
Fix $\alpha \in (0,1)$ and let $z_{1-\alpha/2}$ be the $1-(\alpha/2)$ quantile of a standard normal. For any $\theta \in [-m,m]$ we have
\begin{eqnarray}
\int_{-\infty}^{\infty}|d_n(y)-d_{\pi}(y)|\phi(y;\theta) dy &=& \int_{-\infty}^{\theta-z_{1-\alpha/2}} |d_n(y)-d_{\pi}(y)|\phi(y;\theta) dy \nonumber\\
&+& \int_{\theta-z_{1-\alpha/2}}^{\theta+z_{1-\alpha/2}} |d_n(y)-d_{\pi}(y)|\phi(y;\theta) dy \nonumber \\
&+& \int_{\theta+z_{1-\alpha/2}}^{\infty} |d_n(y)-d_{\pi}(y)|\phi(y;\theta) dy \nonumber\\
& \leq & \int_{\theta-z_{1-\alpha/2}}^{\theta+z_{1-\alpha/2}} |d_n(y)-d_{\pi}(y)|\phi(y;\theta) dy + 2m \alpha \label{eq:auxiliary bound d_{pi} diff 2}
\end{eqnarray}
where the last inequality follows from the fact that $|d_n(y)-d_\pi(y)|$ is bounded by $2m$. Note also that for any $y \in [\theta-z_{1-\alpha/2},\theta+z_{1-\alpha/2}]$ and any $\pi \in \Delta([-m,m])$ we have 
\[  b(y,\pi) \geq C_3(\alpha):= \frac{1}{\sqrt{2\pi}}\exp\left( - \frac{1}{2} (2m+z_{1-\alpha/2})^2 \right).  \]
Therefore, \eqref{eq:auxiliary bound d_{pi} diff} and \eqref{eq:auxiliary bound d_{pi} diff 2} imply
\begin{eqnarray*}
\int_{-\infty}^{\infty}|d_n(y)-d_{\pi}(y)|\phi(y;\theta) dy &\leq&  2m\alpha   +  \frac{1}{ C_3(\alpha)}  W(\pi_n,\pi) \int_{-\infty}^{\infty} C(y) \phi(y;\theta) dy.
\end{eqnarray*}
Since $\alpha$ is arbitrary, $W(\pi_n,\pi) \overset{p}{\rightarrow}0$, and 
\[ C:= \sup_{\theta \in [-m,m]} \int_{-\infty}^{\infty} C(y) \phi(y;\theta) dy < \infty, \]
we conclude that for any $\xi>0$
\[ P_n \left( \sup_{\theta \in [-m,m]}\int_{-\infty}^{\infty}|d_n(y)-d_{\pi}(y)|\phi(y;\theta) dy > \xi \right) \rightarrow 0 \]
as $n \rightarrow \infty$. 
\end{proof} 

\subsection{Proof of Lemma \ref{lem: stochastic_minimax_sec_theta}} \label{subsec: stochastic_minimax_sec_theta}

The following lemma provides a formal statement of the stochastic sequence of invariant estimators result discussed in Section \ref{subsec: lp_var_minimax_combination}.

\begin{lemma} \label{lem: stochastic_minimax_sec_theta}
    Fix $m > 0$ and horizon $h$. Let $\{d_n\}_{n=1}^\infty$ be a stochastic minimax sequence for the Bounded Normal Mean problem:
    \[Y \sim \mathcal N(\theta, 1), \quad \theta \in [-m,m]. \]
    For each $n$, define an invariant estimator as in Equation \eqref{eq:minimax_estimator_final_form}
    \[\delta_{h,n} (\hat{\theta}_{\textrm{LP}},\Delta) = \hat{\theta}_{\textrm{VAR}} - \sigma_{\Delta} d_n(\Delta).\]
    Then $\{\delta_{h,n}\}_{n=1}^\infty$ is a stochastic minimax sequence for the LP--VAR problem over the class of invariant estimators. That is, for any $\xi > 0$,
    \[P_n \left( \sup_{\theta_h, |b| \leq m \cdot \sigma_{\Delta}} \mathcal{R}(\delta_{h,n}, \theta_h, b) > \bar{v}^*_h (m; \Sigma) + \xi \right)  \to 0, \text{ as } n \to \infty,\]
    where 
    \[\bar{v}^*_h (m; \Sigma) := \sigma_{\mathrm{VAR}}^2 + \sigma_\Delta^2 \bar{v}^*(m).\]
\end{lemma}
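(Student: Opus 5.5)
The plan is to reduce everything to the risk identity \eqref{eq:risk_invariant_estimators}, which expresses the risk of any invariant estimator through a Bounded Normal Mean risk. First I would check that $\delta_{h,n}$ is invariant in the sense of \eqref{eq:invariance}. Writing $\hat{\theta}_{\textrm{VAR}}=\hat{\theta}_{\textrm{LP}}+\sigma_\Delta\Delta$, we get
\[\delta_{h,n}(\hat{\theta}_{\textrm{LP}},\Delta)=\hat{\theta}_{\textrm{LP}}+\sigma_\Delta\Delta-\sigma_\Delta d_n(\Delta).\]
Shifting $\hat{\theta}_{\textrm{LP}}$ by $s'$ with $\Delta$ held fixed therefore shifts $\delta_{h,n}$ by $s'$. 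This holds for every realization of the (random) rule $d_n$, because the algorithmic randomness is independent of the data.

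Next I would compute the risk. We have $\delta_{h,n}(0,\Delta)=\sigma_\Delta\Delta-\sigma_\Delta d_n(\Delta)$, so
\[\Delta-\delta_{h,n}(0,\Delta)/\sigma_\Delta=d_n(\Delta).\]
Plugging this into \eqref{eq:risk_invariant_estimators} gives
\[\mathcal{R}(\delta_{h,n},\theta_h,b)=\sigma^2_{\textrm{VAR}}+\sigma^2_\Delta\,\mathbb{E}_{\mu}\big[(d_n(\Delta)-\mu)^2\big]=\sigma^2_{\textrm{VAR}}+\sigma^2_\Delta R(d_n,\mu),\]
where $\mu:=b/\sigma_\Delta$ and $\Delta\sim\mathcal{N}(\mu,1)$. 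The constraint $|b|\le m\sigma_\Delta$ is exactly $\mu\in[-m,m]$, and $\theta_h$ does not appear. Hence, conditional on the realized $d_n$,
\[\sup_{\theta_h,|b|\le m\sigma_\Delta}\mathcal{R}(\delta_{h,n},\theta_h,b)=\sigma^2_{\textrm{VAR}}+\sigma^2_\Delta\sup_{\mu\in[-m,m]}R(d_n,\mu).\]
This requires $\sigma_\Delta>0$, which holds because $\sigma^2_{\textrm{LP}}>\sigma^2_{\textrm{VAR}}$.

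Given this identity, the event in the statement is
\[\Big\{\sup_\mu R(d_n,\mu)>\bar v^*(m)+\xi/\sigma_\Delta^2\Big\}.\]
Its $P_n$-probability tends to zero by Definition \ref{def:stochastic_minimax_sequence} applied with $\varepsilon=\xi/\sigma^2_\Delta$, which finishes the argument.

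Two points need care. The first is that the derivation of \eqref{eq:risk_invariant_estimators} must apply with the variance of $V$ being $\sigma^2_{\textrm{VAR}}$. This follows from $\rho=\sigma_{\textrm{VAR}}/\sigma_{\textrm{LP}}$, since then $\sigma^2_{\textrm{LP}}-\sigma^2_\Delta=\sigma^2_{\textrm{VAR}}$. The cross term also vanishes because $V$ is independent of $\Delta$ and has mean zero. The second point is measurability of the supremum as a function of the algorithm's randomness. The paper already treats this implicitly in Definition \ref{def:stochastic_minimax_sequence}, and the event here coincides with the event there, so it introduces nothing new.

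The main obstacle is therefore only bookkeeping: confirming that the Bounded Normal Mean action space restriction $d_n\in[-m,m]$ is compatible with the invariant class (it is, since $\hat b$ is unrestricted), and that the identity holds pathwise in the algorithmic randomness.
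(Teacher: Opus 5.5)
Your proposal is correct and follows the paper's proof essentially step for step. Like the paper, you rewrite $\delta_{h,n}$ using $\hat{\theta}_{\textrm{VAR}}=\hat{\theta}_{\textrm{LP}}+\sigma_\Delta\Delta$, substitute $\delta_{h,n}(0,\Delta)$ into the invariant-risk identity to get $\sigma^2_{\textrm{VAR}}+\sigma^2_\Delta R(d_n,b/\sigma_\Delta)$, take the supremum over $|b|\le m\sigma_\Delta$, and invoke the stochastic minimax sequence definition with $\varepsilon=\xi/\sigma_\Delta^2$; your additional checks (invariance of $\delta_{h,n}$, $\sigma_\Delta>0$, pathwise validity in the algorithmic randomness) are harmless details the paper leaves implicit.
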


\begin{proof}
    Let $\theta := \frac{b}{\sigma_\Delta}$. By the maintained bound $|b| \le m \cdot \sigma_\Delta$, we have $\theta\in[-m,m]$.

By definition of $\Delta$,
\[\hat{\theta}_{\textrm{VAR}} = \hat{\theta}_{\textrm{LP}}+\sigma_\Delta \Delta.\]

Therefore,
\[\delta_{h,n}(\hat{\theta}_{\textrm{LP}},\Delta) = \hat{\theta}_{\textrm{LP}} + \sigma_\Delta \Delta - \sigma_\Delta d_n(\Delta).\]
Thus, in the notation of invariant estimator,
\[ \delta_{h,n} (0,\Delta) =\sigma_\Delta \Delta-\sigma_\Delta d_n(\Delta).\]

Using Equation \eqref{eq:risk_invariant_estimators}, we obtain
\begin{align*}
\mathcal{R}(\delta_{h,n},\theta_h,b) &= \sigma_{\textrm{VAR}}^2 + \sigma_\Delta^2 \mathbb E_{b / \sigma_\Delta} \left[ \left( \left[ \Delta- \frac{\delta_{h,n} (0,\Delta)}{\sigma_\Delta} \right]  - \frac{b}{\sigma_\Delta} \right)^2 \right] \\
&= \sigma_{\textrm{VAR}}^2 + \sigma_\Delta^2 \mathbb{E}_{\theta} \left[ \left( d_n(\Delta)-\theta \right)^2 \right].
\end{align*}

The second equality follows because
\[\Delta - \frac{\delta_{h,n} (0,\Delta)}{\sigma_\Delta} = \Delta - \frac{\sigma_\Delta\Delta -\sigma_\Delta d_n(\Delta)}{\sigma_\Delta} = d_n(\Delta).\]

Since $\Delta \sim \mathcal{N}(\theta, 1)$, the last expectation is exactly the risk of $d_n$ in the Bounded Normal Mean problem:
\[ \mathbb{E}_{\theta} \left[ \left(d_n(\Delta)-\theta\right)^2 \right] = R(d_n,\theta). \]
Hence,
\[ \mathcal{R}(\delta_{h,n},\theta_h,b) = \sigma_{\textrm{VAR}}^2+\sigma_\Delta^2R(d_n,\theta).\]

Taking supremum over $\theta_h$ and $|b| \le m\sigma_\Delta$ gives
\[ \sup_{\theta_h,\ |b|\le m\sigma_\Delta} \mathcal{R}(\delta_{h,n}, \theta_h, b) = \sigma_{\textrm{VAR}}^2 + \sigma_\Delta^2 \sup_{\theta \in [-m,m]}R(d_n, \theta), \]
using the fact that the risk of invariant estimators does not depend on $\theta_h$ after the reduction.

Therefore, for any $\xi>0$,
\begin{align*}
& P_n \left( \sup_{\theta_h, |b| \le m \cdot \sigma_\Delta} \mathcal{R}(\delta_{h,n}, \theta_h, b) >
\bar{v}_h^*(m;\Sigma) + \xi \right) \\
&=P_n \left(\sigma_{\textrm{VAR}}^2 + \sigma_\Delta^2 \sup_{\theta \in [-m,m]}R(d_n, \theta) > \sigma_{\textrm{VAR}}^2 + \sigma_\Delta^2\bar{v}^*(m) + \xi \right) \\
&= P_n \left(\sup_{\theta \in [-m,m]} R(d_n,\theta) > \bar{v}^*(m)+\frac{\xi}{\sigma_\Delta^2}\right).
\end{align*}

Since $\{d_n\}_{n=1}^{\infty}$ is a stochastic minimax sequence for the Bounded
Normal Mean problem, the last probability converges to zero. Hence
\[P_n \left(\sup_{\theta_h, |b| \le m\sigma_\Delta} \mathcal{R}(\delta_{h,n}, \theta_h, b) > \bar{v}_h^*(m;\Sigma)+\xi \right) \to 0. \]
\end{proof}

\section{Mathematical Properties of the Maximin Problem} \label{section: math_properties}
In this section, we focus on the problem of finding a least-favorable distribution $\pi^*$; that is, a distribution $\pi^*$ that solves \eqref{equation:maximin}. To this purpose, it will be convenient to define the function $f: \Delta([-m,m]) \rightarrow \mathbb{R}_{+}$ given by
\begin{equation}
f(\pi) := \inf_{d \in \mathcal{D}} r(d,\pi) = \inf_{d \in \mathcal{D}} \int_{[-m,m]} R(d,\theta) d\pi .
\end{equation}
Lemma \ref{lemma:OptimizedBayesRisk is Lips} below shows that the function $f(\cdot)$ is Lipschitz with respect to the 1-Wasserstein metric over $\Delta([-m,m])$.

\begin{lemma}[The optimized Bayes risk is Lipschitz in $\pi$] \label{lemma:OptimizedBayesRisk is Lips}
Suppose that for any $\pi \in \Delta([-m,m])$ there exists a Bayes rule $d_{\pi}$ as defined in  \eqref{eqn:Bayes}. Then, for any $\pi,\pi' \in \Delta([-m,m])$
\[ | f(\pi)-f(\pi')  | \leq (3m^2 + 4m) \cdot W(\pi,\pi').   \]
\end{lemma}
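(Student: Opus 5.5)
The plan is to compare the two infima through the Bayes rule of one prior, and to control the change in the integrand with Kantorovich--Rubinstein duality \eqref{eqn:KR-duality}, using the Lipschitz continuity of the risk function from Lemma \ref{lemma:RiskFunction is Lips}.

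Fix $\pi,\pi' \in \Delta([-m,m])$ and let $d_{\pi'}$ be a Bayes rule for $\pi'$, which exists by hypothesis. Since $f(\pi)$ is an infimum over $\mathcal{D}$, we have $f(\pi) \le r(d_{\pi'},\pi)$. Also $f(\pi') = r(d_{\pi'},\pi')$. Therefore
\[ f(\pi) - f(\pi') \le \int_{[-m,m]} R(d_{\pi'},\theta)\, d\pi - \int_{[-m,m]} R(d_{\pi'},\theta)\, d\pi'. \]

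Next set $\varphi(\theta) := R(d_{\pi'},\theta)/(3m^2+4m)$. By Lemma \ref{lemma:RiskFunction is Lips}, $\|\varphi\|_{\textrm{Lip}} \le 1$, and the constant does not depend on the rule $d_{\pi'}$. The function $\varphi$ is bounded and continuous, hence Borel measurable, so both integrals are well defined. Plugging $\varphi$ into the supremum in \eqref{eqn:KR-duality} bounds the right-hand side above by $(3m^2+4m)\,W(\pi,\pi')$.

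Exchanging the roles of $\pi$ and $\pi'$, now using a Bayes rule $d_\pi$, gives $f(\pi') - f(\pi) \le (3m^2+4m)\,W(\pi,\pi')$. Combining the two one-sided bounds yields the claim.

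The only delicate point is the existence of Bayes rules, which the statement assumes. Even without it, the argument goes through by taking an $\eta$-optimal rule $d$ for $\pi'$ and letting $\eta \to 0$, because the Lipschitz constant in Lemma \ref{lemma:RiskFunction is Lips} is uniform over $d \in \mathcal{D}$. I therefore expect no real obstacle: the uniformity of that constant is exactly what makes the duality step work.
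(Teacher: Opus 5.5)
Your proof is correct and follows essentially the same route as the paper: bound $f(\pi)-f(\pi')$ by $r(d_{\pi'},\pi)-r(d_{\pi'},\pi')$, rescale the risk by $3m^2+4m$ so that Lemma~\ref{lemma:RiskFunction is Lips} makes it 1-Lipschitz, and apply the Kantorovich--Rubinstein duality. The only differences are minor: the paper handles the absolute value by assuming without loss of generality that $f(\pi)\ge f(\pi')$, whereas you prove both one-sided bounds, and your remark that $\eta$-optimal rules could replace exact Bayes rules is also valid, because the Lipschitz constant does not depend on the rule.
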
 
\begin{proof} 
Let $\pi,\pi'$ be two arbitrary elements of $\Delta([-m,m])$. Without loss of generality, let us assume that $f(\pi) \geq f(\pi')$. Let $L(m):=3m^2 + 4m$ be the Lipschitz constant of Lemma \ref{lemma:RiskFunction is Lips}. Then
\begin{eqnarray*}
0 \leq f(\pi) - f(\pi') &=& \int_{[-m,m]} R(d_{\pi},\theta) d\pi -  \int_{[-m,m]} R(d_{\pi'},\theta) d\pi' \\
&\leq& \int_{[-m,m]} R(d_{\pi'},\theta) d\pi -  \int_{[-m,m]} R(d_{\pi'},\theta) d\pi' \\
&& \textrm{(by definition of $d_{\pi}$)} \\
&=& L(m) \left( \int_{[-m,m]} \frac{R(d_{\pi'},\theta)}{L(m)} d\pi -  \int_{[-m,m]} \frac{R(d_{\pi'},\theta)}{L(m)} d\pi' \right) \\
&\leq& L(m) W(\pi,\pi'),
\end{eqnarray*}
where the last inequality follows from \eqref{eqn:KR-duality} and the fact that, according to Lemma \ref{lemma:RiskFunction is Lips}, 
\[ \left\| \frac{R(d_{\pi'},\theta)}{L(m)} \right\|_{\textrm{Lip}} = \frac{1}{L(m)}\sup_{\theta \neq \theta'} \frac{|R(d_{\pi'},\theta)-R(d_{\pi'},\theta')|}{|\theta-\theta'|} \leq 1. \]
\end{proof}

Lemma \ref{lemma:OptimizedBayesRule is concave} below further shows that the optimized Bayes risk is a concave function of $\pi$ and that the risk function can be viewed as a supergradient. 

\begin{lemma}[The optimized Bayes risk is Concave in $\pi$] \label{lemma:OptimizedBayesRule is concave}
Let $\pi,\pi'$ be two arbitrary elements of $\Delta([-m,m])$. For any $\lambda \in [0,1]$, 
\[f\left( \lambda \pi + (1-\lambda)\pi' \right) \geq \lambda f(\pi) + (1-\lambda)f(\pi').\]
Moreover, if the Bayes rule $d_{\pi'}$ exists:
\[ f(\pi') \geq f(\pi) + \left( \int_{[-m,m]}R(d_{\pi'},\theta)d\pi' - \int_{[-m,m]}R(d_{\pi'},\theta)d\pi \right)  \]
for any $\pi \in \Delta([-m,m])$.
\end{lemma}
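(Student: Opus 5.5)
The plan is to prove both parts directly from the definition $f(\pi)=\inf_{d\in\mathcal{D}} r(d,\pi)$, using only that $r(d,\cdot)$ is linear in the prior. For any fixed $d\in\mathcal{D}$ and $\lambda\in[0,1]$, linearity of the integral with respect to the measure gives
\[ r\bigl(d,\lambda\pi+(1-\lambda)\pi'\bigr)=\lambda\, r(d,\pi)+(1-\lambda)\, r(d,\pi') \ge \lambda f(\pi)+(1-\lambda) f(\pi'). \]
The inequality holds because $r(d,\pi)\ge f(\pi)$ and $r(d,\pi')\ge f(\pi')$ by definition of the infimum. Since the right-hand side does not depend on $d$, taking the infimum over $d$ on the left yields concavity. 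Here $f$ is the pointwise infimum of a family of affine functionals of $\pi$, so it is concave. Measurability and boundedness cause no trouble: $0\le R(d,\theta)\le 4m^2$, so every integral is finite and well defined.

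For the supergradient inequality, I will assume the Bayes rule $d_{\pi'}$ exists. Then $f(\pi')=\int R(d_{\pi'},\theta)\,d\pi'$. Evaluating $f$ at $\pi$, the infimum is bounded above by the value at the particular rule $d_{\pi'}$, so $f(\pi)\le \int R(d_{\pi'},\theta)\,d\pi$. Rearranging gives
\[ f(\pi') = \int R(d_{\pi'},\theta)\,d\pi' \ge f(\pi) + \Bigl(\int R(d_{\pi'},\theta)\,d\pi' - \int R(d_{\pi'},\theta)\,d\pi\Bigr), \]
which is exactly the claimed inequality. On the simplex $\Delta^{I-1}$ this reads $f(\pi)\le f(\pi')+g(\pi')^\top(\pi-\pi')$ with $g(\pi')=(R(d_{\pi'},\theta_i))_{i=1}^I$, which is the supergradient property used in the proof of Theorem \ref{thm:avg_prior_pointwise_LF}.

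There is no substantive obstacle. The one point needing care is that the supergradient claim requires existence of $d_{\pi'}$, so that $f(\pi')$ is attained rather than merely approximated. For priors on the finite grid this is guaranteed by the explicit posterior-mean formula in Algorithm \ref{alg:SMA}, which takes values in $[-m,m]$. For general $\pi'\in\Delta([-m,m])$ the posterior mean $\mathbb{E}_{\pi'}[\theta\mid y]$ likewise lies in $[-m,m]$, is measurable, and minimizes posterior expected squared loss pointwise in $y$, so it attains the infimum. If existence failed, the argument would still go through for an $\varepsilon$-Bayes rule, giving the inequality up to an arbitrarily small $\varepsilon$.
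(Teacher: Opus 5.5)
Your proposal is correct and follows the paper's own argument: concavity comes from linearity of $r(d,\cdot)$ in the prior followed by taking the infimum over $d$, and the supergradient inequality comes from $f(\pi') = \int R(d_{\pi'},\theta)\,d\pi'$ together with $f(\pi) \le \int R(d_{\pi'},\theta)\,d\pi$. Your extra remark, that the posterior mean gives a Bayes rule valued in $[-m,m]$ for any $\pi'$, is a valid bonus that the paper simply assumes.
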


\begin{proof}
Let $\pi,\pi'$ be two arbitrary elements of $\Delta([-m,m])$. For any $\lambda \in [0,1]$, $\lambda \pi + (1-\lambda) \pi'$ is also an element of $\Delta([-m,m])$. By definition
\begin{eqnarray*}
f\left( \lambda \pi + (1-\lambda)\pi' \right) &=& \inf_{d \in \mathcal{D}} \int_{[-m,m]} R(d,\theta) d \left( \lambda \pi + (1-\lambda)\pi' \right) \\
&=& \inf_{d \in \mathcal{D}} \left( \lambda\int_{[-m,m]} R(d,\theta) d\pi + (1-\lambda)\int_{[-m,m]} R(d,\theta) d\pi'  \right) \\
&\geq& \lambda \inf_{d \in \mathcal{D}} \int_{[-m,m]} R(d,\theta) d\pi + (1-\lambda) \inf_{d \in \mathcal{D}} \int_{[-m,m]} R(d,\theta) d\pi' \\
&=& \lambda f(\pi) + (1-\lambda) f(\pi').
\end{eqnarray*}
To show the last part of the lemma, note that
\begin{eqnarray*}
f(\pi') &=& \int_{[-m,m]} R(d_{\pi'},\theta) d \pi' \\
&=& \left( \int_{[-m,m]} R(d_{\pi'},\theta) d \pi' - \int_{[-m,m]} R(d_{\pi'},\theta) d \pi \right)+ \int_{[-m,m]} R(d_{\pi'},\theta) d \pi \\
& \geq& \left( \int_{[-m,m]} R(d_{\pi'},\theta) d \pi' - \int_{[-m,m]} R(d_{\pi'},\theta) d \pi \right)+ \inf_{d \in \mathcal{D}} \int_{[-m,m]} R(d,\theta) d \pi \\
&=& \left( \int_{[-m,m]} R(d_{\pi'},\theta) d \pi' - \int_{[-m,m]} R(d_{\pi'},\theta) d \pi \right)+ f(\pi). 
\end{eqnarray*}
\end{proof} 

Lemma \ref{lemma:OptimizedBayesRisk is Lips} and Lemma \ref{lemma:OptimizedBayesRule is concave} show that the problem of finding a least-favorable distribution is equivalent to the problem of maximizing a concave, Lipschitz function over the space of all Borel probability measures over $[-m,m]$. This is an infinite-dimensional concave maximization problem over $\Delta([-m,m])$. 

\begin{lemma} [Sequential continuity of $d_{\pi}(y)$ at $\pi$ w.r.t. the 1-Wasserstein metric] \label{lemma:sequential continuity} Let $\pi \in \Delta([-m,m])$. Take any sequence $\pi_n$ such that 
\[ W(\pi_n,\pi) \overset{p}{\rightarrow} 0.    \]
Then, 
\[ |d_{\pi_n}(y)-d_{\pi}(y) | \overset{p}{\rightarrow} 0  \]
for almost every $y \in \mathbb{R}$. 
    
\end{lemma}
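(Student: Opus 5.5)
The plan is to write the Bayes rule explicitly as a ratio of two integrals against the prior, and to control each integral using the Kantorovich--Rubinstein duality \eqref{eqn:KR-duality}. Under squared loss with rules valued in $[-m,m]$, the Bayes rule for any $\pi\in\Delta([-m,m])$ is the posterior mean, which exists because $\theta$ is bounded:
\[
d_\pi(y)=\frac{a(y,\pi)}{b(y,\pi)},\qquad a(y,\pi):=\int_{[-m,m]}\theta\,\phi(y;\theta)\,d\pi,\qquad b(y,\pi):=\int_{[-m,m]}\phi(y;\theta)\,d\pi .
\]
This is the same representation used in Algorithm \ref{alg:SMA}. Since $b(y,\pi)>0$ for every $y$, the statement will in fact hold for every $y\in\mathbb{R}$, not just almost every $y$.

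Fix $y$. The first step is to bound the Lipschitz constants in $\theta$ of the two integrands. For the denominator integrand we have $|\partial_\theta\phi(y;\theta)|=|y-\theta|\phi(y;\theta)\le(|y|+m)/\sqrt{2\pi}=C_2(y)$. For the numerator integrand,
\[
|\partial_\theta(\theta\phi(y;\theta))|\le \phi+|\theta||y-\theta|\phi\le \frac{1+m(|y|+m)}{\sqrt{2\pi}}=C_1(y).
\]
By \eqref{eqn:KR-duality}, it follows that $|a(y,\pi_n)-a(y,\pi)|\le C_1(y)W(\pi_n,\pi)$ and $|b(y,\pi_n)-b(y,\pi)|\le C_2(y)W(\pi_n,\pi)$.

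Next, I would use the standard ratio decomposition:
\[
d_{\pi_n}(y)-d_\pi(y)=\frac{a(y,\pi_n)-a(y,\pi)}{b(y,\pi_n)}+d_\pi(y)\,\frac{b(y,\pi)-b(y,\pi_n)}{b(y,\pi_n)} .
\]
Since $|d_\pi(y)|\le m$, this gives
\[
|d_{\pi_n}(y)-d_\pi(y)|\le \frac{C_1(y)+mC_2(y)}{b(y,\pi_n)}\,W(\pi_n,\pi),
\]
which is exactly the bound later quoted in Lemma \ref{lemma: uniform risk bound} and in the proof of Theorem \ref{thm:rule_convergence}.

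The only real obstacle is that the denominator $b(y,\pi_n)$ is random, so it has to be bounded away from zero deterministically. For this I would use $\phi(y;\theta)\ge (2\pi)^{-1/2}e^{-(|y|+m)^2/2}=:B(y)>0$ for all $\theta\in[-m,m]$. Integrating against $\pi_n$ gives $b(y,\pi_n)\ge B(y)$ for every $n$ and every realization. Hence
\[
|d_{\pi_n}(y)-d_\pi(y)|\le K(y)\,W(\pi_n,\pi),\qquad K(y):=\frac{C_1(y)+mC_2(y)}{B(y)}<\infty .
\]
Therefore $P_n(|d_{\pi_n}(y)-d_\pi(y)|>\xi)\le P_n(W(\pi_n,\pi)>\xi/K(y))\to0$ for each fixed $y$ and each $\xi>0$, which is the claimed convergence in probability.
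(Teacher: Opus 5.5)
Your proof is correct and follows essentially the same route as the paper: the posterior-mean ratio $a/b$, the same decomposition of $d_{\pi_n}(y)-d_\pi(y)$, the same Lipschitz constants $C_1(y),C_2(y)$, and Kantorovich--Rubinstein duality. The only difference is how you handle the random denominator. The paper uses $b(y,\pi_n)\ge b(y,\pi)-C_2(y)W(\pi_n,\pi)$, which implicitly requires restricting to the event where $W(\pi_n,\pi)$ is small, whereas you use the deterministic lower bound $B(y)$ (as the paper itself does later in the proof of Theorem \ref{thm:rule_convergence}); this gives the clean bound $K(y)W(\pi_n,\pi)$ for every $y$.
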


\begin{proof}
The Bayes rule $d_{\pi}(\cdot)$ for the Bounded Normal Mean problem is the posterior mean function:
\[d_{\pi}(y) := \frac{a(y,\pi)}{b(y,\pi)}, \]
where 
\[  a(y,\pi) := \int_{[-m,m]} \theta\phi(y;\theta) d\pi, \quad b(y,\pi):= \int_{[-m,m]} \phi(y;\theta) d\pi,\]
and $\phi(y;\theta)$ is the p.d.f. of a $\mathcal{N}(\theta,1)$ evaluated at $y$. Algebra shows that for any $y \in \mathbb{R}$ and any priors $\pi_n,\pi \in \Delta([-m,m])$ 
\begin{eqnarray*}
d_{\pi_n}(y)-d_{\pi}(y) &=&  \frac{a(y,\pi_n)}{b(y,\pi_n)} - \frac{a(y,\pi)}{b(y,\pi)} \\
&=& \frac{a(y,\pi_n) b(y,\pi)-a(y,\pi)b(y,\pi_n)}{b(y,\pi_n)b(y,\pi)} \\
&=& \frac{a(y,\pi_n) b(y,\pi)-a(y,\pi) b(y,\pi)+a(y,\pi) b(y,\pi)- a(y,\pi)b(y,\pi_n)}{b(y,\pi_n)b(y,\pi)} \\
&=&   \frac{\left[a(y,\pi_n)-a(y,\pi) \right] b(y,\pi)+ \left[ b(y,\pi) - b(y,\pi_n) \right]a(y,\pi)}{b(y,\pi_n)b(y,\pi)} \\
&=& \frac{1}{b(y,\pi_n)} [a(y,\pi_n)-a(y,\pi) ]  +  d_{\pi}(y) \frac{1}{b(y,\pi_n)} [ b(y,\pi) - b(y,\pi_n)]. 
\end{eqnarray*}
Note that the functions 
\[ \varphi_1(\theta; y) := \theta \phi(y;\theta) \quad \textrm{ and } \quad \varphi_2(\theta; y) := \phi(y;\theta),  \]
are both Lipschitz continuous over $\theta \in [-m,m]$ with constants 
\[ C_1(y) := \frac{1+m(|y|+m)}{\sqrt{2 \pi}}, \quad C_2(y):=\frac{|y|+m}{\sqrt{2 \pi}},\]
respectively. This means that, using the definition of the 1-Wasserstein metric in \eqref{eqn:KR-duality},
\begin{eqnarray*}
|d_{\pi_n}(y)-d_{\pi}(y) | &\leq&  \frac{1}{b(y,\pi_n)} \left( C_1(y)  +  m C_2(y) \right) W(\pi_n,\pi) \\
&=& \frac{1}{b(y,\pi_n)-b(y,\pi) + b(y,\pi)} \left( C_1(y)  +  m C_2(y) \right) W(\pi_n,\pi) \\
&\leq& \frac{1}{-C_2(y)W(\pi_n,\pi) + b(y,\pi)} \left( C_1(y)  +  m C_2(y) \right) W(\pi_n,\pi).
\end{eqnarray*}
Since $b(y,\pi)>0$ for any $y \in \mathbb{R}$, this means that for any sequence $\pi_n$ such that $W(\pi_n,\pi) \overset{p}{\rightarrow} 0$, we have that
\[ |d_{\pi_n}(y)-d_{\pi}(y) | \overset{p}{\rightarrow} 0  \]
for almost every $y \in \mathbb{R}$. 
\end{proof}

\section{Additional Simulation Results} \label{sec: appendix_d}

\subsection{Theoretical and empirical Lipschitz constants}

Our theoretical results establish that for any decision rule $d$ taking values
in $[-m,m]$, the risk function is Lipschitz on $[-m,m]$ with constant $L(m)=3m^2+4m$, that is,
\[|R(d,\theta)-R(d,\theta')| \le (3m^2+4m)|\theta-\theta'| \quad \forall  \theta,\theta' \in[ -m,m].\]
This bound is uniform over all admissible decision rules and is used to control the difference between the worst-case risk over the full interval and the maximum risk over a discretized grid of $[-m, m]$.

To assess the conservativeness of this bound, we also report an empirical estimate of this quantity evaluated as 
\[\widehat{L}_{\mathrm{grid}}(m) = \max_{i<j} \frac{|\widehat R(d_{\pi^\epsilon},\theta_i)-\widehat R(d_{\pi^\epsilon},\theta_j)|
}{|\theta_i-\theta_j|},\]
where $\widehat{R}(d_{\pi^\epsilon},\theta_i)$ is the Monte Carlo estimate of the risk at the grid point $\theta_i$. Figure \ref{fig: lipschitz_constant} illustrates this comparison.

\begin{figure}[h!] 
\begin{center}	
\includegraphics[width=5in]{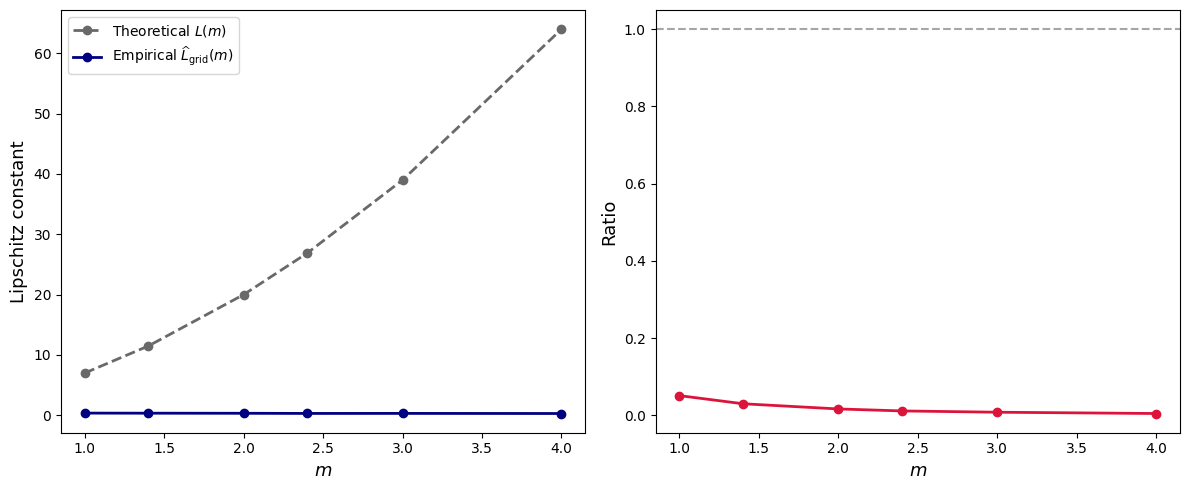}
\caption{Comparison of theoretical and empirical Lipschitz constants.}
\label{fig: lipschitz_constant}
\end{center}	
\end{figure}

\subsection{Number of iterations under theoretical vs. realized supergradients} 

\begin{figure}[h!] 
\begin{center}	
\includegraphics[width=3.75in]{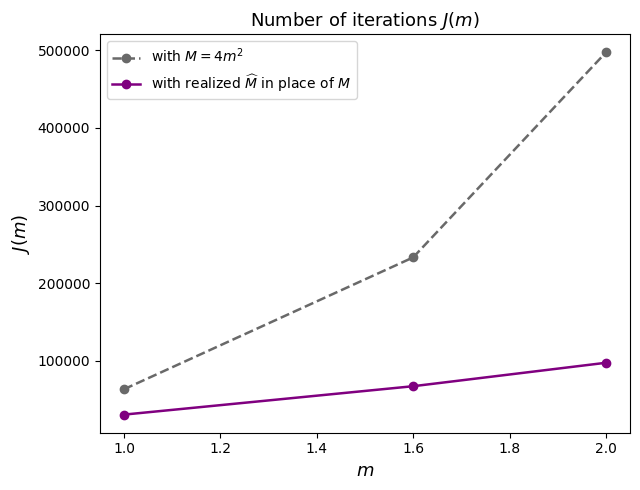}
\caption{Comparison of the number of iterations using the worst-case bound $M$ and its empirical counterpart.}
\label{fig: J_of_M}
\end{center}	
\end{figure}

\subsection{A certified upper bound on the worst-case risk} \label{subsec:grid_vs_interval_risk}

The worst-case risk evaluated on the finite grid is a priori lower than the true maximal value, since the supremum may be attained between grid points. To certify the gap, we use Lipschitz continuity of the risk function.

For any $\theta \in [\theta_i, \theta_{i+1}]$, applying Lemma \ref{lemma:RiskFunction is Lips} at each endpoint and averaging gives
\[R(d_{\pi^\epsilon}, \theta) \le \frac{1}{2} \Big[ R(d_{\pi^\epsilon},\theta_i) + (3m^2+4m) |\theta - \theta_i| \Big] + \frac{1}{2} \Big[ R(d_{\pi^\epsilon},\theta_{i+1}) + (3m^2+4m) |\theta - \theta_{i+1}| \Big]. \]

Since $|\theta - \theta_i| + | \theta - \theta_{i+1}| = \mathbf{\Delta}(m,I)$, taking the supremum over $\theta \in [-m,m]$
\[\sup_{\theta \in [-m,m]} R(d_{\pi^\epsilon},\theta) \le \min \left\{ \max_{i \in \{1, \dots, I-1\}} \frac{R(d_{\pi^\epsilon},\theta_i) + R(d_{\pi^\epsilon},\theta_{i+1})}{2}  + \frac{(3m^2+4m) \mathbf{\Delta}(m,I)}{2}, 4m^2 \right\}, \]
where $\mathbf{\Delta}(m,I) = 2m/(I-1)$ is the grid spacing.

Figure \ref{fig: risk_all_m_4_certified} applies an analogous Lipschitz-certification to the truncated maximum likelihood estimator $d_{\textrm{trunc}} (y) = \max\{\min\{y,m\}, -m\}$ (green line) and the \emph{clipped} linear estimator (purple line) $d_{\textrm{linear, clipped}} = \max\{\min\{c \cdot y,m\}, -m\}$, where $c=m^2/(1+m^2)$. All risks are evaluated with 1,000 Monte Carlo draws over the 200,000-point grid.

\begin{figure}[h!] 
\begin{center}	
\includegraphics[width=3.75in]{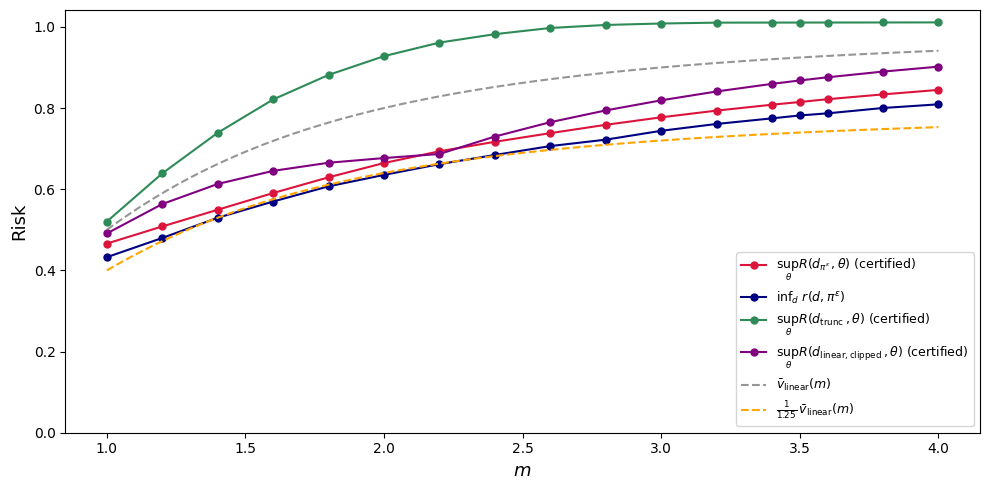}
\caption{Worst-case risk of $d_{\pi^\epsilon}$ (certified bound) vs. alternative estimators and analytical bounds.}
\label{fig: risk_all_m_4_certified}
\end{center}	
\end{figure}

\subsection{Comparison with \citet{casella_strawderman_1981}'s closed-form for $m \le 1$} \label{subsec:casella_strawderman_risk}

For $m \le 1$ the minimax risk is available in closed form (Theorems 3.1 and 4.2 in \citet{casella_strawderman_1981}). Figure~\ref{fig: risk_m_1} shows that the interval $[\inf_{d} r(d,\pi^\epsilon), \sup_{\theta} R(d_{\pi^\epsilon},\theta)]$ produced by our algorithm contains the true minimax value $\bar{v}^*(m)$ at every point, validating the procedure where the exact answer is known.

\begin{figure}[H] 
\begin{center}	
\includegraphics[width=3.75in]{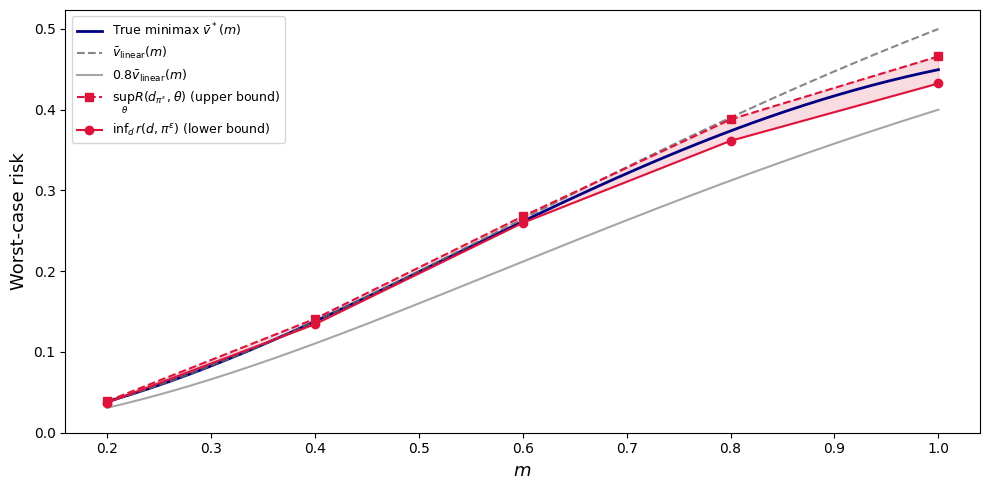}
\caption{$d_{\pi^\epsilon}$ vs. linear vs. true minimax.}
\label{fig: risk_m_1}
\end{center}	
\end{figure}

\subsection{Stability of the risk evaluation under Monte Carlo randomness}

One concern with the estimator $d_{\pi^\epsilon}$ is that---due to randomness introduced by the stochastic mirror ascent procedure---different runs of the algorithm could generate very different estimators. One way to assuage this concern is to report the behavior of the risk function across different seeds. Figure \ref{fig: risk_of_theta_2} displays the risk $R(d_{\pi^\epsilon}, \theta)$ for different seeds evaluated for $\theta \in [-2,2]$. The shaded region in the graph represents the area between the 2.5\% and the 97.5\% quantiles across different seeds. The solid curve shows the across-seed mean. Each profile is estimated by Monte Carlo over 100 independent seeds with 10,000 draws per seed.

\begin{figure}[h!] 
\begin{center}	
\includegraphics[width=3.75in]{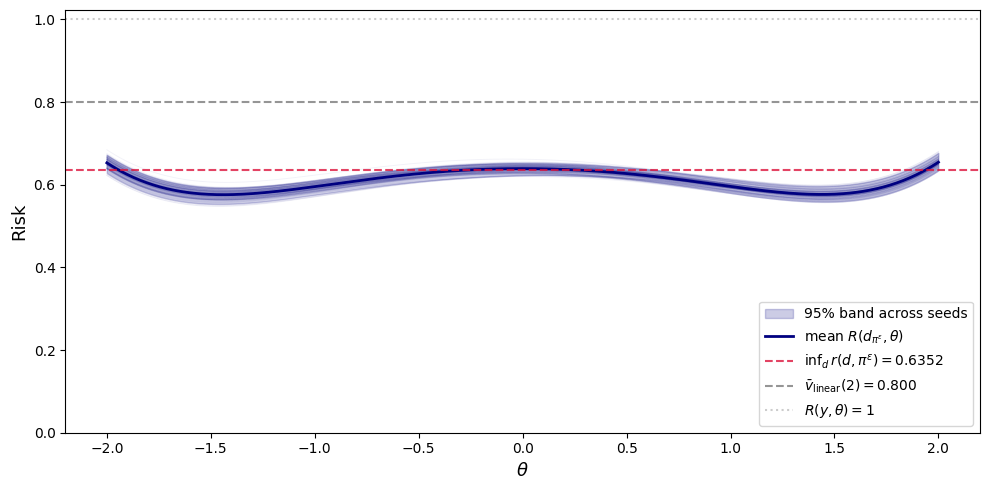}
\caption{Risk of $d_{\pi^\epsilon}$ for $m = 2$.}
\label{fig: risk_of_theta_2}
\end{center}	
\end{figure}

\end{appendix}


\begin{thebibliography}{61}
\newcommand{\enquote}[1]{``#1''}
\expandafter\ifx\csname natexlab\endcsname\relax\def\natexlab#1{#1}\fi

\bibitem[\protect\citeauthoryear{Aradillas~Fern{\'a}ndez, Blanchet, Montiel~Olea, Qiu, Stoye, and Tan}{Aradillas~Fern{\'a}ndez et~al.}{2025{\natexlab{a}}}]{fernandez2025approximate}
\textsc{Aradillas~Fern{\'a}ndez, A., J.~Blanchet, J.~L. Montiel~Olea, C.~Qiu, J.~Stoye, and L.~Tan} (2025{\natexlab{a}}): \enquote{Approximate least-favorable distributions and nearly optimal tests via stochastic mirror descent,} Preprint, arXiv:2511.16925.

\bibitem[\protect\citeauthoryear{Aradillas~Fern{\'a}ndez, Blanchet, Montiel~Olea, Qiu, Stoye, and Tan}{Aradillas~Fern{\'a}ndez et~al.}{2025{\natexlab{b}}}]{montielolea_aradillasfernandez_blanchet_qiu_stoye_tan_2024}
---\hspace{-.1pt}---\hspace{-.1pt}--- (2025{\natexlab{b}}): \enquote{Epsilon-Minimax Solutions of Statistical Decision Problems,} Preprint, arXiv:2509.08107.

\bibitem[\protect\citeauthoryear{Armstrong, Kline, and Sun}{Armstrong et~al.}{2025}]{armstrong_kline_sun_2025}
\textsc{Armstrong, T.~B., P.~Kline, and L.~Sun} (2025): \enquote{Adapting to Misspecification,} \emph{Econometrica}, 93, 1981--2005.

\bibitem[\protect\citeauthoryear{Barsov and Ulyanov}{Barsov and Ulyanov}{1987}]{barsov1987proximity}
\textsc{Barsov, S.~S. and V.~V. Ulyanov} (1987): \enquote{Estimates of the Proximity of Gaussian Measures,} \emph{Soviet Mathematics Doklady}, 34, 462--466.

\bibitem[\protect\citeauthoryear{Ben-Tal, Margalit, and Nemirovski}{Ben-Tal et~al.}{2001}]{ben2001ordered}
\textsc{Ben-Tal, A., T.~Margalit, and A.~Nemirovski} (2001): \enquote{The ordered subsets mirror descent optimization method with applications to tomography,} \emph{SIAM Journal on Optimization}, 12, 79--108.

\bibitem[\protect\citeauthoryear{Bickel}{Bickel}{1981}]{bickel1981minimax}
\textsc{Bickel, P.~J.} (1981): \enquote{Minimax estimation of the mean of a normal distribution when the parameter space is restricted,} \emph{The Annals of Statistics}, 9, 1301--1309.

\bibitem[\protect\citeauthoryear{Blackwell and Girshick}{Blackwell and Girshick}{1954}]{blackwell1954}
\textsc{Blackwell, D. and M.~A. Girshick} (1954): \emph{Theory of games and statistical decisions}, New York: Wiley.

\bibitem[\protect\citeauthoryear{Blanchet, Kang, Montiel~Olea, Nguyen, and Zhang}{Blanchet et~al.}{2023}]{JMLR:v24:21-0377}
\textsc{Blanchet, J., Y.~Kang, J.~L. Montiel~Olea, V.~A. Nguyen, and X.~Zhang} (2023): \enquote{Dropout Training is Distributionally Robust Optimal,} \emph{Journal of Machine Learning Research}, 24, 1--60.

\bibitem[\protect\citeauthoryear{Boucheron, Lugosi, and Massart}{Boucheron et~al.}{2013}]{boucheron2013}
\textsc{Boucheron, S., G.~Lugosi, and P.~Massart} (2013): \emph{Concentration Inequalities: A Nonasymptotic Theory of Independence}, Oxford University Press.

\bibitem[\protect\citeauthoryear{Bubeck}{Bubeck}{2015}]{bubeck_2015}
\textsc{Bubeck, S.} (2015): \enquote{Convex Optimization: Algorithms and Complexity,} \emph{Foundations and Trends in Machine Learning}, 8, 231--357.

\bibitem[\protect\citeauthoryear{Casella and Strawderman}{Casella and Strawderman}{1981}]{casella_strawderman_1981}
\textsc{Casella, G. and W.~E. Strawderman} (1981): \enquote{Estimating a Bounded Normal Mean,} \emph{The Annals of Statistics}, 9, 870--878.

\bibitem[\protect\citeauthoryear{Cattaneo, Klusowski, and Shigida}{Cattaneo et~al.}{2024}]{cattaneo2024onimplicitbias}
\textsc{Cattaneo, M.~D., J.~M. Klusowski, and B.~Shigida} (2024): \enquote{On the Implicit Bias of Adam,} in \emph{Forty-first International Conference on Machine Learning}.

\bibitem[\protect\citeauthoryear{Cattaneo and Shigida}{Cattaneo and Shigida}{2025{\natexlab{a}}}]{cattaneo2025howmemory}
\textsc{Cattaneo, M.~D. and B.~Shigida} (2025{\natexlab{a}}): \enquote{How Memory in Optimization Algorithms Implicitly Modifies the Loss,} in \emph{The Thirty-ninth Annual Conference on Neural Information Processing Systems}.

\bibitem[\protect\citeauthoryear{Cattaneo and Shigida}{Cattaneo and Shigida}{2025{\natexlab{b}}}]{cattaneo2025modifiedlossmomentum}
---\hspace{-.1pt}---\hspace{-.1pt}--- (2025{\natexlab{b}}): \enquote{Modified Loss of Momentum Gradient Descent: Fine-Grained Analysis,} Preprint, arXiv:2509.08483.

\bibitem[\protect\citeauthoryear{Chamberlain}{Chamberlain}{2000}]{chamberlain2000econometric}
\textsc{Chamberlain, G.} (2000): \enquote{Econometric applications of maxmin expected utility,} \emph{Journal of Applied Econometrics}, 15, 625--644.

\bibitem[\protect\citeauthoryear{Chandrasekaran and Jordan}{Chandrasekaran and Jordan}{2013}]{chandrasekaran2013computational}
\textsc{Chandrasekaran, V. and M.~I. Jordan} (2013): \enquote{Computational and statistical tradeoffs via convex relaxation,} \emph{Proceedings of the National Academy of Sciences}, 110, E1181--E1190.

\bibitem[\protect\citeauthoryear{Chen, Pesavento, and Vonn{\'a}k}{Chen et~al.}{2026}]{chen_pesavento_vonnak_2026}
\textsc{Chen, C., E.~Pesavento, and B.~Vonn{\'a}k} (2026): \enquote{Estimator Averaging of Local Projection and {VAR} Impulse Responses,} Preprint, arXiv:2605.05456.

\bibitem[\protect\citeauthoryear{Delatte}{Delatte}{2026}]{delatte26}
\textsc{Delatte, P.} (2026): \enquote{Some existence results for maximin priors in statistical minimax theorems,} Working paper, University of Southern California.

\bibitem[\protect\citeauthoryear{Devroye, Mehrabian, and Reddad}{Devroye et~al.}{2018}]{devroye2018total}
\textsc{Devroye, L., A.~Mehrabian, and T.~Reddad} (2018): \enquote{The total variation distance between high-dimensional Gaussians with the same mean,} Preprint, arXiv:1810.08693.

\bibitem[\protect\citeauthoryear{Donoho, Liu, and MacGibbon}{Donoho et~al.}{1990}]{donoho_liu_macgibbon_1990}
\textsc{Donoho, D.~L., R.~C. Liu, and B.~MacGibbon} (1990): \enquote{Minimax Risk Over Hyperrectangles, and Implications,} \emph{The Annals of Statistics}, 18, 1416--1437.

\bibitem[\protect\citeauthoryear{Duchi, Hazan, and Singer}{Duchi et~al.}{2011}]{duchi2011}
\textsc{Duchi, J., E.~Hazan, and Y.~Singer} (2011): \enquote{Adaptive Subgradient Methods for Online Learning and Stochastic Optimization,} \emph{Journal of Machine Learning Research}, 12, 2121--2159.

\bibitem[\protect\citeauthoryear{Elliott, M{\"u}ller, and Watson}{Elliott et~al.}{2015}]{elliott2015nearly}
\textsc{Elliott, G., U.~K. M{\"u}ller, and M.~W. Watson} (2015): \enquote{Nearly optimal tests when a nuisance parameter is present under the null hypothesis,} \emph{Econometrica}, 83, 771--811.

\bibitem[\protect\citeauthoryear{Feldman and Brown}{Feldman and Brown}{1989}]{feldman1989minimax}
\textsc{Feldman, I. and L.~D. Brown} (1989): \enquote{The minimax risk for estimating a bounded normal mean,} Tech. rep., Cornell University, Statistics Center.

\bibitem[\protect\citeauthoryear{Ferguson}{Ferguson}{1967}]{ferguson_1967}
\textsc{Ferguson, T.~S.} (1967): \emph{Mathematical Statistics: A Decision Theoretic Approach}, New York: Academic Press.

\bibitem[\protect\citeauthoryear{Francis, Owyang, Roush, and DiCecio}{Francis et~al.}{2014}]{francis_owyang_roush_dicecio_2014}
\textsc{Francis, N., M.~T. Owyang, J.~E. Roush, and R.~DiCecio} (2014): \enquote{A Flexible Finite-Horizon Alternative to Long-Run Restrictions with an Application to Technology Shocks,} \emph{Review of Economics and Statistics}, 96, 638--647.

\bibitem[\protect\citeauthoryear{Gertler and Karadi}{Gertler and Karadi}{2015}]{gertler_karadi_2015}
\textsc{Gertler, M. and P.~Karadi} (2015): \enquote{Monetary Policy Surprises, Credit Costs, and Economic Activity,} \emph{American Economic Journal: Macroeconomics}, 7, 44--76.

\bibitem[\protect\citeauthoryear{Ghosh}{Ghosh}{1964}]{ghosh1964uniform}
\textsc{Ghosh, M.~N.} (1964): \enquote{Uniform approximation of minimax point estimates,} \emph{The Annals of Mathematical Statistics}, 35, 1031--1047.

\bibitem[\protect\citeauthoryear{Gourdin, Jaumard, and MacGibbon}{Gourdin et~al.}{1994}]{gourdin1994global}
\textsc{Gourdin, {\'E}., B.~Jaumard, and B.~MacGibbon} (1994): \enquote{Global optimization decomposition methods for bounded parameter minimax risk evaluation,} \emph{SIAM Journal on Scientific Computing}, 15, 16--35.

\bibitem[\protect\citeauthoryear{Guggenberger and Huang}{Guggenberger and Huang}{2025}]{guggenberger2025numerical}
\textsc{Guggenberger, P. and J.~Huang} (2025): \enquote{On the numerical approximation of minimax regret rules via fictitious play,} Preprint, arXiv:2503.10932.

\bibitem[\protect\citeauthoryear{Hausman}{Hausman}{1978}]{hausman1978specification}
\textsc{Hausman, J.~A.} (1978): \enquote{Specification tests in econometrics,} \emph{Econometrica}, 46, 1251--1271.

\bibitem[\protect\citeauthoryear{Ibragimov and Has'minskii}{Ibragimov and Has'minskii}{1985}]{ibragimov1985nonparametric}
\textsc{Ibragimov, I.~A. and R.~Z. Has'minskii} (1985): \enquote{On nonparametric estimation of the value of a linear functional in Gaussian white noise,} \emph{Theory of Probability \& Its Applications}, 29, 18--32.

\bibitem[\protect\citeauthoryear{Johnstone}{Johnstone}{2019}]{johnstone_2019}
\textsc{Johnstone, I.~M.} (2019): \enquote{Gaussian Estimation: Sequence and Wavelet Models,} Draft version, September 16, 2019.

\bibitem[\protect\citeauthoryear{Johnstone and MacGibbon}{Johnstone and MacGibbon}{1992}]{johnstone1992minimax}
\textsc{Johnstone, I.~M. and K.~B. MacGibbon} (1992): \enquote{Minimax estimation of a constrained Poisson vector,} \emph{The Annals of Statistics}, 20, 807--831.

\bibitem[\protect\citeauthoryear{Jord{\`a}}{Jord{\`a}}{2005}]{jorda_2005}
\textsc{Jord{\`a}, {\`O}.} (2005): \enquote{Estimation and Inference of Impulse Responses by Local Projections,} \emph{American Economic Review}, 95, 161--182.

\bibitem[\protect\citeauthoryear{Jord{\`a} and Taylor}{Jord{\`a} and Taylor}{2025}]{jorda2025local}
\textsc{Jord{\`a}, {\`O}. and A.~M. Taylor} (2025): \enquote{Local projections,} \emph{Journal of Economic Literature}, 63, 59--110.

\bibitem[\protect\citeauthoryear{Juditsky, Rigollet, and Tsybakov}{Juditsky et~al.}{2008}]{juditsky2008}
\textsc{Juditsky, A., P.~Rigollet, and A.~B. Tsybakov} (2008): \enquote{Learning by mirror averaging,} \emph{The Annals of Statistics}, 36, 2183--2206.

\bibitem[\protect\citeauthoryear{Kempthorne}{Kempthorne}{1987}]{kempthorne1987numerical}
\textsc{Kempthorne, P.~J.} (1987): \enquote{Numerical specification of discrete least favorable prior distributions,} \emph{SIAM Journal on Scientific and Statistical Computing}, 8, 171--184.

\bibitem[\protect\citeauthoryear{Kent, Li, Blanchet, and Glynn}{Kent et~al.}{2021}]{kent2021modified}
\textsc{Kent, C., J.~Li, J.~Blanchet, and P.~W. Glynn} (2021): \enquote{Modified Frank Wolfe in probability space,} in \emph{Advances in Neural Information Processing Systems (NeurIPS)}, vol.~34, 14448--14462.

\bibitem[\protect\citeauthoryear{Krichene, Balandat, Tomlin, and Bayen}{Krichene et~al.}{2015}]{krichene2015hedge}
\textsc{Krichene, W., M.~Balandat, C.~Tomlin, and A.~Bayen} (2015): \enquote{The hedge algorithm on a continuum,} in \emph{International Conference on Machine Learning}, PMLR, vol.~37, 824--832.

\bibitem[\protect\citeauthoryear{Lei and Jordan}{Lei and Jordan}{2020}]{leijordan2020}
\textsc{Lei, L. and M.~I. Jordan} (2020): \enquote{On the Adaptivity of Stochastic Gradient-Based Optimization,} \emph{SIAM Journal on Optimization}, 30, 1473--1500.

\bibitem[\protect\citeauthoryear{Levit}{Levit}{1981}]{levit1981asymptotic}
\textsc{Levit, B.~Y.} (1981): \enquote{On asymptotic minimax estimates of the second order,} \emph{Theory of Probability \& Its Applications}, 25, 552--568.

\bibitem[\protect\citeauthoryear{Li, Plagborg-M{\o}ller, and Wolf}{Li et~al.}{2024}]{li2024local}
\textsc{Li, D., M.~Plagborg-M{\o}ller, and C.~K. Wolf} (2024): \enquote{Local projections vs. VARs: Lessons from thousands of DGPs,} \emph{Journal of Econometrics}, 244, 105722.

\bibitem[\protect\citeauthoryear{Lin, Bickel, and Ding}{Lin et~al.}{2026}]{lin2026introducing}
\textsc{Lin, Z., P.~J. Bickel, and P.~Ding} (2026): \enquote{Introducing the b-value: combining unbiased and biased estimators from a sensitivity analysis perspective,} Preprint, arXiv:2602.16310.

\bibitem[\protect\citeauthoryear{Manski}{Manski}{2021}]{manski2021econometrics}
\textsc{Manski, C.~F.} (2021): \enquote{Econometrics for decision making: Building foundations sketched by Haavelmo and Wald,} \emph{Econometrica}, 89, 2827--2853.

\bibitem[\protect\citeauthoryear{Montiel~Olea and Plagborg-M{\o}ller}{Montiel~Olea and Plagborg-M{\o}ller}{2021}]{montielolea_plagborgmoller_2021}
\textsc{Montiel~Olea, J.~L. and M.~Plagborg-M{\o}ller} (2021): \enquote{Local Projection Inference is Simpler and More Robust Than You Think,} \emph{Econometrica}, 89, 1789--1823.

\bibitem[\protect\citeauthoryear{Montiel~Olea, Plagborg-M{\o}ller, Qian, and Wolf}{Montiel~Olea et~al.}{2025}]{montielolea_qian_wolf_plagborgmoller_2025}
\textsc{Montiel~Olea, J.~L., M.~Plagborg-M{\o}ller, E.~Qian, and C.~K. Wolf} (2025): \enquote{Local Projections or VARs? A Primer for Macroeconomists,} \emph{NBER Macroeconomics Annual}, 40, 111--152.

\bibitem[\protect\citeauthoryear{Montiel~Olea, Plagborg-M{\o}ller, Qian, and Wolf}{Montiel~Olea et~al.}{2026}]{montielolea_plagborgmoller_qian_wolf_forthcoming}
---\hspace{-.1pt}---\hspace{-.1pt}--- (2026): \enquote{Double Robustness of Local Projections and Some Unpleasant {VAR}ithmetic,} \emph{Econometrica}, forthcoming.

\bibitem[\protect\citeauthoryear{Negrea, Bilodeau, Campolongo, Orabona, and Roy}{Negrea et~al.}{2021}]{negrea2021minimax}
\textsc{Negrea, J., B.~Bilodeau, N.~Campolongo, F.~Orabona, and D.~Roy} (2021): \enquote{Minimax optimal quantile and semi-adversarial regret via root-logarithmic regularizers,} in \emph{Advances in Neural Information Processing Systems (NeurIPS)}, vol.~34, 26237--26249.

\bibitem[\protect\citeauthoryear{Nemirovski, Juditsky, Lan, and Shapiro}{Nemirovski et~al.}{2009}]{nemirovski2009robust}
\textsc{Nemirovski, A., A.~Juditsky, G.~Lan, and A.~Shapiro} (2009): \enquote{Robust stochastic approximation approach to stochastic programming,} \emph{SIAM Journal on Optimization}, 19, 1574--1609.

\bibitem[\protect\citeauthoryear{Nemirovski and Yudin}{Nemirovski and Yudin}{1983}]{nemirovski_yudin_1983}
\textsc{Nemirovski, A. and D.~Yudin} (1983): \emph{Problem Complexity and Method Efficiency in Optimization}, New York: Wiley.

\bibitem[\protect\citeauthoryear{Nemtyrev and Boldea}{Nemtyrev and Boldea}{2026}]{nemtyrev2026targeted}
\textsc{Nemtyrev, A. and O.~Boldea} (2026): \enquote{Targeted Local Projections,} Preprint, arXiv:2603.00248.

\bibitem[\protect\citeauthoryear{Plagborg-M{\o}ller and Wolf}{Plagborg-M{\o}ller and Wolf}{2021}]{plagborgmoller_wolf_2021}
\textsc{Plagborg-M{\o}ller, M. and C.~K. Wolf} (2021): \enquote{Local Projections and VARs Estimate the Same Impulse Responses,} \emph{Econometrica}, 89, 955--980.

\bibitem[\protect\citeauthoryear{Ramey}{Ramey}{2011}]{ramey_2011}
\textsc{Ramey, V.~A.} (2011): \enquote{Identifying Government Spending Shocks: It's all in the Timing,} \emph{Quarterly Journal of Economics}, 126, 1--50.

\bibitem[\protect\citeauthoryear{Ramey}{Ramey}{2016}]{ramey_2016}
---\hspace{-.1pt}---\hspace{-.1pt}--- (2016): \enquote{Macroeconomic Shocks and Their Propagation,} in \emph{Handbook of Macroeconomics}, ed. by J.~B. Taylor and H.~Uhlig, Elsevier, vol.~2, chap.~2, 71--162.

\bibitem[\protect\citeauthoryear{Romer and Romer}{Romer and Romer}{2010}]{romer_romer_2010}
\textsc{Romer, C.~D. and D.~H. Romer} (2010): \enquote{The Macroeconomic Effects of Tax Changes: Estimates Based on a New Measure of Fiscal Shocks,} \emph{American Economic Review}, 100, 763--801.

\bibitem[\protect\citeauthoryear{Sims}{Sims}{1980}]{sims_1980}
\textsc{Sims, C.~A.} (1980): \enquote{Macroeconomics and Reality,} \emph{Econometrica}, 48, 1--48.

\bibitem[\protect\citeauthoryear{Stock and Watson}{Stock and Watson}{2018}]{stock2018identification}
\textsc{Stock, J.~H. and M.~W. Watson} (2018): \enquote{Identification and estimation of dynamic causal effects in macroeconomics using external instruments,} \emph{The Economic Journal}, 128, 917--948.

\bibitem[\protect\citeauthoryear{Villani}{Villani}{2021}]{villani2021topics}
\textsc{Villani, C.} (2021): \emph{Topics in optimal transportation}, Graduate Studies in Mathematics, American Mathematical Society.

\bibitem[\protect\citeauthoryear{Williamson and Shmoys}{Williamson and Shmoys}{2011}]{shmoys2011design}
\textsc{Williamson, D.~P. and D.~B. Shmoys} (2011): \emph{The design of approximation algorithms}, Cambridge University Press.

\bibitem[\protect\citeauthoryear{Wu, Salazar, Green, and Blei}{Wu et~al.}{2026}]{wuetal2026}
\textsc{Wu, B., S.~Salazar, D.~P. Green, and D.~M. Blei} (2026): \enquote{The Illusion of Learning from Observational Data: An Empirical Bayes Perspective,} Preprint, arXiv:2604.08853.

\bibitem[\protect\citeauthoryear{Xu}{Xu}{2026}]{xu2026local}
\textsc{Xu, K.-L.} (2026): \enquote{{Local Projection-Based Inference under General Conditions},} \emph{The Review of Economic Studies}, rdag034.

\end{thebibliography}
\end{document}